\newcommand{\rrVert}{\Vert}
\newcommand{\rrvert}{\vert}
\newcommand{\llVert}{\Vert}
\newcommand{\llvert}{\vert}
\newcommand{\Rset}{\mathbb{R}}
\def\param{\theta}
\def\Xset{\mathbb{X}}
\def\Tset{\Rset^d \times\mathcal{C}_d^+}
\def\TsetR{\Theta}
\def\P{\mathcal{P}}
\def\TsetC{\cK}
\newcommand{\TV}{\operatorname{TV}}
\newcommand{\tw}{w}
\renewcommand{\th}{h}
\newcommand{\mup}{\mu_{\pi_\theta}}
\newcommand{\Sigmap}{\Sigma_{\pi_\theta}}
\newcommand{\Leb}{\operatorname{Leb}}
\newcommand{\Supp}{\operatorname{Supp}}
\newcommand{\Vt}{V_\theta}
\newcommand{\Xfield}{\mathcal{X}}
\newcommand{\Sset}{\mathbb{S}}
\newcommand{\cB}{{\mathcal{B}}}
\newcommand{\cZ}{{\mathcal{Z}}}
\def\PP{\mathbb{P}}
\newcommand{\Nset}{{\mathbb{N}}}
\newtheorem{theo}{Theorem}[section]
\newtheorem{prop}[theo]{Proposition}
\newtheorem{lemm}[theo]{Lemma}
\newcommand{\argmin}{\operatorname{arg\,min}}
\newcommand{\diag}{\operatorname{diag}}
\newcommand{\cA}{{\mathcal A}}
\newcommand{\cC}{{\mathcal C}}
\newcommand{\cF}{{\mathcal F}}
\newcommand{\cK}{{\mathcal K}}
\newcommand{\cL}{{\mathcal L}}
\newcommand{\cN}{{\mathcal N}}
\newcommand{\cP}{{\mathcal P}}
\newcommand{\cR}{{\mathcal R}}
\newcommand{\cS}{{\mathcal S}}
\newcommand{\cV}{{\mathcal V}}
\newcommand{\cU}{{\mathcal U}}
\newcommand{\cW}{{\mathcal W}}
\newcommand{\bSigma}{{\Sigma}}
\newcommand{\eps}{{\varepsilon}}
\newcommand{\transpose}{\intercal}
\newcommand{\Algo}[1]{\textsc{#1}}
\newcommand{\setto}{\leftarrow}
\newcommand{\eqref}[1]{(\ref{#1})}
\begin{document}
\begin{frontmatter}

\title{Adaptive MCMC with online relabeling}
\runtitle{Adaptive MCMC with online relabeling}

\begin{aug}
\author[1]{\inits{R.}\fnms{R\'emi} \snm{Bardenet}\corref{}\thanksref{1}\ead[label=e1]{remi.bardenet@gmail.com}},
\author[2]{\inits{O.}\fnms{Olivier} \snm{Capp\'e}\thanksref{2,e2}\ead[label=e2,mark]{cappe@telecom-paristech.fr}},
\author[2]{\inits{G.}\fnms{Gersende}~\snm{Fort}\thanksref{2,e3}\ead[label=e3,mark]{gfort@telecom-paristech.fr}} \and
\author[1,3]{\inits{B.}\fnms{Bal\'azs}~\snm{K\'egl}\thanksref{1,3}\ead[label=e4]{balazs.kegl@gmail.com}}
\address[1]{Laboratoire de Recherche en Informatique,
Universit\'{e} Paris-Sud XI, rue Noetzlin, 91190 Gif-sur-Yvette,
France.  \printead{e1}}
\address[2]{LTCI, Telecom ParisTech \& CNRS, 37 rue Dareau, 75013
Paris, France.\\ \printead{e2};
\printead*{e3}}
\address[3]{CNRS, Laboratoire de l'Acc\'el\'erateur Lin\'eaire,
Universit\'{e} Paris-Sud XI, 91898 Orsay, France.\\
\printead{e4}}
\end{aug}

\received{\smonth{10} \syear{2012}}
\revised{\smonth{10} \syear{2013}}

%
\begin{abstract}
When targeting a distribution that is {\it artificially} invariant
under some permutations, Markov chain Monte Carlo (MCMC) algorithms
face the \emph{label-switching} problem, rendering marginal
inference particularly cumbersome. Such a situation arises, for
example, in the Bayesian analysis of finite mixture models. Adaptive MCMC
algorithms such as adaptive Metropolis (AM), which self-calibrates its
proposal distribution using an online estimate of the covariance matrix of
the target, are no exception. To address the label-switching issue,
\emph{relabeling} algorithms associate a permutation to each MCMC sample,
trying to obtain reasonable marginals. In the case of adaptive Metropolis
(\textit{Bernoulli} \textbf{7} (2001) 223--242), an \emph{online}
relabeling strategy is required. This paper
is devoted to the AMOR algorithm, a provably consistent variant of AM that
can cope with the label-switching problem. The idea is to nest relabeling
steps within the MCMC algorithm based on the estimation of a \emph{single
covariance matrix} that is used \emph{both} for adapting the covariance of
the proposal distribution in the Metropolis algorithm step \emph{and} for
online relabeling. We compare the behavior of AMOR to similar relabeling
methods. In the case of compactly supported target distributions, we
prove a
strong law of large numbers for AMOR and its ergodicity. These are
the first results on the consistency of an online relabeling algorithm
to our
knowledge. The proof underlines latent relations between relabeling and
vector quantization.
\end{abstract}

%
\begin{keyword}
\kwd{adaptive Markov chain Monte Carlo}
\kwd{label-switching}
\kwd{stochastic
approximation}
\kwd{vector quantization}
\end{keyword}

\end{frontmatter}

\section{Introduction}
\label{s:introduction}
Markov chain Monte Carlo (MCMC) is a generic approach for exploring complex
probability distributions based on sampling \cite{RoCa04}. It has
become the {\it de facto}
standard tool in many applications of Bayesian inference. However, a
very common
situation in which MCMC algorithms face serious difficulties is when
the target
posterior distribution is known to be invariant under some permutations (or
block permutations) of the variables. In that case, the difficulties
are both
computational, as most often the MCMC algorithm fails to validly visit
all the
modes of the posterior, and inferential, in particular rendering marginal
posterior inference about the individual variables particularly cumbersome
\cite{CelHuRob00}. In the literature, this latter difficulty is usually referred
to as the \emph{label switching problem} \cite{Ste00}. The most well-known
example of this situation is when performing Bayesian inference in a mixture
model. In this case, the mixture likelihood is invariant to permuting
the mixture
components and, most often, the prior itself does not favor any specific
ordering of the mixture components
\cite{Cel98,Ste00,Jas05,JaHoSt05,PaIl10,SpJaWi10,MaMeRo04}. Another important
example arises in signal processing with additive decomposition models.
In this
case, the observed signal is represented as the superposition of exchangeable
signals, and the main goal is to recover the individual signals or their
parameters. In addition, often the number of signals also has to be
determined \cite{RooBecFl12,Roo12,BaKe12}. It was observed empirically
that when
the dimension of the model is not known, the reversible jump sampler
\cite{RiGr97} makes it easier to visit the multiple modes corresponding
to the
permutations but, of course, marginal inference becomes harder due to the
additional difficulty of associating components between models of varying
dimension.

In this contribution, we address the label switching problem in the
generic case where no useful external information on the target is
known. This
corresponds, for instance, to a posterior distribution when neither
the likelihood is assumed to have a specific form, nor the prior
is chosen to have conjugacy properties, which forbids the use of Gibbs sampling
or other specialized sampling strategies. We assume, however, that the
target is known to be invariant under some permutations of
the parameters. This framework is typical, for instance, in
experimental physics
applications where the likelihood computation is commonly deferred to a
{\it
black-box} numerical code. In those cases, one cannot assume anything about
the structure of the posterior or its conditional distributions, except that
they should be invariant to some permutations of the parameters. We also
restrict ourselves to the case where the dimension of the model is
finite and known so the parameters of the model are $\mathbb
{R}^d$-valued for some fixed and
finite $d$.

Following \cite{AtFoMoPr11}, an {\it adaptive MCMC} algorithm is an algorithm
which, given a family of MCMC transition kernels $(P_\param)_{\param\in
\TsetR}$ on a space $\mathbb{X}$, produces a ($\mathbb{X} \times
\TsetR$)-valued process $((X_n, \param_n))_{n \geq0}$ such that the
conditional distribution of the sample $X_{n+1}$ given the past is
$P_{\param_n}(X_n, \cdot)$. In practice, adaptive MCMC are MCMC
algorithms that
can self-calibrate their internal parameters along the iterations in
order to
reach decent performance without (or with almost no) knowledge about
the target
distribution, eliminating the grueling step of tuning the proposals. Adaptive
MCMC has been an active field of research in the last ten years,
following the
pioneering contribution of \cite{HaSaTa01} -- see \cite{AnTh08} as well
as the
other papers in the same special issue of \emph{Statistics and Computing},
along with \cite{AtFoMoPr11,AnRo01,RoRo09}. Adaptive Metropolis
(hereafter AM;
\cite{HaSaTa01}) and its variants aim at identifying the unknown covariance
structure of the target distribution along the run of a random walk
Metropolis--Hastings algorithm with a multivariate Gaussian proposal. The
rationale behind this approach is based on scaling results which
suggest that,
when $d$ tends to $+\infty$, the chain correlation is minimized when the
covariance matrix used in the proposal distribution matches, up to a constant
that depends on the dimension, the covariance matrix of the target, for
a large
class of unimodal target distributions with independent marginals
\cite{RoGeGi97,RoRo01}. AM thus progressively adapts, using a stochastic
approximation scheme, the covariance of the proposal distribution to the
estimated covariance of the target.

It has been empirically observed in \cite{BaCaFoKe12}, and we provide further
evidence of this fact below in Section~\ref{s:example}, that the
efficiency of
AM can be greatly impaired when label switching occurs. The reason for
such a
difficulty is obvious: if label switching occurs, the estimated covariance
matrix no longer corresponds to the local shape of the modes of the posterior
and so the exploration can be far from optimal. In Section~\ref{s:example}, we
also provide some empirical evidence that off-the-shelf solutions to the
label-switching problem, such as imposing identifiability constraints or
post-processing the simulated sample, are not fully satisfactory. A key
difficulty here is that most of the approaches proposed in the
literature are
based on post-processing of the simulated trajectories \emph{after} the MCMC
algorithm has been fully run
\cite{Ste00,Jas05,JaHoSt05,PaIl10,SpJaWi10,MaMeRo04,RooBecFl12}. Unfortunately,
in the case of adaptive MCMC, post-processing cannot solve the improper
exploration issue described above. On the other hand, online relabeling
algorithms \cite{RiGr97,CelHuRob00,CrWe11} often require manual
tuning based on, for
example, prior knowledge on the location of the redundant modes of the
target. Without such manual tuning they often yield poor samplers, as
we will
show it in Section~\ref{s:example}.

Our main purpose in this paper is to provide a provably consistent
variant of AM
that can cope with the label-switching problem. In \cite{BaCaFoKe12}, we
proposed an adaptive Metropolis algorithm with online relabeling,
called AMOR,
based on the original idea of \cite{Cel98}. The idea is to nest
relabeling steps
within the MCMC algorithm based on the estimation of a \emph{single covariance
matrix} that is used \emph{both} for adapting the covariance of the proposal
distribution used in the Metropolis algorithm step \emph{and} for online
relabeling. Contrary to \cite{Cel98}, the AMOR algorithm also corrects
for the
relabelings using a modified acceptance ratio. Similarly to
\cite{Cel98}, though, AMOR requires to loop over all possible
relabelings of
proposed points, which limits the method in practice to applications
with a
relatively small number of permutations. Modifications and heuristics
that address this issue are out of the scope of this paper.

In Section~\ref{s:example}, we provide empirical evidence that the
coupling established in AMOR between the criterion used for relabeling
and the estimation of the covariance of the local modes of the
posterior is beneficial to avoid the distortion of the
marginal distributions. Furthermore, the example considered in
Section~\ref{s:example} also demonstrates that the AMOR algorithm
samples from
nontrivial identifiable restrictions of the posterior distribution,
that is,
truncations of the posterior on regions where the posterior marginals are
distinct but from which the complete posterior can be recovered by
permutation. The study of the convergence of AMOR in
Section~\ref{s:results} reveals an interesting connection with the
problem of
optimal probabilistic quantization \cite{GrLu00}, which was implicit in earlier
works on label switching. It was observed previously by \cite{Pag97}
that some
adjustments to the usual theory of stochastic approximation are
necessary to
analyze online optimal quantification due to the presence of points
where the
mean field of the algorithm is not differentiable. To circumvent
this difficulty, we introduce the stable AMOR algorithm, a novel
variant of the AMOR algorithm that avoids these problematic points of
the parameter space. Finally, we establish consistency results for the
stable AMOR algorithm, showing that it indeed asymptotically provides samples
distributed under a suitably defined restriction of the posterior
distribution in which the parameters are marginally identifiable.

The paper is organized as follows. In Section~\ref{s:AMOR}, we
describe the stable AMOR algorithm and compare it with alternative
approaches on an
illustrative example. In Section~\ref{s:results}, we address the
convergence of the algorithm. The detailed proofs are provided in the \hyperref[a:proofs]{Appendix}.

\section{The stable AMOR algorithm}
\label{s:AMOR}
In this section, we introduce the stable AMOR algorithm and illustrate its
performance on an artificial example.

\subsection{The algorithm}

Let $\pi$ be a density with respect to (w.r.t.) the Lebesgue measure on
$\Rset^d$ which is invariant to the action of a finite group $\cP$ of
permutation matrices, that
is,
\[
\forall x\in\mathbb{R}^d, \forall P\in\cP,\quad\quad \pi(x)=\pi(Px).
\]
Denote by $\cC_d^+$ the set of $d \times d$ real positive definite matrices.
For $\theta= (\mu, \Sigma)$ with $\mu\in\Rset^d$ and $\Sigma\in\cC
_d^+$, define
$L_\param\dvtx \Rset^d\rightarrow\Rset_+$ by
%
\begin{equation}
L_\theta(x) = (x-\mu)^T\Sigma^{-1}(x-\mu),
\label{e:criterion}
\end{equation}
and let $\cN(\cdot|\mu,\Sigma)$ denote the Gaussian density with
mean $\mu$
and covariance matrix $\Sigma$.

Let $\TsetR\subseteq\Rset^d \times\cC_d^+$ and $(\TsetC_q)_{q \in
\Nset}$ be an increasing sequence of compact subsets of $\TsetR$ such that
$\bigcup_{q \in\Nset} \TsetC_q = \TsetR$.

Algorithm \ref{figPseudocodeStableAMOR} describes the pseudocode of
stable AMOR
\cite{BaCaFoKe12}. Choose $\param_0 \in\TsetC_0$.

\begin{algorithm}[top]
\caption{}\label{figPseudocodeStableAMOR}
\begin{algorithmic}
\STATE$\!\!\!\!\!\!\Algo{stableAMOR} (\pi(\cdot),
X_{0}, T, \theta_0=(\mu_{0},\bSigma_{0}), c, (\gamma_t)_{t\geq0},
\alpha, (\TsetC_{\psi})_{\psi\geq0} )$
\STATE
\STATE\ \ 1\quad\quad$\cS\setto\emptyset$
\STATE\ \ 2\quad\quad$\psi\setto0$ \quad\quad$\triangleright$ \textit
{Projection counter}
\STATE\ \ 3\quad\quad\textbf{for} $t \setto1$ \textbf{to} $T$
\STATE\ \ 4\quad\quad\quad\quad$\bSigma\setto c\bSigma_{t-1}$
$\triangleright$ \textit{scaled adaptive covariance}
\STATE\ \ 5\quad\quad\quad\quad$\tilde X \sim
\cN(\cdot| X_{t-1},\bSigma )$ \quad\quad
$\triangleright$ \textit{proposal}
\STATE\ \ 6\quad\quad\quad\quad$\displaystyle\tilde P \sim
\argmin_{P\in\cP}
L_{\theta_{t-1}} (P \tilde X )$ \quad\quad
$\triangleright$ \textit{pick an optimal permutation}
\STATE\ \ 7\quad\quad\quad\quad$\displaystyle\tilde X \setto
\tilde P \tilde X$ \quad\quad
$\triangleright$ \textit{permute}
\STATE\ \ 8\quad\quad\quad\quad\textbf{if} $\displaystyle
\frac{ \pi(\tilde X) \sum_{P }
\cN(PX_{t-1} |\tilde X,\bSigma
)} { \pi(X_{t-1}) \sum_P
\cN(P\tilde X | X_{t-1},\bSigma
)} > \cU[0,1] $ \textbf{then}
\STATE\ \ 9\quad\quad\quad\quad\quad\quad$X_{t} \setto\tilde X$ \quad
\quad$\triangleright$ \textit{accept}
\STATE10\quad\quad\quad\quad\textbf{else}
\STATE11\quad\quad\quad\quad\quad\quad$X_{t} \setto X_{t-1}$ \quad
\quad$\triangleright$ \textit{reject}
\STATE12\quad\quad\quad\quad$\cS\setto\cS\cup\{X_{t}\}$ \quad\quad
$\triangleright$ \textit{update
posterior sample}
\STATE13\quad\quad\quad\quad$ \displaystyle{\mu}_{t} \setto\mu_{t-1} +
\gamma_t (X_{t} - \mu_{t-1} ) + \alpha\gamma_t \operatorname
{Pen}_{t-1,1} $
\STATE14\quad\quad\quad\quad$\displaystyle
{\bSigma}_{t} \setto\bSigma_{t-1} + \gamma_t ((X_{t} - \mu
_{t-1})(X_{t} -
\mu_{t-1})^\transpose- \bSigma_{t-1} ) + \alpha\gamma_t \operatorname
{Pen}_{t-1,2}$
\STATE15\quad\quad\quad\quad\textbf{if} $({\mu_t},{\Sigma}_t)\notin
\TsetC_{\psi}$ \textbf{then}
\STATE16\quad\quad\quad\quad\quad\quad$(\mu_t, \Sigma_t) \setto(\mu
_0,\Sigma_0)$ \quad\quad$\triangleright$ \textit{Project back to
$\TsetC_{0}$}
\STATE17\quad\quad\quad\quad\quad\quad$\psi\setto\psi+1$ \quad\quad
$\triangleright$ \textit{Increment
projection counter}
\STATE18\quad\quad\quad\quad$\theta_t\setto(\mu_t,\Sigma_t)$.
\STATE19\quad\quad\textbf{return} $\cS$
\end{algorithmic}
\end{algorithm}

To explain the proposal mechanism of stable AMOR, let $\mu_{t-1}$ and
$\Sigma_{t-1}$
denote the sample mean and the sample covariance matrix, respectively,
at the
end of iteration $t-1$, and let
$\theta_{t-1}=(\mu_{t-1},\Sigma_{t-1})$. Let us also $\cS$ denote the
MCMC sample at the end of iteration $t-1$. At iteration $t$, a point
$\tilde{X}$ is first drawn from a Gaussian\vadjust{\goodbreak} centered at the previous
state $X_{t-1}$ and with covariance
$c\Sigma_{t-1}$, where $c$ implements the optimal scaling results in
\cite{RoGeGi97,RoRo01} discussed in Section~\ref{s:introduction}
(steps~4 and 5). Then in
steps~6 and 7, $\tilde{X}$ is replaced by
$\tilde{P}\tilde{X}$, where $\tilde{P}$ is a uniform draw over the
permutations in $\argmin_{P}L_{\theta_{t-1}}(P\tilde{X})$ that
minimize the relabeling criterion \eqref{e:criterion}.\hskip.2pt\footnote{Step 6
usually boils down to
selecting the permutation $\tilde{P}$ that minimizes $L_{\theta
_{t-1}}$. In
case of ties, however, $\tilde{P}$ should be drawn uniformly over the
set on
which the minimum is achieved.} This relabeling step makes the augmented
sample $S\cup\{\tilde{P}\tilde{X}\}$ look as Gaussian as possible
among all augmented sets $S\cup\{P\tilde{X}\}$, $P\in\cP$. Formally, it
can be seen as a projection onto the Voronoi cell $V_{\theta_{t-1}}$, where
%
\begin{equation}
\Vt= \bigl\{x\in\Xset/ L_\theta(x)\leq L_\theta(Px), \forall P
\in\cP \bigr\}.\label{def:Vtheta}
\end{equation}
Then, in steps 8 to 11, the candidate
$\tilde{P}\tilde{X}$ is accepted or rejected according to the usual
Metropolis--Hastings rule. The sample mean and covariance are adapted according
to a Stochastic Approximation (SA) scheme in steps 13 and
14; $\alpha\in[0, \infty)$ and $\operatorname{Pen}_{t,i}$ is a
penalty term used to drive the parameters $\param_t = (\mu_t,
\Sigma_t)$ toward the
set of interest $\TsetR$. In Section~\ref{s:results}, we will give
examples of
parameter set $\TsetR$ and penalty terms $\operatorname{Pen}_{t,i}$.
$(\gamma_t)_{t
\geq1}$ is a sequence of nonnegative steps, usually set according to a
polynomial decay $\gamma_t\sim\gamma_\star t^{-\beta}$ for some $\beta
\in
(1/2,1]$. Finally, steps 15 to 17
are a truncation mechanism with random varying bounds to make the SA algorithm
stable. In SA procedures, such a step is a way to make the paths
$(\theta_t)_{t \geq0}$ bounded with probability one, which is a required
property to prove the convergence of these procedures (see,
e.g., \cite{chen:2002}). We will provide in Section~\ref{s:results} sufficient
conditions implying that the number of random truncations is finite along
almost all paths $(\theta_t)_{t \geq0}$, thus implying that after a finite
number of iterations, everything happens as if steps 15
to 17 were omitted. In practice, it is often reported in
the literature that SA is stable even when these stabilization steps are
omitted.

Stable AMOR is a doubly adaptive MCMC algorithm since it is adaptive
both in
its \emph{proposal} and \emph{relabeling} mechanisms. This means that, besides
the proposal distribution, its target also changes with the number of
iterations. In Section~\ref{s:results}, we will prove that, at each iteration
$t$, AMOR implements a random walk Metropolis--Hastings kernel with stationary
distribution $\pi_\theta \propto\pi \mathbh{1}_{\Vt}$.


\begin{figure}

\includegraphics{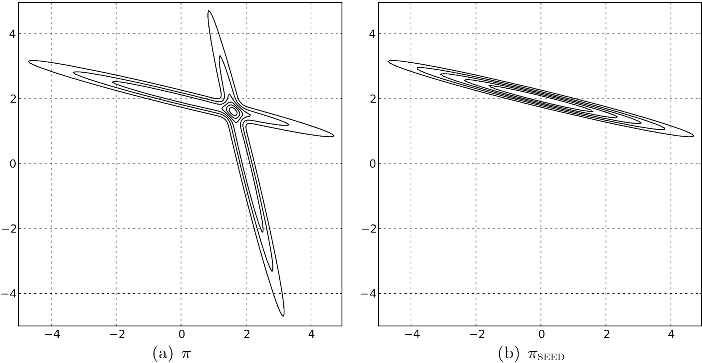}

\caption{Panel (a) shows the target distribution $\pi$ used
in Section \protect\ref{s:example},
obtained by symmetrizing the Gaussian $\pi_{\mbox{\tiny{SEED}}}$ shown in
panel (b). $\pi_{\mbox{\tiny{SEED}}}$ has mean
$(0,2)$ and covariance matrix with diagonal $(16,1)$ and
nondiagonal terms equal to $-0.975$.}
\label{f:pi}
\end{figure}

\subsection{An illustrative example}
\label{s:example}
In this section, we consider an artificial target aimed at illustrating
the gap
in performance between the stable AMOR algorithm and other common
approaches to
the label switching problem, which are compatible with adaptive MCMC. Consider
the two-dimensional p.d.f. $\pi$ depicted in Figure~\ref{f:pi}(a), which
satisfies $\pi(x)=\pi(Px)$ for $P\in\cP$, where
\[
\cP= \left\{ \pmatrix{ 1 & 0
\cr
0 & 1 }, \pmatrix{ 0 & 1
\cr
1 & 0 } \right\}.
\]
%
The density $\pi$ is a mixture of two densities with equal weights
obtained by
superposing the Gaussian p.d.f. $\pi_{\mbox{\tiny{SEED}}}$ represented in
Figure~\ref{f:pi}(b) with a symmetrized version of itself. This
artificial target does not correspond to the posterior distribution in
an actual
inference problem. In particular, although $\pi$ itself is a mixture, it
is not the posterior distribution of the parameters of any specific mixture
model. Nevertheless,\vadjust{\goodbreak} it is relevant because it is permutation invariant and
the desired solution of the label switching problem is well defined: we know
that, under suitable relabeling, we can obtain univariate near-Gaussian
marginals for both coordinates by recovering the marginals of the
two-dimensional Gaussian $\pi_{\mbox{\tiny{SEED}}}$ in
Figure~\ref{f:pi}(b). In spite of its simplicity, this example is
challenging because the two marginals of $\pi_{\mbox{\tiny{SEED}}}$
have similar means
(0 and 2) and one has large variance, which makes them hard to
separate. Given the modest dimension of the problem, we fix the number
of MCMC iterations to $20\,000$, of which $4000$ are discarded as
burn-in. For each algorithm, we assess the quality of the
relabeling strategy by looking at the corresponding restriction $\pi'$
of the target $\pi$, and we assess the efficiency of the sampling by
plotting the autocorrelation function of each sample and comparing the
sample histograms with the marginals of $\pi'$.



The results obtained when applying AM, without any relabeling, are
shown in
Figure~\ref{f:amResults}. The marginal posteriors are sampled quite well
(Figures~\ref{f:amResults}(c) and \ref{f:amResults}(d)) and the
covariance of the joint sample (indicated by a thick ellipse
Figure~\ref{f:amResults}(a)) is almost symmetric. This is not
surprising: the
joint distribution, although severely non-Gaussian, is unimodal, and
the number of iterations is large enough for AM to explore both
the original seed $\pi_{\mbox{\tiny{SEED}}}$ and its
symmetric version by frequent label switching. On the other hand, the covariance
of the joint distribution $\pi$ (Figure~\ref{f:pi}(a)) is broader than the
covariance of the seed $\pi_{\mbox{\tiny{SEED}}}$
(Figure~\ref{f:pi}(b)). This results in poor adaptive proposals
and slow mixing as indicated by the slight differences between the
marginals and
the sample marginals, and by the autocorrelation function of the first
component of the sample in Figure~\ref{f:amResults}(b). The reference
(dashed line) is the autocorrelation function of an MCMC chain with
optimal covariance (proportional to the covariance of the target)
targeting the single Gaussian
$\pi_{\mbox{\tiny{SEED}}}$ (Figure~\ref{f:pi}(b)).

%
\begin{figure*}

\includegraphics{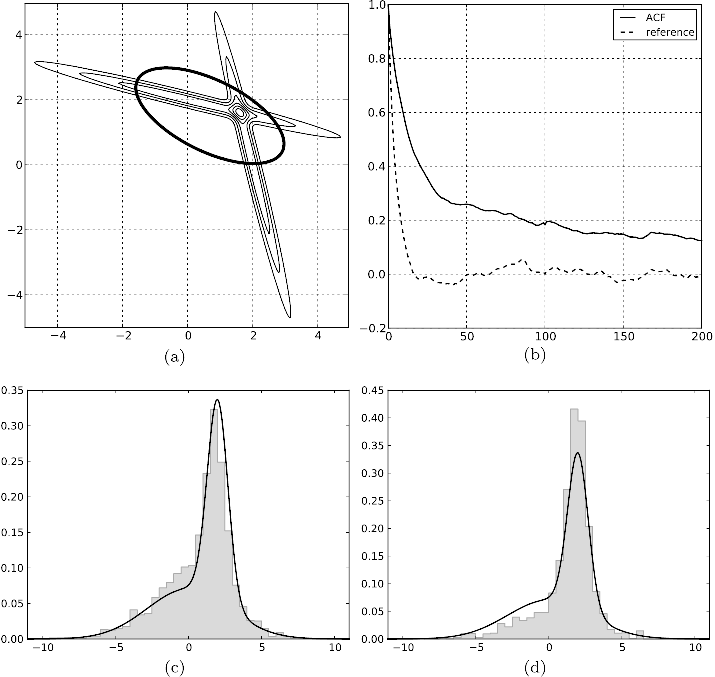}

\caption{Results of vanilla AM on the two-dimensional target $\pi$ of
Figure \protect\ref{f:pi}. The rest of the caption is the same for
Figures \protect\ref{f:amicResults} to \protect\ref{f:amorResults}. On
panel  (a), level lines of $\pi$ are depicted
in thin black
lines; a thick ellipse centered at the empirical mean $\mu_T$ of the sample
$\cS$ indicates the set $\{x \dvtx  (x-\mu_T)^T\Sigma_T^{-1}(x-\mu_T)=1\}$, where
$\Sigma_T$ is the sample covariance. When appropriate, the region of
the space
selected by (the last iteration of) the algorithm corresponds to the unshaded
background while the region not selected is shaded. On
panel~(b), the autocorrelation function (ACF)
of the first
component of $\cS$ is plotted as a solid line. The dashed line
indicates the
ACF obtained when sampling from the seed Gaussian $\pi_{\mbox{\tiny
{SEED}}}$ of
Figure \protect\ref{f:pi}(b) using a random walk Metropolis algorithm
with an optimally tuned covariance matrix. Panels  (c) and
 (d) display the histograms of the two marginal
samples. The solid curves are the marginals of $\pi$ in this figure. In
Figures \protect\ref{f:amicResults} to \protect\ref{f:amorResults},
they are the marginals of
$\pi$ restricted to the unshaded region selected by the algorithms.}
\label{f:amResults}
\end{figure*}

We now consider a modified version of AM with online relabeling
obtained by
simply ordering the variables, meaning that after each proposal $x=(x_1,x_2)$,
the components of the proposed point are permuted so that $x_1\leq
x_2$. This
strategy is known as \emph{imposing an identifiability constraint}. It
is known
to perform badly when the constraint does not respect the topology of
the target
\cite{MaMeRo04}. The results of this approach on our illustrative
example are
shown in Figure~\ref{f:amicResults}. The unshaded triangle in
Figure~\ref{f:amicResults} shows that this time the sample is
restricted to a
sub-region of $\mathbb{R}^2$ where the components are identifiable.
Unfortunately, the marginals of $\pi$ restricted to the unshaded
triangle in
Figures~\ref{f:amicResults}(c) and \ref{f:amicResults}(d) are even more
highly skewed than the marginals of the full joint distribution
$\pi$. In addition, sampling from the restricted distribution
$\pi^\prime$ is not easier than before indicated by the
autocorrelation function in Figure~\ref{f:amicResults}(b).

%
\begin{figure*}

\includegraphics{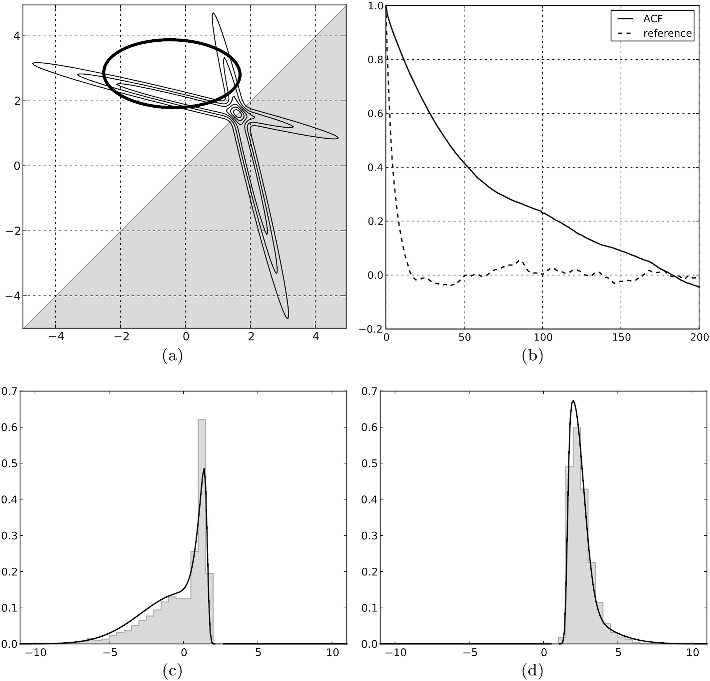}

\caption{Results of AM with online ordering constraint. For details
about the
plots, see the caption of Figure \protect\ref{f:amResults}.}
\label{f:amicResults}
\end{figure*}

\begin{figure}

\includegraphics{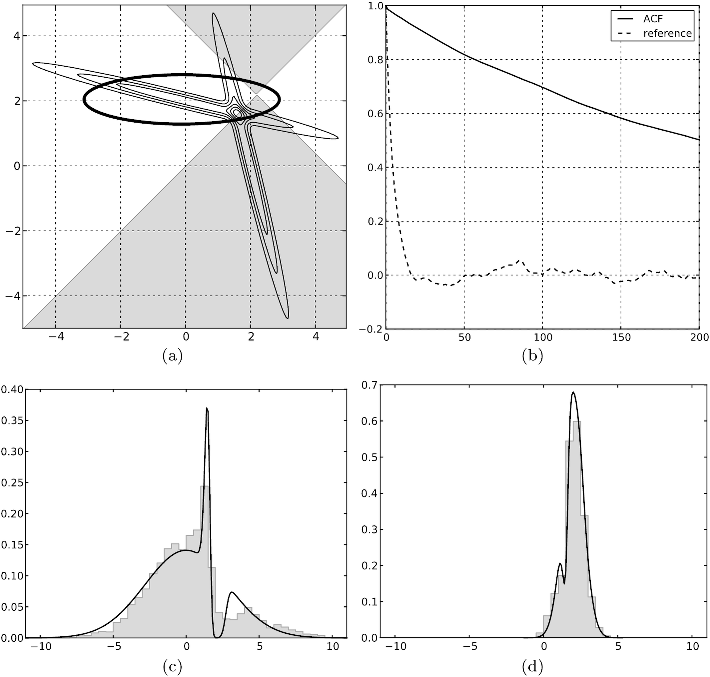}

\caption{Results of Celeux's algorithm. For details about the plots,
see the
caption of Figure~\protect\ref{f:amResults}.}
\label{f:celeuxResults}
\end{figure}

Next, we consider the approach introduced by Celeux in \cite{Cel98}. Celeux's
algorithm builds on a nonadaptive random-walk Metropolis, where
online relabeling is performed in the following way: when a point
$x=(x^{(1)}, x^{(2)})$ is proposed at time $t$, it is relabeled by
%
\begin{eqnarray}\label{e:celeux}
x &\leftarrow&\arg\min \left\{ \pmatrix{ x^{(1)}-\mu_{t}^{(1)}
\cr
x^{(2)}-\mu_{t}^{(2)} } ^T
D_t^{-1} \pmatrix{ x^{(1)}-\mu_{t}^{(1)}
\cr
x^{(2)}-\mu_{t}^{(2)} } ,\nonumber\right.
\\[-8pt]\\[-8pt]
&&\hphantom{\arg\min \left\{\right.}\left. \pmatrix{
x^{(2)}-\mu_{t}^{(1)}
\cr
x^{(1)}-
\mu_{t}^{(2)} }^T D_t^{-1}
\pmatrix{ x^{(2)}-\mu_{t}^{(1)}
\cr
x^{(1)}-
\mu_{t}^{(2)} } \right\}, \nonumber
\end{eqnarray}
where $\mu_t=(\mu_{t}^{(1)}, \mu_{t}^{(2)})$ is the
empirical mean of the current sample $x_{1:t}= x_1,\dots,x_t$ and $D_t$ is
the diagonal matrix containing the empirical variances of the
coordinates of
$x_{1:t}$ on its diagonal. Formally, this relabeling rule is equivalent
to steps
6 and 7 of Algorithm \ref{figPseudocodeStableAMOR},
but with all nondiagonal elements of $\Sigma$ equal to zero. The
results of
Celeux's algorithm are shown in Figure~\ref{f:celeuxResults}. It is
hard to
determine precisely the formal target of the algorithm. In particular,
given the non-isotropic shape of the target, we used a non-isotropic
Gaussian proposal with diagonal covariance matrix, and while the
preservation of the detailed balance
condition then requires incorporating a term into the acceptance ratio
to account for the relabeling, it is absent in this approach. It is still
possible that the algorithm is \emph{approximately} sampling
from the restriction $\pi^\prime$ of $\pi$ to this unshaded area in
Figure~\ref{f:celeuxResults} (which represents the relabeling rule implemented
at the end of the run) in a certain sense. The histograms in
Figures~\ref{f:celeuxResults}(c) and \ref{f:celeuxResults}(d)
\emph{are} in agreement with the solid line marginals. Certainly,
there are no formal guarantees that this should happen. On the other
hand, in Section~\ref{s:results} we can prove the corresponding claim
for the stable AMOR algorithm.

%

This relabeling strategy seems to recover $\pi_{\mbox{\tiny{SEED}}}$
better than the
mere ordering of coordinates as suggested by the marginal plots in
Figures~\ref{f:celeuxResults}(c) and \ref{f:celeuxResults}(d) which are
less skewed and now roughly centered at the correct values (0 and 2,
respectively). However, using a diagonal covariance $D_t$ also
generates some
distortion which results in a severely non-Gaussian, bimodal marginal
in Figure~\ref{f:celeuxResults}(c). Because of these imperfections and due to the
uncorrelated proposal, the autocorrelation in Figure~\ref{f:celeuxResults}(b)
indicates, again, a much less efficient sampling than in the case of an optimal
Metropolis chain targeting $\pi_{\mbox{\tiny{SEED}}}$.

%

The significance of Celeux's algorithm is that its adaptive relabeling rule
\eqref{e:celeux} makes it possible to resolve the permutation
invariance problem
in a nontrivial way which appears to be more adapted to the true
geometry of
the target. It is still not perfect, and, as suggested by \cite{Ste00}, one
should replace the diagonal covariance matrix in \eqref{e:celeux} by
the full
covariance matrix of the sample. However, \cite{Ste00} explored this
idea only
as a post-processing approach. A severe difficulty in this context is the
computational cost: if $T$ denotes the number of drawn samples and $p$
is the
number of permutations to which $\pi$ is invariant, the required post-processing
is a combinatorial problem with $p^T$ possible relabelings. This
eventually led
\cite{Ste00} to consider a more tractable alternative instead. More importantly
in our context, we have seen above (e.g., in Figure~\ref{f:amResults}) that
running an adaptive MCMC on the full permutation-invariant target may
result in
a poor mixing performance. To achieve both relevant relabeling and efficient
adaptivity, the key idea of stable AMOR is to link the covariance of
the proposal
distribution and the covariance used for relabeling, which are
proportional to
each other in stable AMOR.

\begin{figure*}

\includegraphics{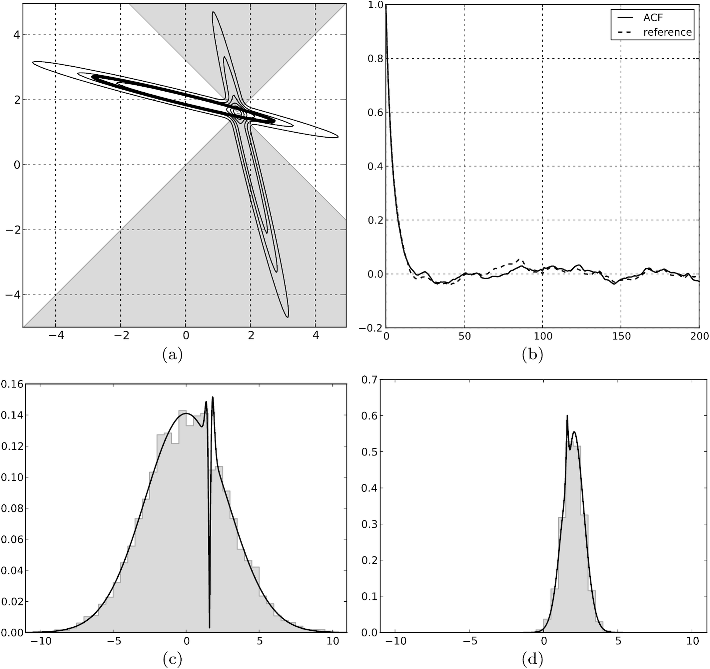}

\caption{Results of stable AMOR. For details about the plots, see the
caption of Figure
\protect\ref{f:amResults}.}
\label{f:amorResults}
\end{figure*}

Figure~\ref{f:amorResults} displays the results obtained using stable
AMOR on our
running example. Stable AMOR does separate $\mathbb{R}^2$ in two
regions that respect
the topology of the target much more closely than the approaches examined
previously. Figure~\ref{f:amorResults}(a) indicates that the relabeled target
is as Gaussian as possible among all partitionings based on a quadratic
criterion of the form \eqref{e:criterion}. The marginal histograms in
Figures~\ref{f:amorResults}(c) and \ref{f:amorResults}(d) now look almost
Gaussian. They closely match the marginals of both the restricted distribution
$\pi^\prime$ and the seed distribution $\pi_{\mbox{\tiny{SEED}}}$ in
Figure~\ref{f:pi}(b). Furthermore, the autocorrelation function of
stable AMOR (Figure~\ref{f:amorResults}(b)) is as good as the reference
autocorrelation function corresponding to an optimally tuned random walk
Metropolis--Hastings algorithm targeting the seed Gaussian $\pi_{\mbox
{\tiny{SEED}}}$ in
Figure~\ref{f:pi}(b). This perfect adaptation is possible because
the sample covariance now matches the covariance of the target
restricted to the
unshaded region of the plane (Figure~\ref{f:amorResults}(a)).


On this example, the stable AMOR algorithm thus automatically achieves,
without any
tuning, a satisfactory result that cannot be obtained with any of the methods
examined previously. Further examples of the behavior of stable AMOR are
given in the supplemental article \cite{6}. We are now ready to prove our main
result which shows that,
under suitable conditions, stable AMOR indeed asymptotically
samples from the target distribution restricted to a region on which the
marginals are identifiable, and that the sample mean and covariance
converge to
the corresponding moments of the restricted target.

\section{Convergence results}
\label{s:results}

We prove the convergence of stable AMOR under the following
condition on $\pi$.
%
\begin{assu}
\label{a:target} $\pi$ is a density
w.r.t. the Lebesgue measure on $\Rset^d$, which is bounded and with compact
support $\Xset$, and which is invariant to permutations in the group $\cP$:
\[
\forall x\in\Xset, \forall P\in\cP,\quad\quad \pi(Px)=\pi(x).
\]
\end{assu}

This section is organized as follows. We first describe which version
of the
stable AMOR algorithm we consider, and we show that it is an adaptive
MCMC algorithm. We
then characterize the limiting behavior of the sequence $(\theta_t)_{t
\geq0}$
(see Theorem~\ref{th:thetaConvergence}) and address a strong law of large
numbers for the samples $(X_t)_{t \geq0}$, as well as the ergodicity
of the
sampler (see Theorems \ref{th:xConvergence} and \ref{th:ergodicity}). All
proofs are given in the \hyperref[a:proofs]{Appendix}.

We are interested in finding a subset $\Vt$ of $\Xset$ of the form
(\ref{def:Vtheta}) such that the cells $(P \Vt)_{P \in\cP}$ cover
$\Xset$. We
will also ask that for any $P, Q \in\cP$, $P \neq Q$, the Lebesgue
measure of
$P \Vt\cap Q \Vt$ is null. Therefore, we choose the parameter set
$\TsetR$ as
follows (see Lemma~\ref{l:lemmaFromAistats} in the \hyperref[a:proofs]{Appendix}):
%
\begin{equation}
\label{eqnTheta} \TsetR= \bigl\{(\mu, \Sigma) \in\Rset^d \times
\cC_d^+ / \forall P \in\P^*, 
\Sigma^{-1}\mu \neq
P \Sigma^{-1}\mu \bigr\},
\end{equation}
where $\cP^*= \cP\setminus\{\operatorname{Id}\}$. The set $\Tset$ is
endowed with
the scalar product $\langle(a,A),(b,B)\rangle= a^Tb +
\operatorname{Trace}(A^TB)$. We will use the same notation $\|\cdot\|
$ for the norm
induced by this scalar product, for the Euclidean norm on $\Rset^d$,
and for
the norm $\| A\|=\operatorname{Trace}(A^TA)^{1/2}$ on $d\times d$ real
matrices.

Since we want to drive the parameter toward the set $\TsetR$, we
address the
convergence of the stable AMOR when $\alpha>0$ and the penalty term is
given by
%
\begin{eqnarray}
\operatorname{Pen}_{t,1} & =& - \sum_{ P \in\P^*}
\frac{1}{\|(I-P)\Sigma^{-1}_t \mu_t\|^4} U_P\Sigma^{-1}_t
\mu_t , \label{eq:penalty:mu}
\\
\operatorname{Pen}_{t,2} & =& \sum_{ P \in\P^*}
\frac{1}{\|
(I-P)\Sigma^{-1}_t\mu_t\|^4} \bigl( \mu_t \mu_t^T
\Sigma^{-1}_t U_P + U_P
\Sigma^{-1}_t\mu_t\mu^T_t
\bigr), \label{eq:penalty:var}
\end{eqnarray}
where $U_P= (I-P)^T (I-P)$. For the stabilization step, we consider the
sequence of compact sets $(\TsetC_{\delta_q})_{q \geq0}$ where
%
\begin{equation}
\label{eqnKDelta} \TsetC_\delta= \Bigl\{(\mu, \Sigma) \in\TsetR\dvtx  \inf
_{P \in\P^*} \bigl\|(I-P)\Sigma^{-1}\mu\bigr\|\geq\delta \Bigr\}
,
\end{equation}
and $(\delta_q)_{q \geq0}$ is any decreasing positive sequence such that
$\lim_{q \to\infty} \delta_q =0$ and $\TsetC_{\delta_0}$ is not empty.

Stable AMOR can be cast into the family of adaptive MCMC algorithms, in which
the updating rule of the design parameter relies on a stochastic approximation
scheme. Adaptive MCMC can be described as follows: given a family of transition
kernels $(P_\theta)_{ \theta\in\TsetR}$, the algorithm produces a
$(\Xset
\times\Theta)$-valued process $((X_t, \theta_t))_{t \geq0}$ such that the
conditional distribution of $X_t$ given its past history $X_1, \ldots, X_{t-1}$
is given by the transition kernel $P_{\theta_{t-1}}(X_{t-1},\cdot)$. This
algorithm is designed so that when $t$ tends to infinity, the
distribution of
$X_t$ converges to the invariant distribution of the kernel $P_{\theta_t}$.
Sufficient conditions for the convergence of such adaptive procedures were
recently proposed by \cite{RoRo07,FoMoPr12}. In particular, \cite{RoRo07}
provided sufficient conditions in terms of the so-called containment condition
and diminishing adaptation. Furthermore, \cite{FoMoPr12} showed that
when each
transition kernel $P_\theta$ has its own invariant distribution $\pi
_\param$,
an additional condition on the convergence of these distributions is also
required. We prove below that in our settings, each transition kernel
of stable
AMOR has its own invariant distribution; and this additional condition is
satisfied as soon as $(\param_t)_{t \geq0}$ converges almost surely.
In order
to establish this property, we will resort to convergence results for
stochastic approximation algorithms.

As a preliminary step for the convergence of stable AMOR, the stability
and the
convergence of the design parameter sequence $(\theta_t)_{t\geq0}$ is
established. Sufficient conditions for the convergence of stochastic
approximation procedures rely on the existence of a (sufficiently regular)
Lyapunov function on $\TsetR$, on the behavior of the mean field at the
boundary of the parameter set $\TsetR$, and on the magnitude of the step-size
sequence $(\gamma_t)_{ t \geq0}$.

The compactness assumption (Assumption~\ref{a:target}) makes it simpler
to analyze the
limiting behavior of the algorithm. The noncompact case is far more technical
and will not be addressed in this paper; see, e.g., \cite{FoMoPr12}
(respectively \cite{AnMoPr05}, Section~3) for examples of convergence of
adaptive MCMC (respectively a stochastic approximation procedure) when the
support of $\pi$ is not compact (respectively when the controlled
Markov chain
dynamics is not compactly supported).

Let us prove that stable AMOR is an adaptive MCMC algorithm. For any
$\theta
\in\TsetR$, define the transition kernel $P_\theta$ on $(\Xset,\Xfield
)$ by
%
\begin{equation}
P_\param(x,A) = \int_{A \cap V_\param} \alpha_\param(x,y)
q_\param(x,y) \,\mathrm{d}y + \mathbh{1}_A(x) \int
_{V_\param} \bigl(1 - \alpha_\param(x,z) \bigr)
q_\param(x,z) \,\mathrm{d}z , \label{e:kernelP}
\end{equation}
where $V_\param$ is given by (\ref{def:Vtheta}),
%
\begin{equation}
\label{eq:def:alpha} \alpha_\param(x,y) = 1 \wedge\frac{\pi(y) q_\param
(y,x) }{\pi(x) q_\param(x,y)}
\end{equation}
and
%
\begin{equation}
\label{eq:def:qtheta} q_\param(x,y) = \sum_{P \in\P}
\mathcal{N}( P y | x, c \Sigma) .
\end{equation}
For $\theta\in\Theta$, define also
%
\begin{equation}
\pi_\theta= |\cP|\mathbh{1}_{V_\theta} \pi. \label{def:pitheta}
\end{equation}
The following proposition shows that $q_\theta(x,\cdot)$ is a density on
$V_\param$ and, the distribution $\pi_\theta$ given by \eqref
{def:pitheta} is
invariant for the transition kernel $P_\theta$. It also establishes
that stable
AMOR is an adaptive MCMC algorithm: given $(X_{t-1}, \param_{t-1})$,
$X_t$ is
obtained by one iteration of a random-walk Metropolis--Hastings
algorithm with
proposal $q_{\param_{t-1}}$ and invariant distribution $\pi_{\param_{t-1}}$.
%
\begin{prop}
\label{prop:poisson} Under Assumption~\ref{a:target}, the following
assertions hold:
\begin{enumerate}[(3)]
\item[(1)] For any $\param\in\TsetR$ and $x \in\Xset$, $\int_{V_\theta}
q_\theta(x,y) \,\mathrm{d}y =1$.
\item[(2)] For any $\param\in\TsetR$, $\pi_\param P_\param= \pi_\param$
and for
any $x \in V_\param$, $P_\param(x,V_\param) =1$.
\item[(3)] Let $(\param_t, X_t)_{t \geq0}$ be given by
Algorithm \ref{figPseudocodeStableAMOR}. Conditionally on
$\sigma(X_0,\theta_0,X_1,\theta_1,\ldots,\break X_{t-1},\theta_{t-1})$, the
distribution of $X_t$ is $ P_{\theta_{t-1}}(X_{t-1},\cdot)$.
\end{enumerate}
\end{prop}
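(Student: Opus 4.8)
The plan is to verify the three assertions in order, each reducing to a short computation exploiting the permutation invariance of $\pi$ and the symmetrization built into $q_\theta$.

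\textbf{Assertion 1.} I would start from the observation that, by definition of the Voronoi cell $V_\param$ in \eqref{def:Vtheta}, the sets $(P V_\param)_{P \in \cP}$ cover $\Xset$ and, because $\theta \in \TsetR$ (see \eqref{eqnTheta} and Lemma~\ref{l:lemmaFromAistats}), they overlap only on a Lebesgue-null set. Hence $\sum_{P \in \cP} \IND_{P V_\param}(y) = 1$ for a.e.\ $y$, equivalently $\sum_{P \in \cP} \IND_{V_\param}(P^{-1}y) = 1$. Now compute
\begin{equation*}
\int_{V_\param} q_\param(x,y)\, dy = \int_{V_\param} \sum_{P \in \cP} \cN(Py \mid x, c\Sigma)\, dy.
\end{equation*}
Changing variables $z = Py$ in each summand (the permutation matrices have determinant $\pm 1$, so the Jacobian is $1$) turns this into $\sum_{P \in \cP} \int_{P V_\param} \cN(z \mid x, c\Sigma)\, dz = \int_{\Rset^d} \cN(z\mid x, c\Sigma)\, dz = 1$, using the a.e.\ partition of $\Rset^d$ by the cells $P V_\param$ (here one also uses that $V_\param \subseteq \Xset$ and that outside $\Xset$ the cells still tile $\Rset^d$ by the same argument applied to the unbounded quadratic criterion). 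This shows $q_\param(x,\cdot)$ is a probability density on $V_\param$, so $P_\param$ is a genuine Metropolis--Hastings kernel.

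\textbf{Assertion 2.} The key point is that $q_\param$ is symmetric on $V_\param \times V_\param$: for $x,y \in V_\param$, $q_\param(x,y) = \sum_P \cN(Py\mid x, c\Sigma)$ and $q_\param(y,x) = \sum_P \cN(Px \mid y, c\Sigma) = \sum_P \cN(x\mid P^{-1}y \cdot \text{?})$. More precisely, since $\cN(\cdot\mid \mu,\Sigma)$ is symmetric in the sense $\cN(a\mid b,\Sigma) = \cN(b\mid a,\Sigma)$, and $\cN(Py\mid x, c\Sigma) = \cN(P^Tx \mid y, c\Sigma)$ because $P$ is orthogonal and $c\Sigma$ need not be $P$-invariant — wait, this requires care: $\cN(Py\mid x,c\Sigma) = \cN(Py - x \mid 0, c\Sigma)$ and $\cN(Px \mid y, c\Sigma) = \cN(Px-y\mid 0,c\Sigma)$; applying $P^{-1}$ (orthogonal, so it preserves the quadratic form only if $c\Sigma$ commutes with $P$, which it need not) is \emph{not} immediate. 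Instead I would note that $q_\param(y,x) = \sum_P \cN(Px\mid y,c\Sigma) = \sum_P \cN(y \mid Px, c\Sigma)$, and reindex: this is $\sum_Q \cN(Qy \mid x, c\Sigma)$ only if one can move the permutation. The honest route, which is the standard one for such symmetrized proposals, is: on $V_\param$ the proposal is obtained by drawing $\tilde X \sim \cN(\cdot\mid x, c\Sigma)$ and projecting onto $V_\param$ via the a.e.-defined map $y \mapsto$ (the unique $P$ with $Py \in V_\param$), so the induced density at $y \in V_\param$ is $\sum_{P\in\cP}\cN(Py\mid x,c\Sigma)$; detailed balance $\pi_\param(x) q_\param(x,y)\alpha_\param(x,y) = \pi_\param(y) q_\param(y,x)\alpha_\param(y,x)$ then holds by the usual Metropolis--Hastings identity \emph{with the explicit acceptance ratio} \eqref{eq:def:alpha}, which is exactly why the algorithm uses the corrected ratio in Step~\ref{linPosteriorRatio} rather than assuming symmetry. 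So I would verify $\pi_\param P_\param = \pi_\param$ directly from the reversibility identity, integrating \eqref{e:kernelP} against $\pi_\param$, with $\pi_\param = |\cP|\,\IND_{V_\param}\pi$ from \eqref{def:pitheta}; the constant $|\cP|$ is just the normalization so that $\pi_\param$ is a probability density, justified by $\pi$ being $\cP$-invariant and the cells $(PV_\param)$ tiling $\Xset$. The claim $P_\param(x, V_\param) = 1$ for $x \in V_\param$ is immediate from \eqref{e:kernelP}, since both terms there are supported on $V_\param$ (the first integral is over $A \cap V_\param$, the second is $\IND_A(x)$ times a number, and $x \in V_\param$).

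\textbf{Assertion 3.} This is bookkeeping: I would read off Steps~\ref{algo:tirerX}--\ref{algo:reject} of Algorithm~\ref{figPseudocodeStableAMOR} and check that, conditionally on the past, $X_t$ is produced by exactly the mechanism encoded in $P_{\theta_{t-1}}$. Drawing $\tilde X \sim \cN(\cdot\mid X_{t-1}, c\Sigma_{t-1})$ then relabeling by the a.e.-unique $\tilde P$ with $\tilde P\tilde X \in V_{\theta_{t-1}}$ (Steps~\ref{algo:tirerNu}--\ref{algo:Permuter}; ties occur on a null set and do not affect the law) yields a proposal with density $q_{\theta_{t-1}}(X_{t-1},\cdot)$ on $V_{\theta_{t-1}}$ by Assertion~1's change of variables; the accept/reject test in Step~\ref{linPosteriorRatio} uses precisely the ratio $\pi(\tilde X)q_{\theta_{t-1}}(\tilde X, X_{t-1}) / [\pi(X_{t-1})q_{\theta_{t-1}}(X_{t-1},\tilde X)]$ since $\sum_P \cN(PX_{t-1}\mid \tilde X, \bSigma) = q_{\theta_{t-1}}(\tilde X, X_{t-1})$ and likewise for the denominator. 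Hence the conditional law of $X_t$ is $P_{\theta_{t-1}}(X_{t-1},\cdot)$.

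\textbf{Main obstacle.} The one genuinely delicate point is Assertion~1 (and its reuse in Assertion~3): one must argue carefully that the Voronoi cells $(PV_\param)_{P\in\cP}$ form an \emph{exact} a.e.-partition of $\Rset^d$ for every $\theta \in \TsetR$ — this is where the defining condition $\Sigma^{-1}\mu \neq P\Sigma^{-1}\mu$ in \eqref{eqnTheta} is used, via Lemma~\ref{l:lemmaFromAistats}, to ensure the pairwise intersections $PV_\param \cap QV_\param$ are Lebesgue-null (they lie in affine hyperplanes) and that the cells genuinely cover without gaps. Everything else is the standard Metropolis--Hastings reversibility computation, made slightly fiddly only by the symmetrization sum over $\cP$.
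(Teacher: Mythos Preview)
Your approach is correct and matches the paper's: Assertion~1 via the change of variables $z=Py$ and the a.e.-partition from Lemma~\ref{l:lemmaFromAistats}; Assertion~2 via the standard Metropolis--Hastings detailed-balance identity (your detour probing whether $q_\theta$ is symmetric is unnecessary---the acceptance ratio \eqref{eq:def:alpha} makes $\pi_\theta(x)q_\theta(x,y)\alpha_\theta(x,y)=\pi_\theta(y)q_\theta(y,x)\alpha_\theta(y,x)$ automatic, which is exactly what the paper writes down); and Assertion~3 by identifying Steps~\ref{algo:tirerX}--\ref{algo:reject} with the kernel~\eqref{e:kernelP}, the paper making the tie-handling explicit through the counting function $N(x,\theta)=|\{Q\in\cP:Qx\in V_\theta\}|$ which equals~$1$ Lebesgue-a.e. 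Your parenthetical remark that the cells $PV_\theta$ must tile all of $\Rset^d$ (not just $\Xset$) for $\int_{V_\theta}q_\theta(x,\cdot)=1$ is a genuine subtlety the paper glosses over.
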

Note that the proof of Proposition~\ref{prop:poisson} is independent of the
update scheme of $(\theta_t)_{t\geq0}$, which makes the proposition valid
whatever the choice of $\alpha \operatorname{Pen}_{t,i}$.

Denote by $\cS_d$ the set of $d \times d$ symmetric real matrices. Let
$\alpha>0$ be fixed and define $H\dvtx \Xset\times\TsetR\rightarrow
\mathbb{R}^d\times\cS_d$ by
%
\begin{equation}
\label{eq:def:H} H(x,\theta) = \bigl(H_\mu(x,\theta),
H_\Sigma(x,\theta) \bigr)
\end{equation}
where
\begin{eqnarray}
H_\mu(x,\theta) &=& x-\mu- \alpha\sum_{ P \in\P^*}
\frac{1}{\|(I-P)\Sigma^{-1}\mu\|^4} U_P\Sigma^{-1} \mu,
\nonumber
\\
H_\Sigma(x,\theta) &=& (x-\mu) (x-\mu)^T - \Sigma
\nonumber
\\
&&{}+ \alpha\sum_{ P \in\P^*} \frac{1}{\|(I-P)\Sigma^{-1}\mu
\|^4} \bigl(
\mu\mu^T\Sigma^{-1}U_P + U_P
\Sigma^{-1}\mu\mu^T \bigr).
\nonumber
\end{eqnarray}

Let
%
\begin{eqnarray}
\mup& =& \int x \pi_\theta(x) \,\mathrm{d}x , \label{def:mutheta}
\\
\Sigmap&=& \int(x-\mup) (x-\mup)^T \pi_\theta(x) \,\mathrm{d}x ,
\label{def:labeltheta}
\end{eqnarray}
be the expectation and covariance matrix of $\pi_\param$, respectively. Define
the {\em mean field} $h\dvtx \TsetR\rightarrow\mathbb{R}^d\times\cS_d$ by
%
\begin{equation}
h(\theta) = \bigl(h_\mu(\theta), h_\Sigma(\theta) \bigr),
\label{def:h}
\end{equation}
where
\begin{eqnarray}
h_\mu(\theta) &=& \mu_{\pi_\theta}-\mu- \alpha\sum
_{ P \in\P^*} \frac{1}{\|(I-P)\Sigma^{-1}\mu\|^4} U_P\Sigma^{-1}
\mu,
\nonumber
\\
h_\Sigma(\theta) &=& \Sigma_{\pi_\theta} - \Sigma+ (
\mu_{\pi_\theta} - \mu) (\mu_{\pi_\theta} - \mu)^T
\nonumber
\\
&&{} + \alpha\sum_{ P \in\P^*} \frac{1}{\|(I-P)\Sigma^{-1}\mu\|
^4} \bigl( \mu
\mu^T\Sigma^{-1}U_P + U_P
\Sigma^{-1}\mu\mu^T \bigr).
\nonumber
\end{eqnarray}
The key ingredient for the proof of the convergence of the sequence
$(\param_t)_{t \geq0}$ is the existence of a Lyapunov function $w$ for
the mean
field $h$: we prove in the \hyperref[a:proofs]{Appendix} (see Lemma~\ref{l:lyapunov}) that the
function $w\dvtx \TsetR\rightarrow\mathbb{R}_+$, defined by
%
\begin{eqnarray}
\label{e:defLyapunov} w(\theta) = - \int\log\cN(x|\theta)\pi_\theta (x)
\, \mathrm{d}x
+ \frac{\alpha}{2} \sum_{P\in\cP^*} \frac{1}{\|
(I-P)\Sigma^{-1}\mu\|^2},
\end{eqnarray}
is continuously differentiable on $\TsetR$ and satisfies $\langle
\nabla w, h
\rangle\leq0$. In addition, $\langle\nabla w(\param), h(\param)
\rangle= 0$
if and only if $\param$ is in the set
%
\begin{equation}
\label{eq:def:setL} \mathcal{L} = \bigl\{\param\in\TsetR\dvtx  h(\theta)=0 \bigr\} = \bigl
\{\param\in\TsetR\dvtx  \nabla w(\theta)=0 \bigr\} .
\end{equation}
The convergence of the sequence $(\param_t)_{t \geq0}$ is proved by verifying
the sufficient conditions for the convergence of the stochastic approximation
for Lyapunov stable dynamics given in \cite{AnMoPr05}. The first step
is to
prove that the sequence is bounded with probability one: we prove that, almost
surely, the number of projections $\psi$ is finite so that the projection
mechanism (steps 15 to 17 in
Algorithm \ref{figPseudocodeStableAMOR}) never occurs after a (random) finite
number of iterations. We then prove the convergence of the stable
sequence. To
achieve that goal, following the same lines as in \cite{AnMoPr05}, we
make the
following assumption.
%
\begin{assu}
\label{a:limitSet}
Let $\cL$ be given by \eqref{eq:def:setL}. There exists $M_\star>0$
such that $\cL\subset\{\param: w(\param) \leq
M_\star\}$, and $w(\cL)$ has an empty interior.
\end{assu}
For $x\in\Rset^d$ and $A\subset\Rset^d$, define $\mathrm{d}(x,A)=
\inf_{a\in A}\| x-a\|$. The following result is proved in the \hyperref[a:proofs]{Appendix}.
%
\begin{theo}
\label{th:thetaConvergence}
Let $\beta\in(1/2,1]$ and $\gamma_\star>0$. Let $(\theta_t)_{t\geq
0}$ be the
sequence produced by Algorithm \ref{figPseudocodeStableAMOR} with
$\alpha>0$,
the penalty term given by \eqref{eq:penalty:mu} and \eqref{eq:penalty:var}, the
compact sets $\TsetC_\delta$ given by (\ref{eqnKDelta}) and $\gamma_t
\sim
\gamma_\star t^{-\beta}$ when $t\rightarrow+\infty$. Under Assumptions
\ref{a:target} and \ref{a:limitSet},
\begin{enumerate}[(2)]
\item[(1)]  The sequence $(\param_t)_{t \geq0}$ is stable:
almost surely, there exist $M>0$ and $t_\star>0$ such that for any $t
\geq
t_\star$, $\param_t \in\{\param\in\TsetR\dvtx  w(\param) \leq M \}$. In
addition, the number of projections is finite almost surely.
\item[(2)]  Almost surely, $(w(\param_t))_t$ converges to
$w^\star\in w(\cL)$ and $\limsup_t
\mathrm{d}(\theta_t,\cL_{w^\star})\rightarrow0$ where $\cL_{w^\star} =
\{\param\in\cL, w(\param) = w^\star\}$.
\end{enumerate}
\end{theo}
Theorem~\ref{th:thetaConvergence} states the convergence of $(\theta
_t)_{t\geq
0}$ to the set $\cL$ of the zeros of $h$; note that this set neither depends
on the initial values $(\param_0, X_0)$ nor on other design parameters.
In our
experiments, we always observed pointwise convergence. This is a hint
that, in practice, $\cL$ does not contain accumulation points. We now
state a strong
law of large numbers for the samples $(X_t)_{t \geq0}$.
%
\begin{theo}
\label{th:xConvergence}
Let $\beta\in(1/2,1]$, $\gamma_\star>0$, and $\param^\star\in\cL$. Let
$(X_t, \theta_t)_{t\geq0}$ be the sequence generated by
Algorithm \ref{figPseudocodeStableAMOR} with $\alpha>0$, the penalty term
given by \eqref{eq:penalty:mu} and \eqref{eq:penalty:var}, the
compact sets $\TsetC_\delta$ given by (\ref{eqnKDelta}) and $\gamma_t
\sim
\gamma_\star t^{-\beta}$ when $t\rightarrow+\infty$. Under Assumptions
\ref{a:target} and \ref{a:limitSet}, on the set $\{\lim_t \param_t =
\param^\star\}$, almost surely,
\[
\lim_{T\rightarrow\infty} \frac{1}{T} \sum
_{t=1}^T f(X_t) =\pi_{\param
^\star}(f)
,
\]
for any bounded function $f$.
\end{theo}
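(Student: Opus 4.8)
The plan is to deduce the strong law of large numbers from general results on law of large numbers for adaptive MCMC algorithms, exploiting the three facts established earlier: each kernel $P_\theta$ is a genuine Metropolis--Hastings kernel with invariant distribution $\pi_\theta$ (Proposition~\ref{prop:poisson}), the process $(X_t,\theta_t)$ is an adaptive MCMC process (Proposition~\ref{prop:poisson}, item~3), and on the event $\{\lim_t\theta_t=\theta^\star\}$ the design parameter converges (Theorem~\ref{th:thetaConvergence}). The key observation is that on $\{\lim_t\theta_t=\theta^\star\}$, after a random but almost surely finite time, the truncation mechanism is inactive (Theorem~\ref{th:thetaConvergence}, item~1), so it suffices to work with the untruncated dynamics, and the asymptotic behaviour is governed by the single limiting kernel $P_{\theta^\star}$ with invariant distribution $\pi_{\theta^\star}$.

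First I would verify a uniform-in-$\theta$ minorization/drift (ergodicity) condition for the family $(P_\theta)$ on a neighbourhood of $\theta^\star$ in $\TsetR$. Because $\pi$ is bounded with compact support $\Xset$ (Assumption~\ref{a:target}), each $\pi_\theta=|\cP|\IND_{V_\theta}\pi$ is supported on the compact set $V_\theta\cap\Xset$; the Gaussian proposal $q_\theta(x,\cdot)$ restricted to $V_\theta$ has a density bounded below on $\Xset\cap V_\theta$ by a constant uniform over $\theta$ in a compact neighbourhood of $\theta^\star$ (here one uses $\Sigma\in\cC_d^+$ bounded away from singular matrices on compacts of $\TsetR$, and that $c$ is fixed). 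Hence $P_\theta$ satisfies a one-step Doeblin condition $P_\theta(x,\cdot)\ge\varepsilon\,\nu_\theta(\cdot)$ on $V_\theta\cap\Xset$, giving uniform geometric ergodicity with a rate and constant that are uniform for $\theta$ near $\theta^\star$. Second, I would check \emph{diminishing adaptation}: since $\theta\mapsto P_\theta$ is continuous in total variation (continuity of $V_\theta$, of $q_\theta$, of $\alpha_\theta$ in $\theta$ — this uses $\theta^\star\in\TsetR$, so the Voronoi boundaries behave well) and $\gamma_t\to0$ forces $\|\theta_t-\theta_{t-1}\|\to0$ on the convergence event, we get $\sup_x\|P_{\theta_t}(x,\cdot)-P_{\theta_{t-1}}(x,\cdot)\|_{\TV}\to0$ almost surely. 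The \emph{containment} condition then follows automatically from uniform geometric ergodicity on the (eventually) compact range of $(\theta_t)$.

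With these ingredients, I would invoke a strong law of large numbers for adaptive MCMC — e.g.\ in the style of \cite{RoRo07} or, in the form most adapted to kernels each having their own invariant law, \cite{FoMoPr12} — whose hypotheses are: uniform-in-$\theta$ ergodicity on the relevant parameter set, diminishing adaptation, and convergence of the invariant distributions $\pi_{\theta_t}\to\pi_{\theta^\star}$. The last point follows from Theorem~\ref{th:thetaConvergence} together with continuity of $\theta\mapsto\pi_\theta$ on $\TsetR$ (again $\theta^\star\in\TsetR$ is what guarantees $\pi_{\theta^\star}$ is well-defined and the limit is continuous there). One must be a little careful that all of this is asserted \emph{on the event} $\{\lim_t\theta_t=\theta^\star\}$: I would condition on that event and on the finite random time after which no projection occurs, restrict attention to a deterministic compact neighbourhood $\TsetC$ of $\theta^\star$ contained in $\TsetR$ on which the uniform bounds hold, and observe that the tail of the trajectory $(\theta_t)$ lies in $\TsetC$. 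Since the Cesàro average is insensitive to a finite initial segment, the SLLN for $\frac1T\sum_{t=1}^T f(X_t)$ with limit $\pi_{\theta^\star}(f)$ follows for every bounded $f$.

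The main obstacle I anticipate is establishing the \emph{uniform} ergodicity and the \emph{continuity in $\theta$ of $P_\theta$ in total variation} near $\theta^\star$: the Voronoi cell $V_\theta$ depends on $\theta$ through the ordering of the quadratic forms $L_\theta(Px)$, and its boundary is a finite union of quadric hypersurfaces that move with $\theta$. Controlling $\Leb(V_{\theta}\triangle V_{\theta'})$ and hence $\|\pi_\theta-\pi_{\theta'}\|_{\TV}$ requires that $\theta^\star\in\TsetR$, i.e.\ that $\Sigma^{-1}\mu\neq P\Sigma^{-1}\mu$ for $P\in\cP^\ast$, which is precisely why $\TsetR$ was defined by \eqref{eqnTheta}: on $\TsetR$ the set of $x$ lying on a cell boundary has zero Lebesgue measure, and this degeneracy bound is stable under small perturbations of $\theta$. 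Once this geometric continuity is in hand — likely via Lemma~\ref{l:lemmaFromAistats} — the remaining verifications are routine applications of the adaptive-MCMC SLLN machinery.
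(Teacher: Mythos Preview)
Your strategy is essentially the paper's: establish a uniform Doeblin/minorization for $(P_\theta)$ over compacts of $\TsetR$, control the dependence of $P_\theta$ and $\pi_\theta$ on $\theta$, use Theorem~\ref{th:thetaConvergence} for stability and convergence of $(\theta_t)$, and then invoke the SLLN of \cite{FoMoPr12}. The paper proceeds exactly along these lines, applying \cite[Corollary~2.8]{FoMoPr12} after checking (i) existence and uniform bounds on Poisson solutions (Lemma~\ref{l:poisson:bis}), (ii) a summability condition on successive kernels, and (iii) convergence of $\pi_{\theta_t}$ (Lemma~\ref{l:Regularity:theta:pi}).

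There is, however, one genuine imprecision in your plan. You propose to verify \emph{diminishing adaptation}, i.e.\ $\sup_x\|P_{\theta_t}(x,\cdot)-P_{\theta_{t-1}}(x,\cdot)\|_{\TV}\to 0$, arguing from mere continuity of $\theta\mapsto P_\theta$ in total variation together with $\|\theta_t-\theta_{t-1}\|\to 0$. But diminishing adaptation plus containment in the style of \cite{RoRo07} yields ergodicity (convergence of marginals), not a strong law of large numbers. The SLLN of \cite[Corollary~2.8]{FoMoPr12} that the paper applies requires instead the quantitative summability
\[
\sum_k k^{-1}\,\sup_{x\in\Xset}\|P_{\theta_k}(x,\cdot)-P_{\theta_{k-1}}(x,\cdot)\|_{\TV}<\infty \quad\text{a.s.\ on }\Omega_M.
\]
Mere continuity of $\theta\mapsto P_\theta$ is not enough for this; one needs a \emph{H\"older} estimate $\sup_x\|P_\theta(x,\cdot)-P_{\theta'}(x,\cdot)\|_{\TV}\le C\|\theta-\theta'\|^{1-2\kappa}$ for some $\kappa\in(0,1/2)$ (Lemma~\ref{l:Regularity:theta:kernel}), which combined with $\|\theta_k-\theta_{k-1}\|\le C\gamma_k$ and $\gamma_k\sim k^{-\beta}$ gives a summable series $\sum_k k^{-1}\gamma_k^{1-2\kappa}$.

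Obtaining this H\"older exponent is precisely where the geometric difficulty you correctly flag becomes sharp: one must show $\Leb(V_\theta\setminus V_{\theta'})\le C\|\theta-\theta'\|^{1-2\kappa}$ uniformly for $\theta$ in level sets $\cW_M$ (Lemma~\ref{l:voronoiControl}). This is not a routine consequence of $\theta^\star\in\TsetR$ and Lemma~\ref{l:lemmaFromAistats}; the boundaries $\{L_\theta(x)=L_\theta(P^Tx)\}$ are quadrics moving with $\theta$, and the paper's proof involves a careful case analysis in polar/spherical coordinates to control the swept volume. This H\"older control of the Voronoi cells is the substantive technical content underlying the SLLN, and your outline underestimates it by speaking only of continuity.
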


It is easily checked (by using Lemma~\ref{l:lemmaFromAistats}) that,
when the
function $f$ is invariant to permutations in the group $\mathcal{P}$,
$\pi_{\theta}(f) = \pi(f)$ for any $\theta\in\TsetR$. A careful
reading of
the proof of this theorem (see the remark in Section~\ref{sec:thcvg}) shows
that for such a function $f$, when the sequence $(\theta_t)_{t \geq0}$ is
stable but does not necessarily converge, it holds, almost surely,
\[
\lim_{T\rightarrow\infty} \frac{1}{T} \sum
_{t=1}^T f(X_t) =\pi(f) .
\]
Finally, Theorem~\ref{th:ergodicity} yields the ergodicity of stable AMOR.
%
\begin{theo}
\label{th:ergodicity}
Let $\beta\in(1/2,1]$, $\gamma_\star>0$, and $\param^\star\in\cL$. Let
$(X_t, \theta_t)_{t\geq0}$ be the sequence generated by
Algorithm \ref{figPseudocodeStableAMOR} with $\gamma_t \sim\gamma_\star
t^{-\beta}$ when $t\rightarrow+\infty$. Under Assumptions \ref
{a:target} and
\ref{a:limitSet},
\[
\lim_{t\rightarrow\infty} \sup_{\| f\|_{\infty}\leq1} \Bigl\llvert
\mathbb{E} \bigl[ f(X_t) \mathbh{1}_{\lim_q\theta_q=\theta^\star} \bigr] -
\pi_{\theta^\star}(f) \PP\Bigl( \lim_q\theta_q=
\theta^\star\Bigr) \Bigr\rrvert = 0.
\]
\end{theo}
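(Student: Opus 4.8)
The strategy is to check that stable AMOR fulfils the sufficient conditions for the ergodicity of an adaptive MCMC sampler whose transition kernels each admit their own invariant distribution, as established in \cite{FoMoPr12}: (i) diminishing adaptation, (ii) a containment condition (obtained here from simultaneous uniform ergodicity of $(P_\theta)$ over compact subsets of $\TsetR$, in the spirit of \cite{RoRo07}), and (iii) convergence of the invariant distributions $\pi_{\theta_t}$ on the event $\Gamma^\star\eqdef\{\lim_q\theta_q=\theta^\star\}$. By Proposition~\ref{prop:poisson}, $(X_t,\theta_t)_{t\ge0}$ is an adaptive MCMC process with kernels $P_\theta$ of \eqref{e:kernelP} and invariant distributions $\pi_\theta$ of \eqref{def:pitheta}, so this machinery applies as soon as (i)--(iii) hold. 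First one localizes on $\Gamma^\star$: since $\theta^\star\in\cL\subset\TsetR$ and $\TsetR$ is open, fix $\varepsilon>0$ small enough that $K\eqdef\bar B(\theta^\star,\varepsilon)$ is a compact subset of $\TsetR$ on which $\inf_{P\in\cP^*}\Vert(I-P)\Sigma^{-1}\mu\Vert$ is bounded away from $0$. On $\Gamma^\star$ one has $\theta_t\to\theta^\star$, hence $\theta_t\in K$ for all $t$ past some random time; moreover, by Theorem~\ref{th:thetaConvergence} the number of projections is a.s.\ finite, so past a further random time the truncation Steps~\ref{algo:projectionDebut}--\ref{algo:projectionFin} never fire. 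All uniform estimates below are only needed for $\theta$ ranging over the fixed compact $K$.

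\emph{Simultaneous uniform ergodicity.} For $\theta=(\mu,\Sigma)\in K$ the proposal density $q_\theta(x,y)=\sum_{P\in\cP}\cN(Py\vert x,c\Sigma)\ge\cN(y\vert x,c\Sigma)$ is bounded below by a constant $q_-=q_-(K)>0$ and above by a constant, uniformly over $x,y\in\Xset$ and $\theta\in K$ (because $\Xset$ is compact and $\{\Sigma:(\mu,\Sigma)\in K\}$ is a compact subset of $\cC_d^+$); also $\pi$ is bounded by Assumption~\ref{a:target}. Bounding $\alpha_\theta(x,y)q_\theta(x,y)=q_\theta(x,y)\wedge\bigl(\pi(y)q_\theta(y,x)/\pi(x)\bigr)\ge q_-\wedge\bigl(q_-\pi(y)/\Vert\pi\Vert_\infty\bigr)$ and using $\pi(y)\le\Vert\pi\Vert_\infty$ and $\pi_\theta=\vert\cP\vert\,\IND_{V_\theta}\pi$, one obtains, for every $x\in\Xset$ and every measurable $A$,
\[
P_\theta(x,A)\ \ge\ \int_{A\cap V_\theta}\Bigl(q_-\wedge\frac{q_-}{\Vert\pi\Vert_\infty}\pi(y)\Bigr)\,dy\ =\ \frac{q_-}{\Vert\pi\Vert_\infty}\int_{A\cap V_\theta}\pi(y)\,dy\ =\ \frac{q_-}{\vert\cP\vert\,\Vert\pi\Vert_\infty}\,\pi_\theta(A)\eqsp.
\]
Since $\pi_\theta$ is a probability measure (the cells $(PV_\theta)_{P\in\cP}$ partition $\Xset$ up to Lebesgue-null sets and $\pi$ is $\cP$-invariant, by Lemma~\ref{l:lemmaFromAistats}) and the state space $\Xset$ is compact, this one-step minorization — valid from every $x\in\Xset$, including $x\notin V_\theta$, from which the chain enters $V_\theta$ and, by the second assertion of Proposition~\ref{prop:poisson}, never leaves it — yields $\sup_{\theta\in K}\sup_{x\in\Xset}\Vert P_\theta^n(x,\cdot)-\pi_\theta\Vert_{\TV}\le(1-\varepsilon_K)^n$ with $\varepsilon_K=q_-/(\vert\cP\vert\,\Vert\pi\Vert_\infty)$. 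Combined with $X_t\in\Xset$ for all $t$, this gives the containment condition.

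\emph{Diminishing adaptation and stabilisation of the targets.} On $K$, the maps $\theta\mapsto q_\theta(x,\cdot)$ and $\theta\mapsto\alpha_\theta(x,\cdot)$ are Lipschitz uniformly in $x\in\Xset$ (smoothness of Gaussian densities and of $\Sigma\mapsto\Sigma^{-1}$ on compact subsets of $\cC_d^+$), and $\Leb\bigl(\Xset\cap(V_\theta\triangle V_{\theta'})\bigr)\le C_K\Vert\theta-\theta'\Vert$: the separating surfaces $\{x:L_\theta(x)=L_\theta(Px)\}$ of \eqref{e:criterion} depend continuously on $\theta$ and, because $K\subset\TsetR$ keeps $\Vert(I-P)\Sigma^{-1}\mu\Vert$ away from $0$ (cf.\ \eqref{eqnKDelta} and Lemma~\ref{l:lemmaFromAistats}), keep a gradient bounded away from $0$ on $\Xset$, so the symmetric difference of the cells has Lebesgue measure controlled by $\Vert\theta-\theta'\Vert$. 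Putting these together, $\sup_{x\in\Xset}\Vert P_\theta(x,\cdot)-P_{\theta'}(x,\cdot)\Vert_{\TV}\le C_K\Vert\theta-\theta'\Vert$ for $\theta,\theta'\in K$. From Steps~\ref{algo:updateMu1}--\ref{algo:updateSigma1}, $\Xset$ compact, and the penalty terms \eqref{eq:penalty:mu}--\eqref{eq:penalty:var} being bounded on $K$, on the event that no projection occurs at iteration $t$ one has $\Vert\theta_t-\theta_{t-1}\Vert=O(\gamma_t)\to0$; together with the Lipschitz bound this yields diminishing adaptation, $\sup_{x\in\Xset}\Vert P_{\theta_t}(x,\cdot)-P_{\theta_{t-1}}(x,\cdot)\Vert_{\TV}\to0$ on $\Gamma^\star$. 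Likewise $\Vert\pi_{\theta_t}-\pi_{\theta^\star}\Vert_{\TV}\le\vert\cP\vert\,\Vert\pi\Vert_\infty\,\Leb\bigl(\Xset\cap(V_{\theta_t}\triangle V_{\theta^\star})\bigr)\to0$ on $\Gamma^\star$.

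\emph{Conclusion and expected difficulty.} With (i)--(iii) in force, the ergodicity theorem for adaptive MCMC with moving targets of \cite{FoMoPr12} gives the claim; the indicator $\IND_{\Gamma^\star}$ is handled exactly as there, by writing $\Gamma^\star$ as a countable union of events of the form ``$\theta_t\in B(\theta^\star,\varepsilon)$ for all $t\ge s$, and no projection after $s$'' and conditioning at time $s$ on each of them. The main obstacle is the \emph{uniform} control of the Voronoi cells over a compact $K\subset\TsetR$ — both the simultaneous minorization and, especially, the Lipschitz estimate $\Leb(\Xset\cap(V_\theta\triangle V_{\theta'}))\le C_K\Vert\theta-\theta'\Vert$ — which is precisely where the geometry of the relabeling criterion \eqref{e:criterion} enters and where the restriction of $(\theta_t)$ to the sets $\TsetC_{\delta_q}$ of \eqref{eqnKDelta} is indispensable (it keeps the normals $(I-P)\Sigma^{-1}\mu$ non-degenerate); the remaining steps are routine bookkeeping on top of Theorem~\ref{th:thetaConvergence} and the cited adaptive-MCMC theory.
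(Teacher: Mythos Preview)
Your overall architecture matches the paper's: both proofs rest on (a) a one-step minorization giving simultaneous uniform ergodicity of $(P_\theta)$ over compact parameter sets (the paper's Lemma~\ref{l:poisson:bis}), (b) a regularity-in-$\theta$ estimate for the kernels and for the invariant measures $\pi_\theta$, and (c) a localization on the event $\Gamma^\star$. The paper carries this out by the explicit three-term splitting $I_t^1+I_t^2+I_t^3$ (replace the indicator of $\Gamma^\star$ by $\{\|\theta_{t-r_\varepsilon}-\theta^\star\|\le\delta\}$ at a fixed lag $r_\varepsilon$, then control ``freezing'' of $\theta$ over $r_\varepsilon$ steps, geometric ergodicity of the frozen kernel, and closeness of $\pi_{\theta_{t-r_\varepsilon}}$ to $\pi_{\theta^\star}$), rather than by invoking a black-box theorem; but the ingredients are the same, so in spirit your route is the paper's.

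The substantive gap is your Lipschitz claim $\Leb\bigl(\Xset\cap(V_\theta\triangle V_{\theta'})\bigr)\le C_K\|\theta-\theta'\|$. Your justification --- that keeping $\|(I-P)\Sigma^{-1}\mu\|$ away from $0$ forces the gradient of $x\mapsto L_\theta(x)-L_\theta(Px)$ to be bounded away from $0$ on $\Xset$ --- does not go through: the separating surface is generically a \emph{quadric}, not a hyperplane, since $L_\theta(x)-L_\theta(P^Tx)=x^T(\Sigma^{-1}-P\Sigma^{-1}P^T)x-2\mu^T\Sigma^{-1}(I-P^T)x$, and its $x$-gradient $2(\Sigma^{-1}-P\Sigma^{-1}P^T)x-2(I-P)\Sigma^{-1}\mu$ can be small on $\Xset$ along directions where both the quadratic and the linear parts are small, even with $\|(I-P)\Sigma^{-1}\mu\|$ bounded below. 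The paper's Lemma~\ref{l:voronoiControl} addresses exactly this geometry and obtains only a H\"older bound $\Leb(V_\theta\setminus V_{\theta'})\le C\|\theta-\theta'\|^{1-2\kappa}$ for any $\kappa\in(0,1/2)$, after a fairly delicate case analysis (the parameter $\kappa$ enters through a tube $\mathsf B_{\theta,\theta'}$ that isolates the near-degenerate directions). This H\"older estimate then feeds Lemmas~\ref{l:Regularity:theta:pi} and~\ref{l:Regularity:theta:kernel}. Your argument survives if you downgrade to H\"older --- diminishing adaptation and $\|\pi_{\theta_t}-\pi_{\theta^\star}\|_{\TV}\to0$ only need some positive exponent --- but as written the Lipschitz step is unjustified and is precisely the hard part of the proof.

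Two smaller points. First, your handling of $\IND_{\Gamma^\star}$ by ``writing $\Gamma^\star$ as a countable union and conditioning at time $s$'' is vague about how the random entry time interacts with the uniformity in $f$; the paper's device of replacing $\Gamma^\star$ by $\{\|\theta_{t-r_\varepsilon}-\theta^\star\|\le\delta\}$ at a deterministic lag is cleaner and directly yields bounds uniform in $f$. Second, your minorization is stated ``from every $x\in\Xset$, including $x\notin V_\theta$''; this is fine, but note that along the algorithm one has $X_t\in V_{\theta_{t-1}}$, not necessarily $X_t\in V_{\theta_t}$, so the minorization is indeed needed from points outside the current cell --- the paper's Lemma~\ref{l:poisson:bis} is stated accordingly.
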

Here again, a careful reading of the proof shows that when $f$ is
invariant to
permutations in the group $\mathcal{P}$, we have (see the \hyperref[r2]{Remark} in
Section~\ref{sec:thergo})
\[
\lim_{t\rightarrow\infty} \bigl\llvert \mathbb{E} \bigl[ f(X_t)
\bigr] - \pi(f) \bigr\rrvert = 0.
\]

The expression \eqref{e:defLyapunov} of $w$ provides insight into the links
between relabeling and vector quantization \cite{GrLu00}. The first
term is
similar to a distortion measure in vector quantization as noted in~\cite{BaCaFoKe12}. It can also be seen as the cross-entropy
between $\pi_\theta$ and a Gaussian with parameters $\theta$. The
second term in
\eqref{e:defLyapunov} is similar to a barrier penalty in continuous optimization
\cite{BoVa04}. From this perspective, Algorithm \ref{figPseudocodeStableAMOR}
can be seen as a constrained optimization procedure that minimizes the
cross-entropy. In that sense, if $\theta^\star$ denotes a solution to this
optimization problem, the {\it relabeled target} $\pi_{\theta^\star}
\propto
\mathbh{1}_{V_{\theta^\star}}\pi$ is the restriction of $\pi$ to one of
its symmetric
modes $V_{\theta^\star}$ that looks as Gaussian as possible among all
such restrictions.

Vector quantization algorithms have already been investigated using stochastic
approximation tools \cite{Pag97}. However, stability was guaranteed in previous
work by making strong assumptions on the trajectories of the process
$(\theta_t)_{t\geq0}$, such as in \cite{Pag97}, Theorem~32; see also
\cite{Pag97}, Results 33--37 and Remark~38. These assumptions ensure that
$(\theta_t)$ stays asymptotically away from sets where the function used
elsewhere as a Lyapunov function is not differentiable. In this paper,
we adopt
a different strategy by introducing the modifications of the stable AMOR
algorithm and adding a barrier term in the definition of our Lyapunov function
\eqref{e:defLyapunov} that penalizes these sets. One of the
contributions of
this paper is to show that this penalization strategy leads to a stable
algorithm, without requiring any strong assumption on $(\theta_t)$.

\section{Conclusion}
\label{s:conclusion}

We illustrated stable AMOR, an adaptive Metropolis algorithm with online
relabeling and proved that a strong law of large numbers holds for this
sampler. The
stable version of AMOR, given in Algorithm \ref{figPseudocodeStableAMOR},
coincides with AMOR (proposed in \cite{BaCaFoKe12}) when the penalty
coefficient $\alpha$ is set to zero and no reprojection is performed. In
practice, we observed that stable AMOR is very robust to the choice of
$\alpha$. Figure~\ref{f:stableAmorOnToyExample} illustrates this
robustness on
the toy example of Section~\ref{s:example}.

\begin{figure}[b]

\includegraphics{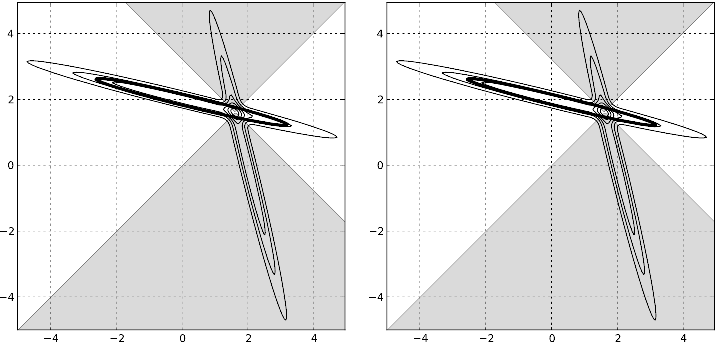}

\caption{Results of stable AMOR on the toy example of
Section \protect\ref{s:example}, with $\delta_q=10^{-2}2^{-q}$, and
$\alpha=10^{-3}$ (left) and $\alpha=1$ (right).}\label{f:stableAmorOnToyExample}
\end{figure}

Our algorithm adapts both its proposal and its target on the fly, which
makes it a
turn-key algorithm. Our results lead to a sound characterization of the target
of stable AMOR that does not depend on the initialization of the
algorithm nor on the
user. This is the first theoretical analysis of an online relabeling
algorithm to our knowledge. The proof further shows how relabeling is
related to vector quantization. Unlike previous work on stochastic
approximation schemes for vector quantization, we make no strong
assumptions on the trajectories of the process considered, rather, we
ensure that the appropriate constraint is satisfied by introducing
penalization directly into the stochastic approximation framework.

We now examine possible directions for future work. First, following our
analysis in Section~\ref{s:results}, the question of the control of the
convergence of stable AMOR arises, and proving a central limit theorem
would be
a natural next step. Second, the online nature of stable AMOR makes it cheaper
than its post-processing counterpart, but it still requires to sweep
over all
elements of $\cP$ at each iteration. This is prohibitive in problems
with large
$|\cP|$, such as additive models with a large number of
components. In
future work, we will concentrate on algorithmic modifications to reduce this
cost, potentially inspired by {\it probabilistic} relabeling algorithms
\cite{Jas05,SpJaWi10}, while conserving our theoretical results. Third,
we are
interested in extending stable AMOR to trans-dimensional problems, such as
mixtures with an unknown number of components. Reversible jump MCMC (RJMCMC;
\cite{Gre95}) also suffers from label-switching and inferential difficulties.
We will study algorithms that combine RJMCMC and stable AMOR.

\begin{appendix}
\section*{Appendix: Proofs}
\label{a:proofs}
Throughout the proof, let $\Delta_\pi>0$ be such that
%
\begin{equation}
\label{eq:def:DeltaPi} x \in\Xset\Rightarrow\|x\| \leq\Delta_\pi.
\end{equation}

For any function $f:D\rightarrow\Rset$, we will denote by $\|
f\|_\infty=\sup_{x \in D} |f(x)|$.

\subsection{Preliminary results}
We restate (with a slight adaptation) Lemma~1 of the supplementary material
from \cite{BaCaFoKe12} that we will use extensively.
%
\begin{lemm}
\label{l:lemmaFromAistats}
Let $\theta\in\TsetR$.
\begin{enumerate}[(2)]
\item[(1)] The sets $\{PV_\param,P\in\cP\}$ cover $\Xset$, and for any
$P,Q\in\cP$
such that $P\neq Q$, the Lebesgue measure of $PV_\theta\cap QV_\theta$ is
zero.
\item[(2)] Let $\lambda$ be a measure on $(\Xset,\Xfield)$ with a density
w.r.t. the
Lebesgue measure. Furthermore, let $\lambda$ be such that for any
$A\in\Xfield$ and $\P\in\cP$, $\lambda(PA)=\lambda(A)$. Then
$\lambda(V_\theta) = \lambda(\Xset)/|\cP|$.
\end{enumerate}
\end{lemm}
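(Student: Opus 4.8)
The plan is to establish the two assertions in sequence, the second being an easy consequence of the first together with the $\cP$-invariance of $\lambda$. Throughout, the key structural facts are that $\cP$ is a \emph{group} and that $\TsetR$ was designed precisely so that the quadratic criteria $L_\theta(\cdot)$ and $L_\theta(P\cdot)$ coincide only on a negligible set.

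\textbf{Covering (part 1).} Fix $x\in\Xset$ and choose $P_0\in\argmin_{P\in\cP}L_\theta(Px)$. Since $\{QP_0:Q\in\cP\}=\cP$, we get $L_\theta(P_0x)\le L_\theta(QP_0x)$ for every $Q\in\cP$, i.e.\ $P_0x\in V_\theta$ by \eqref{def:Vtheta}; hence $x\in P_0^{-1}V_\theta\subseteq\bigcup_{P\in\cP}PV_\theta$. (The reverse inclusion, not needed for the statement, holds because $\Xset$, being the support of the $\cP$-invariant density $\pi$, is $\cP$-invariant.) Note also that $V_\theta$, defined by finitely many polynomial inequalities, is Borel, so all quantities appearing in part 2 are well defined.

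\textbf{Negligible overlaps (part 1).} Let $P\ne Q$ in $\cP$ and put $R\eqdef Q^{-1}P\in\cP^*$. If $y\in PV_\theta\cap QV_\theta$ and $v\eqdef P^{-1}y$, then $v\in V_\theta$ and $Rv=Q^{-1}y\in V_\theta$; applying the defining inequality of $V_\theta$ to $v$ with the permutation $R$, and to $Rv$ with the permutation $R^{-1}$, yields $L_\theta(v)\le L_\theta(Rv)\le L_\theta(v)$, so $L_\theta(v)=L_\theta(Rv)$. Consequently $PV_\theta\cap QV_\theta\subseteq P\,Z_R$, where $Z_R\eqdef\{v\in\Rset^d:L_\theta(v)-L_\theta(Rv)=0\}$, and since $P$ is an invertible linear map it suffices to prove $\Leb(Z_R)=0$. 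Expanding, $L_\theta(v)-L_\theta(Rv)=v^T(\Sigma^{-1}-R^T\Sigma^{-1}R)v-2\big((I-R^T)\Sigma^{-1}\mu\big)^Tv$, a polynomial in $v$; its zero set is Lebesgue-negligible unless the polynomial vanishes identically, which would force in particular its linear part to vanish, i.e.\ $(I-R^T)\Sigma^{-1}\mu=0$, that is $\Sigma^{-1}\mu=R^T\Sigma^{-1}\mu$. But $R^T=R^{-1}\in\cP^*$, so this contradicts the definition \eqref{eqnTheta} of $\TsetR$. Hence $\Leb(Z_R)=0$, and therefore $\Leb(PV_\theta\cap QV_\theta)=0$. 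I expect this non-vanishing argument — and the realization that it is exactly the constraint defining $\TsetR$ that excludes the degenerate case — to be the only genuinely delicate point of the proof.

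\textbf{Part 2.} Since $\lambda$ has a density w.r.t.\ Lebesgue measure, part 1 gives $\lambda(PV_\theta\cap QV_\theta)=0$ for $P\ne Q$; hence, by finite additivity modulo these negligible overlaps and the assumed invariance $\lambda(PA)=\lambda(A)$ applied with $A=V_\theta$, we obtain $\lambda(\Xset)=\lambda\big(\bigcup_{P\in\cP}PV_\theta\big)=\sum_{P\in\cP}\lambda(PV_\theta)=|\cP|\,\lambda(V_\theta)$, whence $\lambda(V_\theta)=\lambda(\Xset)/|\cP|$.
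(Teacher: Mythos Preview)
Your proof is correct and follows essentially the same route as the paper's: reduce the overlap $PV_\theta\cap QV_\theta$ to the zero set of the quadratic polynomial $L_\theta(\cdot)-L_\theta(R\cdot)$ for $R=Q^{-1}P\in\cP^*$, and use the defining constraint of $\TsetR$ to rule out the degenerate case where this polynomial vanishes identically. The only cosmetic difference is that the paper works directly with $\{x:L_\theta(P^Tx)=L_\theta(Q^Tx)\}$ while you pull back by $P$ first, and that you check only the linear coefficient (which is all that is needed to invoke \eqref{eqnTheta}) whereas the paper records both the quadratic and linear conditions; your version is marginally tidier in this respect. Part~2 is identical to the paper's argument.
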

\begin{pf} The proof is along the lines of Lemma~1 of the
supplementary material in \cite{BaCaFoKe12}, and it is thus
omitted. It can be found in the supplemental article to the present paper~\cite{6}.
\end{pf}

\subsection{Proof of Proposition \texorpdfstring{\protect\ref{prop:poisson}}{3.1}}

  (1)  By the definition \eqref{eqnTheta} of $\TsetR$ and Lemma~\ref{l:lemmaFromAistats}, $\forall\theta\in\TsetR, x \in\Xset$, it
holds that
\[
\int_{V_\theta} q_\theta(x,y) \,\mathrm{d}y = \sum
_{P \in\P} \int_{V_\theta} \mathcal{N}( P y |
x, c \Sigma) \,\mathrm{d}y =1 .
\]

(2)  Let $(X_t)_{t\geq0}$ and
$(\theta_t)_{t\geq0}$ be the random processes defined by
Algorithm \ref{figPseudocodeStableAMOR}. Let
$\cF_{t}=\sigma(X_0,\theta_0,\dots,X_{t},\theta_{t})$. We prove that
for any measurable positive function $f$,
\[
\mathbb{E} \bigl[f(X_t)|\cF_{t-1} \bigr] = \int
f(x_t)P_{\theta_{t-1}}(X_{t-1},x_t)
\,\mathrm{d}x_t ,\quad\quad \mbox{w.p.1.}
\]
Let $f$ be measurable and positive. Let $(\tilde P, \tilde X)$ be the r.v.
defined by steps 5 and 6. Let $U$ be a
uniform r.v. independent of $\sigma(X_0,\theta_0,\dots,X_{t-1},\theta_{t-1},
\tilde P, \tilde X)$. By construction, it holds that
%
\begin{eqnarray}\label{e:DecAccRej}
\mathbb{E} \bigl[f(X_t)|\cF_{t-1} \bigr] &=&
\mathbb{E} \bigl[f(\tilde P \tilde X) \bigl( 1- \alpha_{\theta_{t-1}}(X_{t-1},
\tilde P \tilde X) \bigr)|\cF_{t-1} \bigr]
\nonumber
\\[-8pt]\\[-8pt]
&&{} + f(X_{t-1}) \mathbb{E} \bigl[ \bigl( 1- \alpha_{\theta_{t-1}}(X_{t-1},
\tilde P \tilde X) \bigr)|\cF_{t-1} \bigr]  .\nonumber
\end{eqnarray}

Now note that the projection mechanism (steps 15 to
17 of Algorithm \ref{figPseudocodeStableAMOR}) guarantees
that $\theta_{t-1}\in\TsetR$ with probability 1. By Lemma~\ref{l:lemmaFromAistats}, $\theta\in\TsetR$ implies $\Xset=\bigcup
_{P}(PV_\theta)$
and
\[
\forall P,Q\in\cP\quad\mbox{such that}\quad P\neq Q, \Leb(PV_\theta\cap
QV_\theta)=0.
\]
Thus, for any measurable and bounded function
$\varphi\dvtx \Xset\times\TsetR\rightarrow\Rset$, we have
\[
\int_{\Xset} \varphi(x,\theta) \,\mathrm{d}x = \sum
_{Q\in\cP} \int_{QV_\theta\cap(\cup_{R\neq Q}RV_\theta)^c}
\varphi(x,\theta) \,\mathrm{d}x .
\]
Applying this decomposition to the second term in the RHS of
\eqref{e:DecAccRej} yields
\begin{eqnarray*}
&&\mathbb{E} \bigl[f(\tilde P \tilde X)  \mathbh{1}_{U \leq
\alpha_{\theta_{t-1}}(X_{t-1},\tilde P \tilde X)} |
\cF_{t-1} \bigr]
\\
&&\quad= \sum_{P\in\cP} \int h(Px)\frac{1}{N(x,\theta_{t-1})}
\mathbh{1}_{V_{\theta_{t-1}}}(Px)\cN(x| X_{t-1},c\Sigma_{t-1})
\,\mathrm{d}x
\\
&&\quad= \sum_{P,Q\in\cP} \int_{QV_{\theta_{t-1}}\cap(\bigcup_{R\neq Q}RV_{\theta
_{t-1}})^c} h(Px)
\frac{1}{N(x,\theta_{t-1})}\mathbh{1}_{V_{\theta_{t-1}}}(Px)\cN(x| X_{t-1},c
\Sigma_{t-1}) \,\mathrm{d}x,
\end{eqnarray*}
where $N(x,\theta) = |\{Q\in\cP/ Qx\in V_\theta\}|$. Using Lemma~\ref{l:lemmaFromAistats} again,
\[
\theta\in\TsetR, \quad\quad x\notin\bigcup_{P \neq Q} (PV_\theta\cap
QV_\theta) \Rightarrow N(x,\theta) = 1,
\]
and thus
\begin{eqnarray*}
\mathbb{E} \bigl[f(\tilde P \tilde X) \mathbh{1}_{U \leq
\alpha_{\theta_{t-1}}(X_{t-1},\tilde P \tilde X)} |
\cF_{t-1} \bigr] &=& \sum_{P\in\cP} \int h(y)
\mathbh{1}_{V_{\theta_{t-1}}}(y)\cN \bigl(P^{-1}y| X_{t-1},c
\Sigma_{t-1} \bigr) \,\mathrm{d}y
\\
&=& \int_{V_{\theta_{t-1}}} h(y)q_{\theta_{t-1}}(X_{t-1},y) \,\mathrm{d}y ,
\end{eqnarray*}
where in the last step we used the fact that $\cP$ is a group. Similarly,
\begin{eqnarray*}
&&\mathbb{E} \bigl[ \bigl( 1- \alpha_{\theta_{t-1}}(X_{t-1},\tilde P
\tilde X) \bigr)| X_0,\theta_0,\dots,X_{t-1},
\theta_{t-1} \bigr]
\\
&&\quad= \int_{V_{\theta_{t-1}}}  \bigl(1 - \alpha_{\theta_{t-1}}(X_{t-1},y)
\bigr) q_{\theta_{t-1}}(X_{t-1},y) \,\mathrm{d}y ;
\end{eqnarray*}
and this concludes the proof.

 (3)  This proof amounts to check the classical detailed balance
condition \cite{RoCa04}, and it is thus omitted. It is included in the
supplemental article \cite{6}.

\subsection{The Lyapunov function}
Lemma~\ref{l:lyapunov} establishes the existence of a Lyapunov function
for the mean field $h$ given by \eqref{def:h}.
%
\begin{lemm}
\label{l:lyapunov}
Under Assumption~\ref{a:target}, the mean field $h$ is continuous on
$\TsetR$,
the function $w$ defined by \eqref{e:defLyapunov} is $\cC^1$ on $\TsetR
$ and
\begin{enumerate}[(3)]
\item[(1)]$\nabla_\mu w(\param) = - \Sigma^{-1} h_\mu(\theta)$ and
$\nabla_\Sigma w(\param) = -\frac{1}{2} \Sigma^{-1} h_\Sigma(\theta)
\Sigma^{-1}$.
\item[(2)]$\langle\nabla w(\theta), h(\theta)\rangle\leq0$ on $\TsetR$
and $\langle\nabla w(\theta), h(\theta)\rangle= 0$ iff $\theta\in\cL$.
\item[(3)] For any $M>0$, the level set
%
\begin{equation}
\label{eq:def:levelset:lyap} \cW_M= \bigl\{\theta\in\TsetR\dvtx  w(\theta )\leq M
\bigr\}
\end{equation}
is a compact subset of $\TsetR$, and there exist $\delta_1,\delta_2>0$ such
that
\begin{subequations}
%
\begin{equation}\label{l:lyapunov:se:1}
\inf_{\theta\in\cW_M} \inf_{P\in\cP^*} \bigl\|(I-P)
\Sigma^{-1}\mu\bigr\|\geq\delta_1
\end{equation}
  and
\begin{equation}\label{l:lyapunov:se:2}
\inf_{\theta\in\cW_M} \lambda_{\min}(\Sigma) \geq
\delta_2,
\end{equation}
where $\lambda_{\min}(\Sigma)$ denotes the minimal eigenvalue of the real
symmetric matrix $\Sigma$.
\end{subequations}
%
\end{enumerate}
%
\end{lemm}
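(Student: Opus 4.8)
\emph{Plan.} I would write $w = w_{1}+w_{2}$ with $w_{1}(\theta) = -\int\log\cN(x\vert\theta)\,\pi_\theta(x)\,dx$ the cross-entropy term and $w_{2}(\theta) = \frac{\alpha}{2}\sum_{P\in\cP^*}\Vert(I-P)\Sigma^{-1}\mu\Vert^{-2}$ the barrier term. On $\TsetR$ none of the $\Vert(I-P)\Sigma^{-1}\mu\Vert$ vanishes (definition \eqref{eqnTheta}), so $w_{2}$ is $\cC^\infty$ on $\TsetR$ and its gradient, obtained from $\Vert(I-P)\Sigma^{-1}\mu\Vert^{2}=\mu^T\Sigma^{-1}U_P\Sigma^{-1}\mu$ by routine matrix calculus, produces the penalty-type terms in $h_\mu,h_\Sigma$. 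For $w_{1}$ the only non-obviously-smooth dependence on $\theta$ is hidden in $\pi_\theta=\vert\cP\vert\IND_{\Vt}\pi$. The crucial reformulation uses Lemma~\ref{l:lemmaFromAistats}: from $-\log\cN(x\vert\mu,\Sigma)=\tfrac d2\log(2\pi)+\tfrac12\log\det\Sigma+\tfrac12 L_\theta(x)$, the changes of variables $x\mapsto Px$ over the cells $(P\Vt)_{P\in\cP}$, and the $\cP$-invariance of $\pi$, one gets
\[
w_{1}(\theta)=\tfrac d2\log(2\pi)+\tfrac12\log\det\Sigma+\tfrac12\int_{\Xset}\min_{P\in\cP}L_\theta(Px)\,\pi(x)\,dx\eqsp .
\]
For Lebesgue-a.e.\ $x$ the minimizing permutation is unique on $\TsetR$ (the tie-sets $\{L_\theta(Px)=L_\theta(Qx)\}$, $P\neq Q$, are Lebesgue-null by Lemma~\ref{l:lemmaFromAistats}(1)), so $\theta\mapsto\min_PL_\theta(Px)$ is differentiable at such $x$ with gradient $\nabla_\theta L_\theta(P^\star_x x)$.

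I would first check continuity of $h$: the penalty terms in $h_\mu,h_\Sigma$ are continuous on $\TsetR$ as above, and $\theta\mapsto\mup,\ \theta\mapsto\Sigmap$ are continuous by dominated convergence --- if $\theta_n\to\theta$ in $\TsetR$ then $\IND_{V_{\theta_n}}\to\IND_{\Vt}$ Lebesgue-a.e.\ (on the full-measure set where the minimizer of $\{L_\theta(P\,\cdot\,)\}_P$ is unique, hence locally stable), and $\pi$ is bounded with compact support, dominating the integrands. Then, for item~1, I would establish the $\cC^1$ property and the gradient formulas simultaneously: differentiating $\int_\Xset\min_PL_\theta(Px)\pi(x)\,dx$ by an envelope argument (the $\theta$-dependent cell boundaries contribute nothing, since two competing $L_\theta(P\,\cdot\,)$ agree there), with differentiation under the integral justified on each compact $K\subseteq\TsetR$ using the eigenvalue bounds valid on $K$ and the compact support of $\pi$, yields $\nabla_\mu\big(\tfrac12\int_\Xset\min_PL_\theta(Px)\pi\,dx\big)=\tfrac12\int\nabla_\mu L_\theta(x)\pi_\theta(x)\,dx$, and likewise for $\nabla_\Sigma$. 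Plugging in $\nabla_\mu L_\theta(x)=-2\Sigma^{-1}(x-\mu)$, $\nabla_\Sigma L_\theta(x)=-\Sigma^{-1}(x-\mu)(x-\mu)^T\Sigma^{-1}$, $\nabla_\Sigma\tfrac12\log\det\Sigma=\tfrac12\Sigma^{-1}$, the identity $\int(x-\mu)(x-\mu)^T\pi_\theta\,dx=\Sigmap+(\mup-\mu)(\mup-\mu)^T$, and the gradients of $w_2$, and collecting terms, one reads off exactly $\nabla_\mu w(\theta)=-\Sigma^{-1}h_\mu(\theta)$ and $\nabla_\Sigma w(\theta)=-\tfrac12\Sigma^{-1}h_\Sigma(\theta)\Sigma^{-1}$.

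Item~2 is then immediate: since $\Sigma\succ 0$ and $h_\Sigma(\theta)$ is symmetric, $\langle\nabla w(\theta),h(\theta)\rangle=-h_\mu(\theta)^T\Sigma^{-1}h_\mu(\theta)-\tfrac12\mathrm{Trace}\big((\Sigma^{-1/2}h_\Sigma(\theta)\Sigma^{-1/2})^2\big)\leq 0$, with equality iff $h_\mu(\theta)=h_\Sigma(\theta)=0$, i.e.\ $\theta\in\cL$. For item~3, I would start from the uniform lower bound $w_1(\theta)\geq-\int\pi_\theta\log\pi_\theta\geq-\log(\vert\cP\vert\Vert\pi\Vert_\infty)+\tfrac d2\log(2\pi)$ on all of $\TsetR$, which follows from Gibbs' inequality using $\int\pi_\theta=1$ and $\pi_\theta\leq\vert\cP\vert\Vert\pi\Vert_\infty$ (Lemma~\ref{l:lemmaFromAistats}(2) with $\lambda=\pi\,d\mathrm{Leb}$). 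Hence on $\cW_M$ the barrier is bounded, $w_2\leq M-\inf_{\TsetR}w_1$, which forces $\inf_{P\in\cP^*}\Vert(I-P)\Sigma^{-1}\mu\Vert\geq\delta_1$ for an explicit $\delta_1>0$ --- this is \eqref{l:lyapunov:se:1}. Also $w_1\leq M$ on $\cW_M$ (as $w_2\geq 0$); expanding $w_1(\theta)=\tfrac d2\log(2\pi)+\tfrac12\log\det\Sigma+\tfrac12\mathrm{Trace}(\Sigma^{-1}\Sigmap)+\tfrac12(\mup-\mu)^T\Sigma^{-1}(\mup-\mu)$ and using $\Vert\mup\Vert\leq\Delta_\pi$ together with the uniform non-degeneracy $\lambda_{\min}(\Sigmap)\geq c_\star>0$ over $\TsetR$ (a Chebyshev estimate: $\pi_\theta$ is a probability density $\leq\vert\cP\vert\Vert\pi\Vert_\infty$ supported in $\{\Vert x\Vert\leq\Delta_\pi\}$, so it cannot concentrate in a thin slab), one has $\mathrm{Trace}(\Sigma^{-1}\Sigmap)\geq c_\star\,\mathrm{Trace}(\Sigma^{-1})$; since $s\mapsto c_\star s-\log s$ is coercive on $(0,\infty)$, the bound $w_1\leq M$ confines every eigenvalue of $\Sigma$ to a fixed compact subinterval of $(0,\infty)$, giving $\lambda_{\min}(\Sigma)\geq\delta_2>0$ (this is \eqref{l:lyapunov:se:2}), after which the last term bounds $\Vert\mu\Vert$. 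Thus $\cW_M$ is bounded, keeps positive distance from $\partial\TsetR$ and from the singular matrices, and is closed by continuity of $w$, hence a compact subset of $\TsetR$.

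\emph{Main obstacle.} The delicate step is the $\cC^1$ differentiation of $w_1$ across the moving Voronoi cells --- justifying the interchange of $\nabla_\theta$ and $\int_\Xset$ while the cell boundaries themselves depend on $\theta$, which is precisely the non-differentiability phenomenon familiar from vector quantization and the reason the barrier $w_2$ is introduced in \eqref{e:defLyapunov}. A secondary technical point is the uniform lower bound on $\lambda_{\min}(\Sigmap)$ used for \eqref{l:lyapunov:se:2}.
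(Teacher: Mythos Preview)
Your proposal is correct and follows the same overall architecture as the paper's proof: continuity of $h$ via dominated convergence on the indicator $\IND_{V_\theta}$, the gradient identities leading to the quadratic form $\langle\nabla w,h\rangle=-h_\mu^T\Sigma^{-1}h_\mu-\tfrac12\mathrm{Trace}\big((\Sigma^{-1/2}h_\Sigma\Sigma^{-1/2})^2\big)$ for item~2, and for item~3 the two-step scheme of first bounding the barrier (giving \eqref{l:lyapunov:se:1}) and then a slab/anti-concentration estimate for $\Sigma_{\pi_\theta}$ (giving \eqref{l:lyapunov:se:2}). Two points of comparison are worth noting. First, for the $\cC^1$ property and the gradient formulas the paper simply cites \cite[Proposition~3, supplementary material]{BaCaFoKe12}, whereas you supply a self-contained envelope argument via the rewriting $w_1(\theta)=\mathrm{const}+\tfrac12\log\det\Sigma+\tfrac12\int_\Xset\min_PL_\theta(Px)\,\pi(x)\,dx$; this is more transparent and makes explicit why the moving cell boundaries contribute nothing. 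Second, to bound $w_2$ on $\cW_M$ the paper asserts that the cross-entropy term $w_1$ is non-negative (deferring again to \cite{BaCaFoKe12}); your route through Gibbs' inequality and the density bound $\pi_\theta\leq|\cP|\,\|\pi\|_\infty$ yields only a finite (possibly negative) uniform lower bound, but that is all the argument needs and it is fully justified here. One small slip: in your chain $w_1\geq-\int\pi_\theta\log\pi_\theta\geq-\log(|\cP|\,\|\pi\|_\infty)+\tfrac d2\log(2\pi)$ the additive $\tfrac d2\log(2\pi)$ is spurious --- Gibbs' inequality followed by $\pi_\theta\leq|\cP|\,\|\pi\|_\infty$ gives only $-\log(|\cP|\,\|\pi\|_\infty)$ --- but this does not affect the conclusion.
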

%
\begin{rem}
\label{r:VtoK}
As a consequence of Lemma~\ref{l:lyapunov}, observe that for any $M
>0$, there
exists $\delta>0$ such that $\cW_M \subseteq\TsetC_\delta$, where
$\TsetC_\delta$ is defined in \eqref{eqnKDelta}.
\end{rem}

\begin{pf}{\em(Continuity of $h$.)} This proof is a
straightforward application of Lebesgue's dominated convergence
theorem, and it is thus
omitted. It is included in the supplemental article, a link to which
can be found at the end of this paper \cite{6}.

{\em($w$ is $C^1$ on $\TsetR$.)} It is shown in \cite{BaCaFoKe12}, Proposition~3 of the
supplementary material, that the first term in the RHS of
\eqref{e:defLyapunov} is continuously differentiable on $\TsetR$. Since
$\|
(I-P)\Sigma^{-1}\mu\| \neq0$ for any $P \in\P^*$ and $(\mu, \Sigma)
\in
\TsetR$, the second term in the RHS of \eqref{e:defLyapunov} is continuously
differentiable on $\TsetR$. By \cite{BaCaFoKe12}, Proposition~3 of the supplementary
material, it holds for any $\param= (\mu, \Sigma) \in
\TsetR$
that
\begin{eqnarray*}
\nabla_\mu w(\theta) &=& -\Sigma^{-1}(\mu_{\pi_\theta} -
\mu) + \alpha\sum_P \frac{1}{\|(I-P)\Sigma^{-1}\mu\|^4}
\Sigma^{-1}U_P\Sigma^{-1}\mu
\\
&=&-
\Sigma^{-1}h_\mu(\theta),
\\
\nabla_\Sigma w(\theta) &=& -\frac{1}{2}\Sigma^{-1}
\bigl(\Sigma_{\pi_\theta} - \Sigma+ (\mu-\mup) (\mu-\mup)^T \bigr)
\Sigma^{-1}
\\
&&{} - \frac{\alpha}{2} \sum_P \frac{1}{\|
(I-P)\Sigma^{-1}\mu\|^4}
\Sigma^{-1} \bigl( \mu\mu^T\Sigma^{-1}U_P
\bigr)\Sigma^{-1} + U_P\Sigma^{-1}\mu
\mu^T
\\
&=& -\frac{1}{2} \Sigma^{-1} h_\Sigma(\theta)
\Sigma^{-1} .
\end{eqnarray*}
Hence, upon noting that $h_\Sigma(\theta)$ and $\Sigma^{-1}$ are symmetric,
\begin{eqnarray*}
\bigl\langle\nabla w(\theta),h(\theta) \bigr\rangle&=& -h_\mu(
\theta)^T\Sigma^{-1}h_\mu(\theta) -
\tfrac{1}{2}\operatorname{Trace} \bigl( \Sigma^{-1}
h_\Sigma(\theta) \Sigma^{-1} h_\Sigma(\theta) \bigr)
\\
&=& -h_\mu(\theta)^T\Sigma^{-1}h_\mu(
\theta) - \tfrac{1}{2}\operatorname{Trace} \bigl( \Sigma^{-1/2}
h_\Sigma(\theta) \Sigma^{-1} h_\Sigma(\theta)
\Sigma^{-1/2} \bigr).
\end{eqnarray*}
The first term of the RHS is negative since $\Sigma\in\cC_d^+$ and the second
term is negative since $(A,B)\mapsto\operatorname{Trace}(A^T B)$ is a
scalar product.
Therefore, $\langle\nabla w(\theta),h(\theta)\rangle\leq0$ with
equality if and only if f
$\theta\in\cL$.

{\em($\cW_M$ is compact.)} We prove \eqref{l:lyapunov:se:1}. By the definition
\eqref{e:defLyapunov} of $w$, for any $\theta\in\cW_M$, we have
\[
- \int\log\cN(x|\theta)\pi_\theta(x) \,\mathrm{d}x + \frac{\alpha}{2} \sum
_{P
\in\P^*} \frac{1}{\|(I-P)\Sigma^{-1}\mu\|^2}\leq M .
\]
In particular, the first term in the LHS is a cross-entropy, and it is
thus nonnegative (alternatively, see \cite{BaCaFoKe12}, Proposition~1 of the supplementary
material). Consequently, for any $\theta\in\cW_M$, we have
\[
\sum_{P\in\cP^*} \frac{1}{\|(I-P)\Sigma^{-1}\mu\|^2}\leq
\frac
{2M}{\alpha}.
\]
This yields $\|(I-P)\Sigma^{-1}\mu\|^2\geq\frac{\alpha}{2M}$ for
any $P\in\cP^*$, thus concluding the proof of \eqref{l:lyapunov:se:1}.

We now prove \eqref{l:lyapunov:se:2}. Let $\theta= (\mu, \Sigma) \in\cW_M$.
Denote by $(\lambda_i(\Sigma))_{i \leq d}$ the eigenvalues of $\Sigma$. Since
$\Sigma$ is symmetric, there exist $d \times d$ matrices $Q_\theta,
\Lambda_\theta$ such that $\Sigma= Q_\theta\Lambda_\theta
Q_\theta^T$, $Q_\theta$ is orthogonal, and
$\Lambda_\theta=\diag(\lambda_i(\Sigma))$. Then
%
\begin{eqnarray}
\label{eq:tool:Wcompact:1} 2M&\geq& 2w(\theta) \geq- 2\int\log\cN (x|\theta)
\pi_\theta(x) \,\mathrm{d}x
\nonumber
\\
& = & d\log(2\uppi) + \log\det\Sigma+ (\mup-\mu)^T
\Sigma^{-1}(\mup-\mu) + \operatorname{Trace} \bigl(\Sigma^{-1}
\Sigmap \bigr)
\\
&\geq& \sum_{i=1}^d \log
\lambda_{i}(\theta) + 0 + \operatorname{Trace} \bigl(
\Sigma^{-1}\Sigmap \bigr)
\nonumber
. 
\end{eqnarray}
Set $b_{i}(\theta) = (Q_\theta^T\Sigmap Q_\theta)_{ii}$. Then
%
\begin{equation}
\label{eq:tool:Wcompact:2} \operatorname{Trace} \bigl(\Sigma^{-1} \Sigmap \bigr) =
\operatorname{Trace} \bigl(Q_\theta\Lambda_\theta^{-1}Q_\theta^T
\Sigmap \bigr) = \operatorname{Trace} \bigl(Q_\theta^T
\Sigmap Q_\theta\Lambda_\theta^{-1} \bigr) = \sum
_{i=1}^d \frac{b_i(\theta)}{\lambda_{i}(\theta)}.
\end{equation}
Therefore, for any $\param\in\cW_M$,
%
\begin{equation}
\label{e:minorant} \sum_{i=1}^d \log
\lambda_{i}(\theta)+ \frac{b_i(\theta)}{\lambda_{i}(\theta)} \leq2 M .
\end{equation}

We now prove that for any $i$, $\inf_{\cW_M}b_i>0$. This property, combined
with \eqref{e:minorant}, will conclude the proof of \eqref{l:lyapunov:se:2}.
Let $\eps>0$ be such that $ 2^d \eps\|\pi\|_\infty\Delta_\pi^{d-1}<
|\cP|$, and for $v\in\{x\in\Rset^d \dvtx  \| x\|= 1\}$,
let
%
\begin{equation}
\label{eqnBeps} B_\eps^v(\theta)= \bigl\{x\in\Supp(\pi)
\cap V_\theta\dvtx  \bigl|\langle x-\mup,v\rangle\bigr|\leq\eps \bigr\}.
\end{equation}
Note that by Assumption~\ref{a:target},
\[
\pi\bigl(B_\eps^v(\theta) \bigr) \leq\|\pi
\|_\infty \Leb\bigl(B_\eps^v(\theta) \bigr)
\leq2^d\eps\|\pi\|_\infty \Delta_\pi^{d-1}.
\]
Then, by definition of $\eps$,
%
\begin{equation}\label{e:borneBande}
\pi \bigl(V_\theta\setminus B_\eps^v(\theta)
\bigr) \geq|\cP|- 2^d\eps\|\pi\|_\infty
\Delta_\pi^{d-1}>0.
\end{equation}
Now, if $(e_i)$ denotes the canonical basis of $\mathbb{R}^d$, then
%
\begin{eqnarray}
\label{eq:tool:Wcompact:3} b_i(\theta) &=& |\cP| e_i^T
Q_\param^T \biggl(\int_{V_\theta} (x-\mup)
(x-\mup)^T\pi(x) \,\mathrm{d}x \biggr) Q_\param e_i
\nonumber
\\
&=& |\cP|\int_{V_\theta} (Q_\param
e_i)^T(x-\mup) (x-\mup)^T Q_\param
e_i \pi(x) \,\mathrm{d}x
\nonumber
\\
&=& |\cP|\int_{V_\theta} \langle x-\mup,
Q_\param e_i\rangle^2\pi(x) \,\mathrm{d}x
 \\
&\geq& |\cP|\int_{V_\theta\setminus B_{\eps}^{Q_\theta
e_i}(\theta)} \langle x-\mup,
Q_\param e_i\rangle^2\pi(x) \,\mathrm{d}x
\nonumber
\\
&\geq& \eps^2 |\cP|\pi \bigl(V_\theta\setminus
B_\eps^{Q_\param e_i}(\theta) \bigr),\nonumber
\end{eqnarray}
where the last inequality follows from the definition \eqref{eqnBeps} of
$B_\eps^{Q_\param e_i}(\theta)$. Thus, by \eqref{e:borneBande},
$b_i(\theta)$
is bounded away from zero on $\cW_M$.

As $w$ is continuous on $\TsetR$, $\{\theta\in\TsetR, w(\theta)\leq M\}
$ is
closed. From \eqref{l:lyapunov:se:2}, \eqref{eq:tool:Wcompact:1} and
Assumption~\ref{a:target}, $\mu\mapsto(\mup-\mu)^T\Sigma^{-1}(\mup-\mu
)$ is
bounded on $\cW_M$. In addition, \eqref{eq:tool:Wcompact:1},
\eqref{eq:tool:Wcompact:2} and \eqref{eq:tool:Wcompact:3} imply that
$\Sigma
\mapsto\log\det\Sigma$ is bounded on $\cW_M$. These properties
combined with
\eqref{l:lyapunov:se:2} imply that $\cW_M$ is bounded. Hence, $\cW_M$ is
compact.
\end{pf}

\subsection{Regularity in \texorpdfstring{$\param$}{} of the Poisson solution}
%
\begin{lemm}
\label{l:poisson:bis}
\begin{enumerate}[(2)]
\item[(1)]  For any $M>0$, there exists $\rho\in(0,1)$
such that for any $x \in\Xset$ and any $\theta\in\cW_M$, $\|
P_\param^n(x,\cdot) - \pi_\param\|_{\operatorname{TV}} \leq2(1-\rho)^n$.
\item[(2)] Under Assumption~\ref{a:target}, for
any $\param\in\TsetR$, there
exists a solution $\hat{H}_\param$ of the Poisson equation, that is, $
\hat{H}_\param-P_{\param}\hat{H}_\param=
H(\cdot,\param) - \pi_{\theta}H(\cdot,\param)$. Furthermore, for any
$M>0$,
%
\begin{equation}\label{e:poissonBounded}
\sup_{\theta\in\cW_M}\sup_{x\in\Xset} \bigl|
\hat{H}_\param(x) \bigr|< \infty.
\end{equation}
\end{enumerate}
\end{lemm}

\begin{pf} (1) It is
sufficient to
prove that there exists $\rho\in(0,1)$ such that for any $x \in\Xset$ and
$\param\in\cW_M$, $P_\param(x, \cdot) \geq\rho\pi_\param$ (see, e.g.,
\cite{MeTw93}, Theorem~16.2.4). By \eqref{e:kernelP}, for any $x\in\Xset
$ and
$A \in\Xfield$, $P_{\param}(x,A) \geq\int_{A \cap\Vt} \alpha_{\param}(x,y)
q_{\param}(x,y) \,\mathrm{d}y$. By Lemma~\ref{l:lyapunov}, there exists $a>0$ such
that for any $(\mu,\Sigma) \in\cW_M$, any $m,z \in\Xset$, and any $P
\in
\P$, we have $\mathcal{N}(P z | m, \Sigma) \geq a$. Thus, for any
$\param\in\cW_M$ and $y \in V_\param$, it holds that
%
\begin{equation}\label{e:kernelMinorant}
\alpha_{\param}(x, y) q_{\param}(x,y)\mathbh{1}_{\Vt}(y)
\geq a |\P| \biggl( 1 \wedge\frac{\pi(y)}{\pi(x)} \biggr) \mathbh
{1}_{\Vt}(y) \geq\frac{a}{\|\pi\|_\infty} \pi_\param(y) .
\end{equation}
Thus, we have $P_{\param}(x,\cdot) \geq\rho\pi_\param$ for any $x \in
\Xset$
and $\param\in\cW_M$ with $\rho= a / \|\pi\|_\infty$.

 (2)
By item (1),
\[
\hat{H}_\param(x) = \sum_n
P_{\param}^n \bigl(H(x,\param)-\pi_{\param}\bigl(H(
\cdot,\param)\bigr) \bigr)
\]
exists and solves the Poisson equation. \eqref{e:poissonBounded}
trivially follows from item (1).
\end{pf}

\begin{lemm}\label{l:voronoiControl}
Let $M>0$ and $\kappa\in(0,1/2)$. Under
Assumption~\ref{a:target}, there exists
$C>0$ such that for any
$\theta\in\cW_M$ and $\theta'\in\TsetR$, it holds that
%
\begin{equation}
\Leb(V_\theta\setminus V_{\theta'}) \leq C \bigl\|\theta-
\theta'\bigr\|^{1-2\kappa}, \label{e:lebLipschitz}
\end{equation}
where $\Leb(A)$ denotes the Lebesgue measure of the set $A$.
\end{lemm}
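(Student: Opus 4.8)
The plan is to cover $\Vt\setminus V_{\theta'}$ by thin tubes around the quadric faces of $\Vt$ and estimate their volume.

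\emph{Reduction to small $\|\theta-\theta'\|$ and to a sublevel set.} Since $\Vt\subseteq\Xset$ and $\Leb(\Xset)<\infty$ (Assumption~\ref{a:target}) while $1-2\kappa>0$, for any fixed $\epsilon_0>0$ the bound \eqref{e:lebLipschitz} holds on $\{\|\theta-\theta'\|\geq\epsilon_0\}$ as soon as $C\geq\Leb(\Xset)\,\epsilon_0^{2\kappa-1}$; so I may assume $\|\theta-\theta'\|<\epsilon_0$, and since $\cW_M$ is a compact subset of the open set $\TsetR$ (Lemma~\ref{l:lyapunov}) I may choose $\epsilon_0$ so that $\theta'$ stays in a fixed compact subset of $\TsetR$ on which all the quantities below are uniformly bounded. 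For $P\in\cP^*$ put $g_{\theta,P}(x)\eqdef L_\theta(Px)-L_\theta(x)$; expanding, $g_{\theta,P}(x)=x^T(P^T\Sigma^{-1}P-\Sigma^{-1})x+2\mu^T\Sigma^{-1}(I-P)x$. By \eqref{def:Vtheta}, $\Vt=\{x\in\Xset: g_{\theta,P}(x)\geq0\ \text{for all }P\in\cP^*\}$, so if $x\in\Vt\setminus V_{\theta'}$ then $0\leq g_{\theta,P}(x)$ and $g_{\theta',P}(x)<0$ for some $P$, whence $0\leq g_{\theta,P}(x)<g_{\theta,P}(x)-g_{\theta',P}(x)$. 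Since $\theta\mapsto g_{\theta,P}$ is locally Lipschitz, uniformly on the bounded set $\Xset\subseteq\{\|x\|\leq\Delta_\pi\}$, there is $K=K(M)$ with $\sup_{x\in\Xset}|g_{\theta,P}(x)-g_{\theta',P}(x)|\leq K\|\theta-\theta'\|$, so with $\eta\eqdef K\|\theta-\theta'\|$,
$$\Vt\setminus V_{\theta'}\subseteq\bigcup_{P\in\cP^*}\{x\in\Xset: |g_{\theta,P}(x)|\leq\eta\}\eqsp,$$
and it suffices to bound $\Leb\{x\in\Xset:|g_{\theta,P}(x)|\leq\eta\}$ by $C'\eta^{1-2\kappa}$, uniformly in $\theta\in\cW_M$.

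\emph{The sublevel-set estimate.} Two features make this work. The linear part of $g_{\theta,P}$ is $b_{\theta,P}=2(I-P^{-1})\Sigma^{-1}\mu$ (permutation matrices satisfy $P^T=P^{-1}$), and as $P^{-1}$ also ranges over $\cP^*$, estimate \eqref{l:lyapunov:se:1} gives $\|b_{\theta,P}\|\geq2\delta_1>0$ on $\cW_M$; the quadratic part $P^T\Sigma^{-1}P-\Sigma^{-1}$ is bounded on $\cW_M$. So the claim reduces to a deterministic fact: for a quadratic $q(x)=x^TAx+b^Tx$ on $\{\|x\|\leq\Delta_\pi\}$ with $\|A\|\leq L$ and $\|b\|\geq2\delta_1$, one has $\Leb\{|q|\leq\eta\}\leq C(\Delta_\pi,L,\delta_1,\kappa)\,\eta^{1-2\kappa}$. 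I would prove this by slicing along $b$: rotating so that $b$ is a positive multiple of $e_1$, Fubini reduces (for each fixed transverse slice) to a one-dimensional quadratic in the first coordinate; where that slice quadratic is non-degenerate the elementary degree-two sublevel bound applies, and one recurses on the dimension, using that purely quadratic forms in the transverse variables have level sets of controlled volume inside a fixed ball.

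\emph{Main obstacle and conclusion.} The delicate point — the main obstacle — is the set where $\nabla q=2Ax+b$ is small, i.e. near the critical points of the quadric, where level sets pile up: there I would use the identity $b^Tx=2q(x)-x^T\nabla q(x)$ together with $\|b\|\geq2\delta_1$ to confine $x$ to a thin slab, and balance the slab thickness against the co-area gain on the complement, where $\|\nabla q\|$ is bounded below and a level set of a quadric meets the fixed ball in bounded $(d-1)$-volume (a line meets a quadric in at most two points unless it lies on it). Optimising the gradient threshold yields the exponent, the worst contributions being of order $\eta\,(\log(1/\eta))^{c}$ for some $c=c(d)$, hence $O(\eta^{1-2\kappa})$ for every $\kappa\in(0,1/2)$. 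Summing over $P\in\cP^*$ and invoking the reduction step of the first paragraph then gives \eqref{e:lebLipschitz}.
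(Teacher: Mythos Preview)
Your reduction is correct and arguably cleaner than the paper's: instead of invoking the intermediate value theorem to place $x$ on the zero set of $g_{\theta+u(\theta'-\theta),P}$ for some $u\in[0,1]$ and then controlling a one-parameter family of quadrics, you observe directly that $x\in\Vt\setminus V_{\theta'}$ forces $0\leq g_{\theta,P}(x)\leq K\|\theta-\theta'\|$ for some $P\in\cP^*$, reducing everything to bounding the sublevel set $\{|q|\leq\eta\}$ of a single quadratic $q(x)=x^TAx+b^Tx$ with $\|b\|\geq2\delta_1$ and $\|A\|\leq L$, where $\eta=K\|\theta-\theta'\|$. Both routes correctly isolate the lower bound on the linear part (from Lemma~\ref{l:lyapunov}) as the decisive structural input.

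The gap is in your volume estimate. The co-area plus slab argument you describe gives: on $\{\|\nabla q\|\geq\tau\}$ the co-area formula together with a uniform $\cH^{d-1}$ bound on quadric sections yields volume $\lesssim\eta/\tau$; on $\{\|\nabla q\|<\tau\}$ the identity $b^Tx=2q(x)-x^T\nabla q(x)$ confines $x$ to a slab of thickness $O((\eta+\Delta_\pi\tau)/\delta_1)$. Balancing at $\tau\sim\sqrt\eta$ gives only $O(\sqrt\eta)$, i.e.\ the lemma for $\kappa\geq1/4$, not the asserted $O(\eta\,(\log(1/\eta))^c)$. A dyadic decomposition in $\tau$ does not help: summing $\min(\eta 2^{k},2^{-k})$ over scales still produces $\sqrt\eta$. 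Your Fubini-along-$b$ sketch does not close the gap either, since after slicing the one-dimensional quadratic has linear coefficient $2A_{1,\cdot}x'+b_1$, which is \emph{not} bounded away from zero, so the slice sublevel bound degenerates to $O(\sqrt\eta)$ on a set of $x'$ that you have not shown to be small; the ``recursion on dimension'' is left unexplained.

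The paper produces the exponent $1-2\kappa$ by a genuinely different mechanism. In polar form $x=tv$ the equation $L_\theta(x)=L_\theta(P^Tx)$ factors as $t\,(q_\theta(v)t-l_\theta(v))=0$, and on the set where $|l_\theta(v)|\geq\|\theta-\theta'\|^\kappa$ the bound $t\leq\Delta_\pi$ forces $|q_\theta(v)|\gtrsim\|\theta-\theta'\|^\kappa$; a direct estimate on the quotient then shows the nontrivial root $l_\theta(v)/q_\theta(v)$ moves by at most $C\|\theta-\theta'\|^{1-2\kappa}$ under perturbation of $\theta$, while the complementary set $\{|l_\theta(x)|<\|\theta-\theta'\|^\kappa\}$ is a slab. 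The essential difference from your scheme is that the paper thresholds on the \emph{linear part} $l_\theta$ rather than on $\|\nabla q\|$, and exploits the radial factorisation of a homogeneous-plus-linear quadratic to convert the lower bound on $|l_\theta|$ into a lower bound on $|q_\theta|$ via the constraint $t\leq\Delta_\pi$ --- this is what avoids the square-root loss inherent in the gradient-threshold balancing.
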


\begin{pf}
We prove that there exist $\bar C, \bar h >0$, such that for any
$\theta\in
\cW_M$ and any $\theta' \in\Theta$ such that $\|\theta- \theta'
\|
\leq\bar h$, $\Leb(V_\theta\setminus V_{\theta'}) \leq\bar C
\|\theta-\theta'\|^{1-2\kappa}$. Note that since $V_\theta\subset
\Xset$ and since $\Xset$ is bounded, there exists $\check C>0$ such that
$\Leb(V_\theta\setminus V_{\theta'}) \leq\check C$. Therefore,
\eqref{e:lebLipschitz} holds with $C = \bar C \vee\check C / \bar
h^{1- 2
\kappa}$.

By Lemma~\ref{l:lyapunov}, $w$ is uniformly continuous on
$\cW_{M+1}$, and there exists $h_0>0$ small enough for which
%
\begin{equation}
\bigl[ \theta\in\cW_M, \theta'\in\Theta, \bigl\|\theta-
\theta'\bigr\|<h_0 \bigr] \Rightarrow\forall u\in[0,1],\quad\quad
\theta+u \bigl(\theta'-\theta \bigr)\in\cW_{M+1}.
\label{e:unifCont}
\end{equation}
Let $\bar h \leq h_0$. Let $\theta=(\mu,\Sigma)\in\cW_M$ and $\theta'
\neq
\theta$ such that $\|\theta- \theta' \|\leq\bar h$.

By definition of the set $V_\vartheta$, for any $x\in\Vt\setminus
V_{\theta'}$,
there exists $P\in\cP^*$ such that $L_{\theta'}(x)-L_{\theta'}(P^Tx)>0$ and
$L_{\theta}(x) - L_{\theta}(P^Tx)\leq0$. Since $\vartheta\mapsto
L_\vartheta(x) - L_\vartheta(P^T x)$ is continuous on $\cW_{M+1}$,
there exists
$u\in[0,1]$ depending on $x,\theta,\theta'$, and $P$ such that
$L_{\theta+u(\theta'-\theta)}(x) - L_{\theta+u(\theta'-\theta)}(P^Tx)=0$.
Therefore,
\[
\Vt\setminus V_{\theta'} \subset\bigcup_{P\in\cP^*}
\mathcal{V}_P,
\]
where
%
\begin{equation}\label{e:zeros}
\mathcal{V}_P=\bigcup_{u\in[0,1]} \cZ
\bigl(L_{\theta+u(\theta'-\theta)}(\cdot) - L_{\theta+u(\theta'-\theta
)} \bigl(P^T\cdot
\bigr) \bigr) \cap\Xset;
\end{equation}
and $\cZ(f)$ denotes the zeros of the function $f$. The proof proceeds by
showing that for any $P\in\cP^*$, $\mathcal{V}_P$ is included in a measurable
set with measure $O ( \|\theta-\theta'\|^{1-2\kappa} )$.

Let $P \in\cP^*$. Let $B(0, \Delta_\pi) = \{y \in\Rset^d \dvtx  \| y
\|\leq
\Delta_\pi\}$, where $\Delta_\pi$ is defined by \ref{eq:def:DeltaPi}.
For any
$x\in B(0, \Delta_\pi)$, define
\begin{eqnarray*}
l_\theta(x)& =&2\mu^T\Sigma^{-1}
\bigl(I-P^T \bigr)x ,
\\
q_\theta(x)&=&x^T \bigl(\Sigma^{-1}-P
\Sigma^{-1}P^T \bigr)x ,
\\
\mathsf{B}_{\theta,
\theta'} &=& \bigl\{ x\in B(0,\Delta_\pi) \dvtx  \bigl|
l_\theta(x) \bigr|\leq\bigl\|\theta-\theta'
\bigr\|^{\kappa} \bigr\}.
\end{eqnarray*}
Denote by $\Sset$ the unit sphere $\{x\in\Rset^d / \| x\|=1\}$. Let
$u \in[0,1]$ and $tv \in\cZ(L_{\theta+u(\theta'-\theta)}(\cdot) -
L_{\theta+u(\theta'-\theta)}(P^T\cdot) ) \cap\Xset$ where $t\in[0,
\Delta_\pi]$ and $v \in\Sset$. Upon noting that for any $\vartheta
\in\cW_{M+1}$,
%
\begin{equation}
L_{\vartheta}(tv) - L_{\vartheta} \bigl(tP^Tv \bigr) = t
\bigl( q_\vartheta(v)t -l_\vartheta(v) \bigr) \label{e:polynome},
\end{equation}
we consider several cases:
\begin{enumerate}[(iii)]
\item[(i)]  $tv\in\mathsf{B}_{\theta
,\theta'}$.
\item[(ii)]  $tv \notin\mathsf{B}_{\theta
,\theta'}$
and $q_{\theta+u(\theta'-\theta)}(v)=0$. Then, by \eqref{e:polynome},
$l_{\theta+ u(\theta'-\theta)}(tv) =0$ which implies that $tv\in
\mathsf{B}_{\theta,\theta'}$. This yields a contradiction.
\item[(iii)]  $tv\notin\mathsf{B}_{\theta
,\theta'}$ and
$q_{\theta+u(\theta'-\theta)}(v)\neq0$. Then $t \neq0$ and, by \eqref
{e:polynome},
%
\begin{equation}
t = \frac{l_{\theta+u(\theta'-\theta)}(v)}{q_{\theta+u(\theta'-\theta
)}(v)} . \label{eq:definitionOft}
\end{equation}
Since we assumed $t\in[0,\Delta_\pi]$, this ratio is positive. In
order to
characterize the point $tv$, additional notations are required. First, note
that by Lemma~\ref{l:lyapunov}, there exists $C_1>0$ such that for any
$\tilde
\theta=(\tilde\mu, \tilde\Sigma) \in\cW_{M+1}$,
\[
\|\tilde\theta- \theta\|\leq h_0 \Rightarrow\bigl\|\tilde
\Sigma^{-1}-\Sigma^{-1}\bigr\|\leq C_1 \|\tilde
\Sigma-\Sigma\|.
\]
Thus, there exists $C_2>0$ such that for any $\tilde\theta\in\cW_{M+1}$,
$\|\tilde\theta-\theta\|\leq h_0$, and for any $x\in B(0,\Delta
_\pi)$,
%
\begin{eqnarray}\label{l:voronoiControl:tool1}
\bigl| l_{\tilde\theta}(x)-l_\theta(x)\bigr|&=& 2 \bigl\llvert
\mu^T \bigl[\tilde\Sigma^{-1}-\Sigma^{-1} \bigr]
\bigl(I-P^T \bigr)x +(\tilde\mu-\mu)^T \tilde
\Sigma^{-1} \bigl(I-P^T \bigr)x \bigr\rrvert\quad\quad
\nonumber
\\[-8pt]\\[-8pt]
&\leq& C_2 \|\tilde\theta-\theta\|. \nonumber
\end{eqnarray}
Note that since $x,\mu\in B(0,\Delta_\pi)$, $C_2$ does not depend on
$x$ and
$\theta$. Similarly, there exists $C_3>0$ such that for $x\in B(0,\Delta
_\pi)$
and $\tilde\theta\in\cW_{M+1}$ satisfying $\|\tilde
\theta-\theta\|\leq h_0$,
%
\begin{equation}
\label{eq:l:voronoiControl:tool2} \bigl| q_{\tilde\theta}(x) - q_\theta (x)\bigr|\leq
C_3 \|\tilde\theta- \theta\|.
\end{equation}
We can assume without loss of generality that $ \bar h$ is small enough
so that
%
\begin{equation}
\label{eq:l:voronoiControl:tool3}
\bigl\|\theta-\theta'\bigr\|\leq \bar h \Rightarrow
\bigl\|\theta-\theta'\bigr\|^{\kappa} - ( C_2 + 2
C_3 \Delta_\pi) \bigl\|\theta- \theta'\bigr\|
\geq\tfrac{1}{2} \bigl\| \theta- \theta'\bigr\|^{\kappa} .
\end{equation}
We now distinguish three subcases.
\begin{enumerate}[(c)]
\item[(a)]  $v\in\mathsf{B}_{\theta,\theta'}$.
\item[(b)]  $v\notin\mathsf{B}_{\theta
,\theta'}$
and $q_\theta(v)\neq0$. Since $t \in[0,\Delta_\pi]$,
\eqref{eq:definitionOft} implies that
$| q_{\theta+u(\theta'-\theta)}(v)|\geq
| l_{\theta+u(\theta'-\theta)}(v)|/\Delta_\pi$. Since $v \notin
\mathsf{B}_{\theta,\theta'}$, $| l_\theta(v) |\geq\|
\theta-\theta'\|^{\kappa}$ and by using \eqref{l:voronoiControl:tool1},
\[
| l_{\theta+u(\theta'-\theta)}|\geq\bigl| l_\theta(v)\bigr|- \bigl\llvert
l_{\theta+u(\theta'-\theta)}- l_\theta(v) \bigr\rrvert\geq\bigl\|\theta-
\theta'\bigr\|^{\kappa}-C_2\bigl\|\theta-
\theta'\bigr\|.
\]
Hence, it holds that $| q_{\theta+u(\theta'-\theta)}(v)|\geq
(\|
\theta-\theta'\|^{\kappa}-C_2\|\theta-\theta'\|)/\Delta_\pi$,
and, by
\eqref{eq:l:voronoiControl:tool2}, we have $| q_\theta(v)| \geq
| q_{\theta+u(\theta'-\theta)}(v)| - C_3 \| \theta- \theta'\|
$. These inequalities
together with \eqref{l:voronoiControl:tool1} and \eqref
{eq:l:voronoiControl:tool3}
lead to
\[
\biggl\llvert t - \frac{l_\theta(v)}{q_\theta(v)}\biggr\rrvert = \biggl\llvert
\frac{l_{\theta+u(\theta'-\theta)}(v)}{q_{\theta+u(\theta'-\theta)}(v)} -
 \frac{l_\theta(v)}{q_\theta(v)}\biggr\rrvert \leq C_4\bigl\|
\theta-\theta'\bigr\|^{1-2\kappa},
\]
for some $C_4>0$.
\item[(c)]  $v\notin\mathsf{
B}_{\theta,\theta'}$ and $q_{\theta}(v)= 0$. Then by
\eqref{l:voronoiControl:tool1} and \eqref{eq:l:voronoiControl:tool2},
\[
t \geq\frac{\|
\theta-\theta'\|^\kappa-C_2 \|
\theta-\theta'\|}{C_3\|
\theta-\theta'\|} \geq2 \Delta_\pi,
\]
which is in contradiction with the assumption that $t \leq\Delta_\pi$.
\end{enumerate}
\end{enumerate}

As a conclusion, we have just proved that $\mathcal{V}_P$ is included
in the
union of three sets defined by $\mathsf{B}_{\theta,\theta'}$
(case (i)), by $\{tv \dvtx  t\in
[0,\Delta_\pi],v\in\Sset\cap\mathsf{B}_{\theta,\theta'}\}$
(case (iii)(a)), and by
\[
\biggl\{tv : v\in\Sset, v\notin\mathsf{B}_{\theta,\theta'}, q_\theta(v)
\neq0, 0 \leq t \leq\Delta_\pi, \biggl\llvert t - \frac{l_\theta(v)}{q_\theta(v)}
\biggr\rrvert\leq C_4\bigl\|\theta- \theta'
\bigr\|^{1-2\kappa} \biggr\}
\]
(case (iii)(c)). This concludes the first step.\def\thefigure{7}\vadjust{\goodbreak}
%
\begin{figure}[b]

\includegraphics{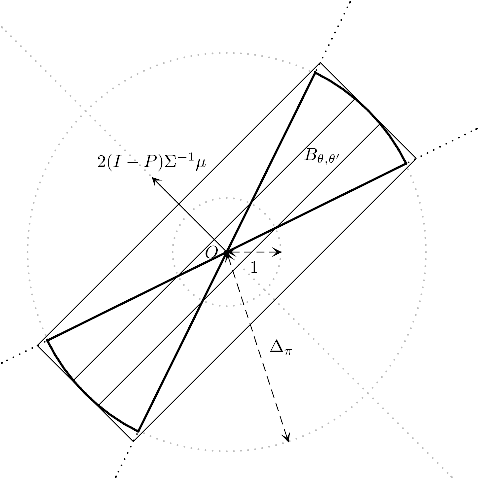}

\caption{Bounding the measure of the set $\cV_P^{(2)}$.}
\label{f:band}
\end{figure}

The second step consists in computing an upper bound for the Lebesgue
measure of
each of these three sets. For simplifying the presentation, we detail
the case
$d=2$ and use polar coordinates $(\rho,\phi)$; the argument remains
valid when
$d>2$ using generalized spherical coordinates. Define $t_\theta(\phi)=
l_\theta(\mathrm{e}^{\mathrm{i}\phi})/q_\theta(\mathrm{e}^{\mathrm{i}\phi})$. Rephrasing the conclusion of
the first
step, we have $\mathcal{V}_P \subset\bigcup_{\ell=1}^3 \mathcal
{V}_P^{(\ell)}$
with
\begin{eqnarray*}
\mathcal{V}_P^{(1)} &=& \mathsf{B}_{\theta,\theta'} ,
\\
\mathcal{V}_P^{(2)} &=& \bigl\{ (\rho,\phi) / \rho\in[0,
\Delta_\pi], \mathrm{e}^{\mathrm{i}\phi}\in\mathsf{B}_{\theta,\theta'} \bigr\} ,
\\
\mathcal{V}_P^{(3)} &=& \bigl\{(\rho,\phi) /
\mathrm{e}^{\mathrm{i}\phi}\notin\mathsf{B}_{\theta,\theta'}, q_\theta
\bigl(\mathrm{e}^{\mathrm{i}\phi} \bigr)\neq0, 0 \leq\rho\leq\Delta_\pi, \bigl
\llvert\rho- t_\theta(\phi) \bigr\rrvert\leq C_4\bigl\| \theta-
\theta'\bigr\|^{1-2\kappa} \bigr\} .
\end{eqnarray*}
\normalsize These sets are Borel sets. By definition of $\cW_M$,
$l_\theta$ is not identically
zero, and thus
\[
\Leb \bigl(\mathcal{V}_P^{(1)} \bigr) = \Leb(
\mathsf{B}_{\theta,\theta'})\leq2 \Delta_\pi\frac{\|\theta-\theta
'\|^{1-2\kappa}}{\| 2 \mu^t \Sigma^{-1} (I - P^T)
\|} \leq
C_5 \bigl\|\theta-\theta'\bigr\|^{1-2\kappa}
\]
for some $C_5>0$ as a consequence of Lemma~\ref{l:lyapunov}. For
$\mathcal{V}_P^{(2)}$, note that it is upper bounded by the reunion of
the two
circular sectors in bold lines in Figure~\ref{f:band}. This area is easily
bounded by the area of the outer rectangle, which is proportional to
$\|
\theta-\theta'\|^{1-2\kappa}$. Finally,
\[
\Leb \bigl(\mathcal{V}_P^{(3)} \bigr) = \int
_0^{2\uppi} \biggl[\frac{\rho^2}{2}
\biggr]_{0\vee(t_\theta(\phi)-C_4\|
\theta-\theta'\|^{1-2\kappa})} ^{\Delta_\pi\wedge
(t_\theta(\phi) + C_4\|\theta-\theta'\|^{1-2\kappa})}
\mathbh {1}_{q_\theta(\mathrm{e}^{\mathrm{i} \phi}) \neq0} \,\mathrm{d}\phi.
\]
We can assume without loss of generality that $\bar h$ is small enough
so that
$2 C_4 \bar h^{1-2\kappa} < \Delta_\pi$. Therefore, we can partition
$[0, 2\uppi]
=\cA\cup\cB\cup\cC$, where
\begin{eqnarray*}
\cA&=& \bigl\{ \phi\in[0,2\uppi] / t_\theta(\phi) - C_4 \bigl\|
\theta- \theta'\bigr\|^{1-2\kappa} \geq0 \mbox{ and }
t_\theta(\phi) + C_4\bigl\|\theta- \theta'
\bigr\|^{1-2\kappa} \leq\Delta_\pi \bigr\},
\\
\cB&=& \bigl\{ \phi\in[0,2\uppi] / t_\theta(\phi) - C_4 \bigl\|
\theta- \theta'\bigr\|^{1-2\kappa} \geq0 \mbox{ and }
t_\theta(\phi) + C_4\bigl\|\theta- \theta'
\bigr\|^{1-2\kappa} \geq\Delta_\pi \bigr\},
\\
\cC&=& \bigl\{ \phi\in[0,2\uppi] / t_\theta(\phi) - C_4 \bigl\|
\theta- \theta'\bigr\|^{1-2\kappa} \leq0 \mbox{ and } 0 \leq
t_\theta(\phi) + C_4\bigl\|\theta- \theta'
\bigr\|^{1-2\kappa} \leq\Delta_\pi \bigr\}.
\end{eqnarray*}
This yields
%
\begin{eqnarray}
\hspace*{-8pt}\Leb \bigl(\mathcal{V}_P^{(3)} \bigr) &\leq&
2C_4 \int_\cA t_\theta(\phi) \bigl\|
\theta-\theta'\bigr\|^{1-2\kappa} \,\mathrm{d}\phi+ \frac{1}{2}\int
_\cB \bigl(\Delta_\pi^2 -
\bigl(t_\theta(\phi)- C_4\bigl\|\theta-\theta'
\bigr\|^{1-2\kappa} \bigr)^2 \bigr) \,\mathrm{d}\phi\hspace*{8pt}
\nonumber
\\[-8pt]\\[-8pt]
&&{}+ \frac{1}{2}\int_\cC \bigl(t_\theta(
\phi) + C_4\bigl\|\theta-\theta'\bigr\|^{1-2\kappa}
\bigr)^2 \,\mathrm{d}\phi\nonumber
\\
&\leq& C_6 \bigl\|\theta-\theta'\bigr\|^{1-2\kappa} ,
\label{e:measure}
\end{eqnarray}
for some $C_6>0$, since on $\cA$, $0 \leq t_\theta(\phi) \leq\Delta_\pi
$, on
$\cB$, $(t_\theta(\phi) - C_4\|\theta-\theta'\|^{1-2\kappa})^2
\geq
(\Delta_\pi- 2 C_4\|\theta-\theta'\|^{1-2\kappa})^2$, and on
$\cC$,
$|t_\theta(\phi)| \leq C_4\|\theta-\theta'\|^{1-2\kappa}$.

This concludes the proof.
\end{pf}

\begin{lemm}[(Regularity in $\param$ of the invariant distribution $\pi
_\param$)]\label{l:Regularity:theta:pi}
Let $M>0$ and $\kappa\in(0,1/2)$. Under Assumption~\ref{a:target}, there
exists $C>0$ such that for any $\theta\in\cW_M$ and $\theta'\in\TsetR$,
\[
\|\pi_{\param} - \pi_{\param'} \|_{\TV}\leq C\bigl\|
\theta-\theta'\bigr\|^{1-2\kappa}.
\]
\end{lemm}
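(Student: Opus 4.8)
The plan is to reduce the total variation distance between the two truncated targets to the Lebesgue measure of the symmetric difference of the associated Voronoi cells, and then to invoke Lemma~\ref{l:voronoiControl}. Since $\pi_\param = |\cP|\,\IND_{V_\param}\pi$ and both $\pi_\param$ and $\pi_{\param'}$ are probability measures, I would start from
$$
\Vert \pi_\param - \pi_{\param'}\Vert_{\TV} \le |\cP| \int_\Xset \bigl| \IND_{V_\param}(x) - \IND_{V_{\param'}}(x)\bigr|\,\pi(x)\,\,dx = |\cP|\,\bigl( \pi(V_\param\setminus V_{\param'}) + \pi(V_{\param'}\setminus V_\param)\bigr)\eqsp,
$$
where the precise multiplicative constant depends only on the chosen normalization of $\Vert\cdot\Vert_{\TV}$ and is immaterial. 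Since $\pi$ is bounded (Assumption~\ref{a:target}), $\pi(A) \le \Vert\pi\Vert_\infty \Leb(A)$, so it remains to bound both Lebesgue measures $\Leb(V_\param\setminus V_{\param'})$ and $\Leb(V_{\param'}\setminus V_\param)$ by a multiple of $\Vert\param-\param'\Vert^{1-2\kappa}$.

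For $\Leb(V_\param\setminus V_{\param'})$, since $\param\in\cW_M$ and $\param'\in\TsetR$, Lemma~\ref{l:voronoiControl} applies verbatim and yields $\Leb(V_\param\setminus V_{\param'}) \le C_0\,\Vert\param-\param'\Vert^{1-2\kappa}$ for a constant $C_0>0$ depending only on $M$ and $\kappa$. For $\Leb(V_{\param'}\setminus V_\param)$ one cannot simply exchange the roles of $\param$ and $\param'$ in Lemma~\ref{l:voronoiControl}, since $\param'$ need not belong to $\cW_M$; this is the point where I would use that every Voronoi cell has the same Lebesgue volume. Indeed, by Lemma~\ref{l:lemmaFromAistats}(1) the cells $\{PV_\vartheta : P\in\cP\}$ cover $\Xset$ with pairwise intersections of zero Lebesgue measure, and $\Leb(PV_\vartheta) = \Leb(V_\vartheta)$ since $P$ is orthogonal; hence $\Leb(V_\vartheta) = \Leb(\Xset)/|\cP|$ for every $\vartheta\in\TsetR$. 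Therefore
$$
\Leb(V_{\param'}\setminus V_\param) = \Leb(V_{\param'}) - \Leb(V_\param\cap V_{\param'}) = \Leb(V_\param) - \Leb(V_\param\cap V_{\param'}) = \Leb(V_\param\setminus V_{\param'}) \le C_0\,\Vert\param-\param'\Vert^{1-2\kappa}\eqsp.
$$

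Combining the two bounds gives $\Vert \pi_\param - \pi_{\param'}\Vert_{\TV} \le 2|\cP|\,\Vert\pi\Vert_\infty\,C_0\,\Vert\param-\param'\Vert^{1-2\kappa}$, i.e.\ the claim with $C = 2|\cP|\,\Vert\pi\Vert_\infty\,C_0$. The only genuine obstacle is the asymmetry of Lemma~\ref{l:voronoiControl} — it controls $\Leb(V_\param\setminus V_{\param'})$ only when $\param\in\cW_M$ — which the equal-volume identity above circumvents; everything else is elementary.
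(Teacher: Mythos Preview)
Your proof is correct and follows essentially the same approach as the paper: bound the total variation by $\pi(V_\param\setminus V_{\param'}) + \pi(V_{\param'}\setminus V_\param)$, use the equal-measure property of Voronoi cells to reduce the second term to the first, and then apply Lemma~\ref{l:voronoiControl}. The only cosmetic difference is that the paper applies the equal-measure identity directly at the level of $\pi$ (using $\pi(V_\vartheta)=1/|\cP|$ from Lemma~\ref{l:lemmaFromAistats}(2), hence $\pi(V_{\param'}\setminus V_\param)=\pi(V_\param\setminus V_{\param'})$) before invoking $\Vert\pi\Vert_\infty$, whereas you first pass to Lebesgue measure and then use $\Leb(V_\vartheta)=\Leb(\Xset)/|\cP|$; both routes yield the same constant $2|\cP|\,\Vert\pi\Vert_\infty\,C_0$.
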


\begin{pf}
By definition of the total variation,
\[
\|\pi_\theta- \pi_{\theta'}\|_{\TV} \leq| \cP
| \bigl(\pi(V_\theta\setminus V_{\theta'}) +
\pi(V_{\theta'}\setminus V_{\theta}) \bigr).
\]
Since
\[
V_{\theta'} \setminus V_{\theta} = V_\theta\setminus
(V_{\theta}\cap V_{\theta'} ),\quad\quad V_\theta\setminus
V_{\theta'} = V_{\theta} \setminus (V_{\theta}\cap
V_{\theta'} ),
\]
it holds that
\[
\pi(V_{\theta'}\setminus V_{\theta}) = \frac{1}{|\cP|} - \pi(
V_{\theta}\cap V_{\theta'}) = \pi(V_\theta\setminus
V_{\theta'}),
\]
where
we used Lemma~\ref{l:lemmaFromAistats}. Then, by Assumption~\ref
{a:target} and Lemma~\ref{l:voronoiControl},
there exists $C>0$ such that for any $\theta\in\cW_M$ and $\theta'\in
\TsetR$,
\[
\|\pi_\theta- \pi_{\theta'}\|_{\TV} \leq2\|\pi
\|_\infty \Leb(V_{\theta}\setminus V_{\theta'}) \leq C
\bigl\| \theta-\theta'\bigr\|^{1-2\kappa}.
\]
\upqed\end{pf}

\begin{lemm}[(Regularity in $\param$ of the kernels $P_\param$)]\label{l:Regularity:theta:kernel}
Let $M>0$ and $\kappa\in(0,1/2)$. Under Assumption~\ref{a:target}, there
exists $C>0$ such that for any $\theta\in\cW_M$ and $\theta'\in\cW_{M+1}$,
\[
\bigl\| P_{\param}(x,\cdot) - P_{\param'}(x,\cdot)
\bigr\|_{\TV}\leq C\bigl\|\theta-\theta'\bigr\|^{1-2\kappa}.
\]
\end{lemm}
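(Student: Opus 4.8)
The plan is to reduce the total variation distance to a single integral over $\Xset$ and then control its pieces with Lemma~\ref{l:voronoiControl} and the compactness provided by Lemma~\ref{l:lyapunov}. Write $\rho_\param(x)\eqdef\int_{V_\param}\bigl(1-\alpha_\param(x,z)\bigr)q_\param(x,z)\,dz = 1-\int_{V_\param}\alpha_\param(x,z)q_\param(x,z)\,dz\geq 0$ for the rejection probability, the second equality by Proposition~\ref{prop:poisson}(1). The measure $P_\param(x,\cdot)$ then splits into an absolutely continuous part with density $y\mapsto\alpha_\param(x,y)q_\param(x,y)\IND_{V_\param}(y)$ plus an atom of mass $\rho_\param(x)$ at $x$; as these densities do not charge $\{x\}$ and the atom sits at the same point for $\param$ and $\param'$,
\[
\Vert P_\param(x,\cdot)-P_{\param'}(x,\cdot)\Vert_{\TV}\leq\int_{\Xset}\bigl\vert\alpha_\param(x,y)q_\param(x,y)\IND_{V_\param}(y)-\alpha_{\param'}(x,y)q_{\param'}(x,y)\IND_{V_{\param'}}(y)\bigr\vert\,dy+\vert\rho_\param(x)-\rho_{\param'}(x)\vert,
\]
and $\vert\rho_\param(x)-\rho_{\param'}(x)\vert$ is in turn dominated by that integral. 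So it suffices to bound $I(x)\eqdef\int_{\Xset}\bigl\vert\alpha_\param q_\param\IND_{V_\param}-\alpha_{\param'}q_{\param'}\IND_{V_{\param'}}\bigr\vert$ uniformly over $x\in\Xset$.

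First I would record uniform bounds on $q_\param$. Since $\cW_{M+1}$ is a compact subset of $\TsetR\subset\Rset^d\times\cC_d^+$ (Lemma~\ref{l:lyapunov}) and $\Xset$ is bounded, the matrix $c\Sigma$ stays in a compact subset of $\cC_d^+$, so there are constants $0<a\leq b<\infty$ and $L>0$, depending only on $M$, $c$ and $\Xset$, with $a\leq q_\param(x,y)\leq b$ and $\vert q_\param(x,y)-q_{\param'}(x,y)\vert\leq L\Vert\param-\param'\Vert$ for all $x,y\in\Xset$ and $\param,\param'\in\cW_{M+1}$ — the Lipschitz bound because each map $(\Sigma,x,y)\mapsto\cN(Py\vert x,c\Sigma)$, $P\in\cP$, is $\cC^1$ on the relevant compact set and $q_\param$ depends on $\param$ only through $\Sigma$. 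The key step is then the pointwise estimate $\sup_{x,y\in\Xset}\vert\alpha_\param(x,y)q_\param(x,y)-\alpha_{\param'}(x,y)q_{\param'}(x,y)\vert\leq C_0\Vert\param-\param'\Vert$ over $\param,\param'\in\cW_{M+1}$. For $\pi(x)>0$ one has $\alpha_\param(x,y)q_\param(x,y)=q_\param(x,y)\wedge\frac{\pi(y)}{\pi(x)}q_\param(y,x)$, and the possible unboundedness of $\pi(y)/\pi(x)$ over $\Xset$ is handled by a dichotomy: if $\pi(y)/\pi(x)>2b/a$, then $\frac{\pi(y)}{\pi(x)}q_\param(y,x)\geq\frac{2b}{a}\,a>b\geq q_\param(x,y)$, and likewise for $\param'$, so both minima equal $q_\param(x,y)$ resp.\ $q_{\param'}(x,y)$ and the difference is at most $L\Vert\param-\param'\Vert$; otherwise $\pi(y)/\pi(x)\leq 2b/a$, and $\vert\min(s,t)-\min(s',t')\vert\leq\max(\vert s-s'\vert,\vert t-t'\vert)$ together with the Lipschitz bound yields at most $(2b/a)L\Vert\param-\param'\Vert$. (The degenerate cases $\pi(x)=0$ or $\pi(y)=0$ reduce to $\vert q_\param-q_{\param'}\vert$ or to $0$.) Hence $C_0=(1+2b/a)L$ works.

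Finally I would assemble the bound by splitting $I(x)$ according to whether $y$ lies in $V_\param\cap V_{\param'}$, in $V_\param\setminus V_{\param'}$, or in $V_{\param'}\setminus V_\param$. On $V_\param\cap V_{\param'}$ the integrand is at most $C_0\Vert\param-\param'\Vert$, contributing at most $C_0\,\Leb(\Xset)\,\Vert\param-\param'\Vert$; on each symmetric-difference piece the integrand is at most $b$, contributing at most $b\bigl(\Leb(V_\param\setminus V_{\param'})+\Leb(V_{\param'}\setminus V_\param)\bigr)$. Lemma~\ref{l:voronoiControl} applied at level $M$ to the pair $(\param,\param')$, and at level $M+1$ to the pair $(\param',\param)$ — the second application being legitimate precisely because $\param'\in\cW_{M+1}$ while $\cW_M\subset\cW_{M+1}\subset\TsetR$, which is exactly what the hypothesis $\param'\in\cW_{M+1}$ is for — bounds both Lebesgue measures by $O(\Vert\param-\param'\Vert^{1-2\kappa})$. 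Collecting terms, $\Vert P_\param(x,\cdot)-P_{\param'}(x,\cdot)\Vert_{\TV}\leq C_1\bigl(\Vert\param-\param'\Vert+\Vert\param-\param'\Vert^{1-2\kappa}\bigr)$ uniformly in $x\in\Xset$, and since $\cW_{M+1}$ is bounded and $1-2\kappa<1$ the first summand is absorbed into the second, which gives the claim. The only genuinely delicate point is the pointwise estimate on $\alpha_\param q_\param$: the ratio $\pi(y)/\pi(x)$ can blow up near $\partial\Xset$ and must be neutralized using the two-sided bound on $q_\param$ valid on the compact set $\cW_{M+1}$; the rest is bookkeeping on top of Lemmas~\ref{l:lyapunov} and~\ref{l:voronoiControl}.
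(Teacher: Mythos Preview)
Your argument is correct and reaches the same conclusion with the same inputs (Lemma~\ref{l:lyapunov} for the uniform two--sided bounds and Lipschitz control of the Gaussians on $\cW_{M+1}$, and Lemma~\ref{l:voronoiControl} for the Voronoi symmetric differences), but the route is genuinely different from the paper's.

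The paper decomposes the integrand according to the four acceptance/rejection regions $\cA_\param(x)\cap\cA_{\param'}(x)$, $\cR_\param(x)\cap\cR_{\param'}(x)$, $\cA_\param(x)\cap\cR_{\param'}(x)$, $\cR_\param(x)\cap\cA_{\param'}(x)$ and treats each piece separately; the mixed pieces $\Delta^3,\Delta^4$ require a small additional argument comparing the ratios $q_\param(y,x)/q_\param(x,y)$ and $q_{\param'}(y,x)/q_{\param'}(x,y)$. You bypass this entirely by observing that $\alpha_\param(x,y)q_\param(x,y)=q_\param(x,y)\wedge\bigl(\tfrac{\pi(y)}{\pi(x)}\,q_\param(y,x)\bigr)$ and proving a direct pointwise Lipschitz bound on this minimum via the dichotomy on $\pi(y)/\pi(x)$; the inequality $|s\wedge t-s'\wedge t'|\leq\max(|s-s'|,|t-t'|)$ absorbs all the case analysis. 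Your three--region split by $V_\param\cap V_{\param'}$, $V_\param\setminus V_{\param'}$, $V_{\param'}\setminus V_\param$ is then the natural one. The net effect is a shorter proof: the paper's $\Delta^1$--$\Delta^4$ bookkeeping is replaced by a single Lipschitz estimate plus one invocation of Lemma~\ref{l:voronoiControl} in each direction. Your remark that the second application of Lemma~\ref{l:voronoiControl} at level $M+1$ is precisely what the hypothesis $\param'\in\cW_{M+1}$ is for is on point.
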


\begin{pf}
From the
definition of the transition kernel $P_{\param}$, we have
%
\begin{eqnarray}\label{e:sumDeltas}
&&\bigl| P_{\param}f(x) - P_{\param'}f(x)\bigr|\nonumber
\\
&&\quad\leq \biggl\llvert
\int f(y) \bigl( \alpha_{\param}(x,y)q_{\theta}(x, y)
\mathbh{1}_{V_{\theta}}(y) - \alpha_{\param'}(x,y)q_{\theta'}(x, y)
\mathbh{1}_{V_{\theta'}}(y) \bigr) \,\mathrm{d}y \biggr\rrvert
\nonumber
\\
&&\quad\quad{} + \bigl| f(x) \bigr| \biggl\llvert\int \bigl( \alpha_{\param
'}(x,y)q_{\theta'}(x,
y)\mathbh{1}_{V_{\theta'}}(y) - \alpha_{\param}(x,y)q_{\theta}(x,
y)\mathbh{1}_{V_{\theta}}(y) \bigr) \,\mathrm{d}y \biggr\rrvert
\\
&&\quad\leq 2\| f \|_{\infty} \int\bigl|\alpha_{\param}(x,y)q_{\theta}(x,
y)\mathbh{1}_{V_{\theta}}(y) - \alpha_{\param'}(x,y)q_{\theta'}(x,
y)\mathbh{1}_{V_{\theta'}}(y) \bigr| \,\mathrm{d}y
\nonumber
\\
&&\quad= 2\| f \|_{\infty} \sum_{i=1}^4
\Delta^i_{\param,\param'}(x) ,\nonumber
\end{eqnarray}
where
\begin{eqnarray*}
\Delta^1_{\param,\param'}(x) &=& \int_{\cA_\param(x)\cap\cA_{\param
'}(x)} \bigl|
\alpha_{\param}(x,y)q_{\theta}(x, y)\mathbh{1}_{V_{\theta}}(y) -
\alpha_{\param'}(x,y)q_{\theta'}(x, y)\mathbh{1}_{V_{\theta'}}(y)
\bigr| \,\mathrm{d}y,
\\
\Delta^2_{\param,\param'}(x) &=& \int_{\cR_{\param}(x)\cap\cR_{\param
'}(x)} \bigl|
\alpha_{\param}(x,y)q_{\theta}(x, y)\mathbh{1}_{V_{\theta}}(y) -
\alpha_{\param'}(x,y)q_{\theta'}(x, y)\mathbh{1}_{V_{\theta'}}(y)
\bigr| \,\mathrm{d}y,
\\
\Delta^3_{\param,\param'}(x) &=& \int_{\cA_\param(x)\cap\cR_{\param
'}(x)} \bigl|
\alpha_{\param}(x,y)q_{\theta}(x, y)\mathbh{1}_{V_{\theta}}(y) -
\alpha_{\param'}(x,y)q_{\theta'}(x, y)\mathbh{1}_{V_{\theta'}}(y)
\bigr| \,\mathrm{d}y,
\\
\Delta^4_{\param,\param'}(x) &=& \int_{\cR_\param(x)\cap\cA_{\param
'}(x)} \bigl|
\alpha_{\param}(x,y)q_{\theta}(x, y)\mathbh{1}_{V_{\theta}}(y) -
\alpha_{\param'}(x,y)q_{\theta'}(x, y)\mathbh{1}_{V_{\theta'}}(y)
\bigr| \,\mathrm{d}y
\end{eqnarray*}
and
\[
\cA_\param(x)=\bigl\{y \dvtx  \alpha_\theta(x,y) = 1\bigr\},\quad\quad
\mathcal{R}_\param(x)=\bigl\{y \dvtx  \alpha_\theta(x,y)< 1\bigr\} .
\]
%
We now upper bound each term
%
\begin{eqnarray}\label{e:dec3}
\Delta^1_{\param,\param'}(x) &=& \int_{\cA_\theta(x)\cap\cA_{\theta
'}(x)} \biggl|
\sum_{Q\in\mathcal{P}} \bigl( \mathbh{1}_{V_\param}(y)
\mathcal{N}(Qy| x,\Sigma) - \mathbh{1}_{V_{\param'}}(y)\mathcal {N} \bigl(Qy
| x,\Sigma' \bigr) \bigr) \biggr| \,\mathrm{d}y
\nonumber
\\
&\leq& \int\bigl|\mathbh{1}_{V_\param}(y)-\mathbh{1}_{V_{\param'}}(y)
\bigr|\sum_{Q\in\mathcal{P}} \mathcal{N}(Qy| x,\Sigma)
\\
&&{}+ \mathbh{1}_{V_{\param'}}(y)\sum_{Q\in\cP} \bigl|
\mathcal{N}(Qy| x,\Sigma) - \mathcal{N} \bigl(Qy| x,\Sigma'
\bigr) \bigr| \,\mathrm{d}y.\nonumber
\end{eqnarray}
By Lemma~\ref{l:lyapunov}, there exist $a,b>0$ such that for any
$\param\in\cW_{M+1}$, $m,z \in\Xset$, and $Q \in\P$, we have
%
\begin{equation}\label{e:minMajGaussians}
a\leq\mathcal{N}(Q z | m, c \Sigma) \leq b ,
\end{equation}
so that the first term in the RHS of \eqref{e:dec3} is bounded by
\begin{eqnarray*}
\int \bigl\llvert\mathbh{1}_{V_\param}(y)-\mathbh{1}_{V_{\param'}}(y)
\bigr\rrvert\sum_{Q\in\mathcal{P}} \mathcal{N}(Qy| x,\Sigma)
\,\mathrm{d}y &\leq& |\cP| b\int \bigl\llvert\mathbh{1}_{V_\param}(y)-
\mathbh{1}_{V_{\param'}}(y) \bigr\rrvert \,\mathrm{d}y
\\
&=& |\cP| b \int \bigl(\mathbh{1}_{V_\theta\setminus V_{\theta
'}}(y) +
\mathbh{1}_{V_{\theta'}\setminus V_{\theta}}(y) \bigr) \,\mathrm{d}y
\\
&\leq& C \bigl\|\theta-\theta' \bigr\|^{1-2\kappa},
\end{eqnarray*}
where we used Lemma~\ref{l:voronoiControl}. Let us now consider the
second term
of the right-hand side of \eqref{e:dec3}. Using the uniform continuity
of $w$
on $\cW_{M+1}$ (see Lemma~\ref{l:lyapunov}), there exists $\bar{h}$ small
enough such that
%
\begin{equation}\label{e:segmentSecurity}
\theta\in\cW_M, \quad\quad\| h\|<\bar{h}\Rightarrow\theta+h\in
\cW_{M+1}.
\end{equation}
For any $\theta\in\cW_M$, $\theta'\in\cW_{M+1}$ such that $\|\theta-
\theta' \|\geq\bar h$, there exists $C_1$ such that
\[
\sum_{Q\in\cP} \bigl|\mathcal{N}(Qy| x,\Sigma) -
\mathcal{N}\bigl(Qy| x,\Sigma'\bigr)\bigr |\, \mathrm{d}y \leq
C_1 \bigl\|\theta- \theta'\bigr\|^{1- 2\kappa} .
\]
Assume now that $\theta\in\cW_M$, $\theta'\in\cW_{M+1}$ and
$\|\theta-\theta'\|<\bar{h}$. Denote by
%
\begin{equation}
\Sigma_t=(1-t)\Sigma+t\Sigma'. \label{e:Sigma_t}
\end{equation}
By \eqref{e:segmentSecurity} and \eqref{l:lyapunov:se:2}, $\Sigma_t^{-1}$
exists and $\sup_{t \leq1, \theta\in\cW_M, \theta' \in\cW_{M+1}}
\|
\Sigma_t^{-1} \|< \infty$. We can then write
%
\begin{eqnarray}\label{e:dec3term2}
\bigl|\mathcal{N}(Qy| x,\Sigma) - \mathcal{N} \bigl(Qy| x,
\Sigma' \bigr) \bigr|&=& \int_0^1
\mathcal{N}(Qy| x,\Sigma_t) \biggl\llvert\frac{ \mathrm{d}}{ \mathrm{d}t}\log
\mathcal{N}(Qy| x,\Sigma_t) \biggr\rrvert \,\mathrm{d}t
\nonumber
\\[-8pt]\\[-8pt]
&\leq& b\int_0^1 \biggl\llvert
\frac{ \mathrm{d}}{ \mathrm{d}t}\log\mathcal{N}(Qy| x,\Sigma_t) \biggr\rrvert
\,\mathrm{d}t.\nonumber
\end{eqnarray}
In addition, by Assumption~\ref{a:target}, there exists $C_2$ such that
%
\begin{equation}\label{e:gaussianLipschitz}
\biggl\llvert\frac{  \mathrm{d}}{ \mathrm{d}t}\log\mathcal{N}(Qy| x,\Sigma_t)
\biggr\rrvert= \bigl|(x-Qy)^T \Sigma_t^{-1}
\bigl(\Sigma'-\Sigma \bigr)\Sigma_t^{-1}(x-Qy)
\bigr|\leq C_2 \bigl\|\param-\param'\bigr\| .
\end{equation}
We thus have proved that
\[
\bigl[\theta\in\cW_M, \theta'\in\cW_{M+1},
\bigl\| \theta-\theta'\bigr\|<\bar{h} \bigr]\Rightarrow \bigl|
\mathcal{N}(Qy| x,\Sigma) - \mathcal{N}\bigl(Qy| x,\Sigma'
\bigr) \bigr|\leq C\bigl\|\theta-\theta'\bigr\|.
\]
Therefore, it is established that $ \|\Delta_{\theta,\theta'}^1
\|_\infty\leq
C\|\theta-\theta'\|^{1-2\kappa}$.

Let us consider the second term $\Delta^2_{\param,\param'}(x)$ in the
RHS of \eqref{e:sumDeltas}. Note first that if $x\in\Xset$ and
$y\in\cR_\theta(x)\cap\cR_{\theta'}(x)$, then by
\eqref{e:minMajGaussians}, $\pi(y)/\pi(x) \leq b/a$, so
\begin{eqnarray*}
\Delta^2_{\param,\param'}(x) &=& \int_{\cR_\theta(x)\cap\cR_{\theta'}(x)}
\frac{\pi(y)}{\pi(x)} \biggl\llvert\sum_{Q\in\mathcal{P}} \bigl(
\mathbh{1}_{V_\param}(y)\mathcal{N}(Qx| y,\Sigma) - \mathbh
{1}_{V_{\param'}}(y) \mathcal{N} \bigl(Qx| y,\Sigma' \bigr)
\bigr) \biggr\rrvert \,\mathrm{d}y
\\
&\leq& \frac{b}{a} \int_{\cR_\theta(x)\cap\cR_{\theta'}(x)} \biggl\llvert\sum
_{Q\in\mathcal{P}} \bigl( \mathbh{1}_{V_\param}(y)
\mathcal{N}(Qx| y,\Sigma) - \mathbh{1}_{V_{\param'}}(y)\mathcal {N} \bigl(Qx
| y,\Sigma' \bigr) \bigr) \biggr\rrvert \,\mathrm{d}y
.
\end{eqnarray*}
Therefore, repeating the above discussion for the bound of
$\Delta^1_{\theta,\theta'}(x)$, it is established that $ \|
\Delta_{\theta,\theta'}^2 \|_\infty\leq
C\|\theta-\theta'\|^{1-2\kappa}$.

To\vspace*{1.5pt} deal with $\Delta^3_{\param,\param'}(x)$, first observe that there exists
$C>0$ such that for any $\theta\in\cW_M$, $\theta'\in\cW_{M+1}$, and
$x,y\in\Xset$, we have
%
\begin{equation}\label{e:proposalRatios}
\biggl\llvert\frac{q_\theta(y,x)}{q_\theta(x,y)} - \frac{q_{\theta
'}(y,x)}{q_{\theta'}(x,y)} \biggr\rrvert\leq C\bigl\|
\theta-\theta'\bigr\|,
\end{equation}
because of \eqref{eq:def:qtheta}, \eqref{e:minMajGaussians} and the above
discussion for the upper bound of $\Delta_{\theta,\theta'}^1(x)$. Now let
$y\in\cA_\theta(x)\cap\cR_{\theta'}(x)$, then we have
\[
\frac{\pi(y)q_{\theta'}(y,x)}{\pi(x)q_{\theta'}(x,y)} \leq1 \leq\frac
{\pi(y)q_{\theta}(y,x)}{\pi(x)q_{\theta}(x,y)},
\]
which, combined with \eqref{e:proposalRatios}, yields
\[
1 - C\frac{\pi(y)}{\pi(x)}\bigl\|\theta-\theta'\bigr\|\leq
\frac{\pi(y)q_{\theta
'}(y,x)}{\pi(x)q_{\theta'}(x,y)} \leq1.
\]
Thus,
\begin{eqnarray*}
\Delta^3_{\theta,\theta'}(x) &=& \int_{\cA_\theta(x)\cap\cR_{\theta
'}(x)}
\biggl\llvert q_\theta(x, y)\mathbh{1}_{\Vt}(y) -
\frac{\pi(y)q_{\theta'}(y,x)}{\pi(x)q_{\theta'}(x,y)}q_{\theta
'}(x,y)\mathbh{1}_{V_{\theta'}}(y) \biggr
\rrvert \,\mathrm{d}y
\\
&\leq& \int \biggl( \bigl\llvert q_\theta(x, y)\mathbh{1}_{\Vt}(y)
- q_{\theta'}(x, y)\mathbh{1}_{V_{\theta'}}(y) \bigr\rrvert
\\
&&\hphantom{\int}{}\vee\cdots\vee \biggl| q_{\theta}(x, y)\mathbh{1}_{V_{\theta}}(y) -
q_{\theta'}(x, y)\mathbh{1}_{V_{\theta'}}(y)
\\
&&\hphantom{\int\vee\cdots\vee \biggl|}{}+ C\frac{\pi(y)}{\pi
(x)}\bigl\|
\theta-\theta'\bigr\| q_{\theta'}(x, y)\mathbh{1}_{V_{\theta'}}(y)
\biggr| \biggr) \,\mathrm{d}y . 
\end{eqnarray*}
Therefore, it is established that $ \|\Delta_{\theta,\theta'}^3
\|_\infty\leq C\|\theta-\theta'\|^{1-2\kappa}$.

The upper bound of $\Delta_{\theta,\theta'}^4(x)$ is similar, and thus
its proof
is omitted.
\end{pf}

\begin{lemm}[(Regularity in $\param$ of the solution of the Poisson
equation)]\label{l:poissonRegularity}
Let $M>0$ and $\kappa\in(0,1/2)$. Under Assumption~\ref{a:target}, there
exists $C>0$ such that for any $\theta\in\cW_M$ and $\theta'\in\cW_{M+1}$,
\[
\| P_{\param} \hat{H}_{\param} - P_{\param'}
\hat{H}_{\param'}\|_{\infty}\leq C\bigl\|\theta-
\theta'\bigr\|^{1-2\kappa}.
\]
\end{lemm}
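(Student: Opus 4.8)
The plan is to derive the bound from the regularity results already established --- the uniform minorization and boundedness of the Poisson solution (Items~\ref{l:poisson:bis:item1} and~\ref{l:poisson:bis:item2} above), the H\"older regularity in $\theta$ of the invariant distribution $\pi_\theta$ (Lemma~\ref{l:Regularity:theta:pi}) and of the kernels $P_\theta$ (Lemma~\ref{l:Regularity:theta:kernel}) --- by treating $\hat H_\theta-\hat H_{\theta'}$ as the solution of a perturbed Poisson equation. Fix $\theta\in\cW_M$ and $\theta'\in\cW_{M+1}$ and write $\bar H_\vartheta\eqdef H(\cdot,\vartheta)-\pi_\vartheta\big(H(\cdot,\vartheta)\big)$, so that the Poisson equation at $\vartheta$ reads $(\mathrm{Id}-P_\vartheta)\hat H_\vartheta=\bar H_\vartheta$. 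Recall from the proof of Item~\ref{l:poisson:bis:item2} that $\hat H_\vartheta=\sum_{n\geq 0}P_\vartheta^n\bar H_\vartheta$, hence $\pi_\vartheta\hat H_\vartheta=0$ (using $\pi_\vartheta\bar H_\vartheta=0$ and $\pi_\vartheta P_\vartheta=\pi_\vartheta$), and that $\sup_{\vartheta\in\cW_{M+1}}\|\hat H_\vartheta\|_\infty<\infty$. Since $P_\vartheta\hat H_\vartheta=\hat H_\vartheta-\bar H_\vartheta$, setting $D\eqdef\hat H_\theta-\hat H_{\theta'}$ one has
\begin{equation*}
P_\theta\hat H_\theta-P_{\theta'}\hat H_{\theta'}=D-\big(\bar H_\theta-\bar H_{\theta'}\big),
\end{equation*}
so it suffices to bound $\|D\|_\infty$ and $\|\bar H_\theta-\bar H_{\theta'}\|_\infty$ by $C\|\theta-\theta'\|^{1-2\kappa}$.

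First I would bound $\|\bar H_\theta-\bar H_{\theta'}\|_\infty$. From the explicit expression \eqref{eq:def:H} of $H$ together with the uniform bounds $\inf_{\cW_{M+1}}\inf_{P\in\cP^*}\|(I-P)\Sigma^{-1}\mu\|>0$ and $\inf_{\cW_{M+1}}\lambda_{\min}(\Sigma)>0$ of Lemma~\ref{l:lyapunov}, the map $\vartheta\mapsto H(x,\vartheta)$ is $\cC^1$ on $\TsetR$, with gradient bounded on the compact set $\cW_{M+1}$ uniformly in $x\in\Xset$; combined with the usual dichotomy on $\|\theta-\theta'\|$ (for $\|\theta-\theta'\|$ small, line segments issued from $\cW_M$ stay in $\cW_{M+1}$, as in \eqref{e:segmentSecurity}; for $\|\theta-\theta'\|$ bounded away from $0$, use $\sup_{\vartheta\in\cW_{M+1},\,x\in\Xset}\|H(x,\vartheta)\|<\infty$), this gives $\sup_{x\in\Xset}\|H(x,\theta)-H(x,\theta')\|\leq C\|\theta-\theta'\|$. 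Decomposing
\begin{equation*}
\bar H_\theta-\bar H_{\theta'}=\big(H(\cdot,\theta)-H(\cdot,\theta')\big)-\pi_\theta\big(H(\cdot,\theta)-H(\cdot,\theta')\big)-(\pi_\theta-\pi_{\theta'})\big(H(\cdot,\theta')\big),
\end{equation*}
the first two terms are $O(\|\theta-\theta'\|)$ by the preceding Lipschitz bound, while the last one is at most $\|H(\cdot,\theta')\|_\infty\,\|\pi_\theta-\pi_{\theta'}\|_\TV\leq C\|\theta-\theta'\|^{1-2\kappa}$ by Lemma~\ref{l:Regularity:theta:pi}. Hence $\|\bar H_\theta-\bar H_{\theta'}\|_\infty\leq C\|\theta-\theta'\|^{1-2\kappa}$.

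Next I would handle $D$. Subtracting the Poisson equations at $\theta$ and $\theta'$ and using $(\mathrm{Id}-P_\theta)\hat H_{\theta'}=\bar H_{\theta'}+(P_{\theta'}-P_\theta)\hat H_{\theta'}$, the function $E\eqdef(\mathrm{Id}-P_\theta)D$ satisfies
\begin{equation*}
E=(\bar H_\theta-\bar H_{\theta'})+(P_\theta-P_{\theta'})\hat H_{\theta'},
\end{equation*}
so $\|E\|_\infty\leq C\|\theta-\theta'\|^{1-2\kappa}$ by the previous step and by Lemma~\ref{l:Regularity:theta:kernel} combined with $\sup_{\cW_{M+1}}\|\hat H_\vartheta\|_\infty<\infty$. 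Applying $\pi_\theta$ to $(\mathrm{Id}-P_\theta)D=E$ and using $\pi_\theta P_\theta=\pi_\theta$ shows $\pi_\theta E=0$. Iterating $D=E+P_\theta D$ yields, for every $n$, $D=\sum_{k=0}^{n-1}P_\theta^k E+P_\theta^n D$; by the uniform geometric ergodicity of Item~\ref{l:poisson:bis:item1} one has $|P_\theta^k E(x)|\leq 2\|E\|_\infty(1-\rho)^k$ --- here $\pi_\theta E=0$ is crucial --- and $P_\theta^n D(x)\to\pi_\theta D$ as $n\to\infty$ because $D$ is bounded, so letting $n\to\infty$ gives $D=\sum_{k\geq 0}P_\theta^k E+\pi_\theta D$ and thus $\|D\|_\infty\leq 2\rho^{-1}\|E\|_\infty+|\pi_\theta D|$. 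Finally, using $\pi_\theta\hat H_\theta=\pi_{\theta'}\hat H_{\theta'}=0$, we get $|\pi_\theta D|=|(\pi_{\theta'}-\pi_\theta)\hat H_{\theta'}|\leq\|\hat H_{\theta'}\|_\infty\,\|\pi_\theta-\pi_{\theta'}\|_\TV\leq C\|\theta-\theta'\|^{1-2\kappa}$ by Lemma~\ref{l:Regularity:theta:pi}, whence $\|D\|_\infty\leq C\|\theta-\theta'\|^{1-2\kappa}$.

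Substituting these two estimates into the identity of the first paragraph concludes the proof. The only genuinely non-routine ingredient is the H\"older-$(1-2\kappa)$ control of $\theta\mapsto\pi_\theta$ in total variation (Lemma~\ref{l:Regularity:theta:pi}, itself resting on the Voronoi-cell estimate of Lemma~\ref{l:voronoiControl}): this is exactly what prevents a Lipschitz bound and fixes the exponent $1-2\kappa$, and it is already available. Everything else --- the boundedness and $\theta$-Lipschitz regularity of $H$ on the compact level sets (which rely on the penalty denominators staying bounded away from zero there, Lemma~\ref{l:lyapunov}), the perturbed-Poisson bookkeeping, and the uniform minorization of the kernels --- is elementary.
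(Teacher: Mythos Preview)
Your proof is correct. The paper takes a shorter but less self-contained route: it simply invokes \cite[Lemma~5.5]{FoMoPr12}, which packages the bound
\[
\|P_\theta\hat H_\theta-P_{\theta'}\hat H_{\theta'}\|_\infty\leq C\|H(\cdot,\theta)-H(\cdot,\theta')\|_\infty+C\sup_{\cW_M}\|H(\cdot,\vartheta)\|_\infty\big(\|\pi_\theta-\pi_{\theta'}\|_\TV+\sup_x\|P_\theta(x,\cdot)-P_{\theta'}(x,\cdot)\|_\TV\big)
\]
as a black box, and then plugs in the Lipschitz bound on $H$ and Lemmas~\ref{l:Regularity:theta:pi} and~\ref{l:Regularity:theta:kernel}. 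You instead rederive this type of estimate from scratch via the perturbed-Poisson identity $(\mathrm{Id}-P_\theta)D=E$ together with uniform ergodicity, and you additionally control the constant shift $\pi_\theta D$ using $\pi_\vartheta\hat H_\vartheta=0$. The inputs are identical (regularity of $H$, of $\pi_\theta$, of $P_\theta$, and the uniform minorization), so the two arguments are equivalent in content; yours is longer but fully self-contained, while the paper's buys brevity at the cost of an external citation.
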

\begin{pf}
We recall the following result, proved in \cite{FoMoPr12}, Lemma~5.5, page
24: there exists $C>0$ such that for any $\theta\in\cW_M$,
$\theta'\in\cW_{M+1}$, and $x\in\Xset$,
%
\begin{eqnarray}\label{e:decFromFoMoPr12}
&&\| P_{\param} \hat{H}_{\param} - P_{\param'}
\hat{H}_{\param'}\|_{\infty} \nonumber
\\
&&\quad\leq  C\bigl\| H(\cdot,\param)-H \bigl(
\cdot,\param' \bigr)\bigr\|_{\infty}
\\
&&\quad\quad{}+ C \sup
_{\param\in\cW_M }\bigl\| H(\cdot,\theta)\bigr\|_{\infty} \Bigl\{ \|
{\pi}_{\param}-{\pi}_{\param'}\|_{\TV}
  +\sup_{x\in\Xset} \bigl\| P_\theta(x,\cdot)
-P_{\theta'}(x,\cdot) \bigr\|_{\TV} \Bigr\} .\nonumber
\end{eqnarray}
Here, $\sup_{\param\in\cW_M }\|
H(\cdot,\theta)\|_{\infty}$ is finite by Lemma~\ref{l:lyapunov}.
Now, by Lemma~\ref{l:lyapunov} again, there
exists $C>0$ such that for any $\theta\in\cW_M$ and
$\theta'\in\cW_{M+1}$,
\[
\bigl\| H(\cdot,\param)-H \bigl(\cdot,\param' \bigr)
\bigr\|_\infty \leq  C \bigl\|\param-\param'\bigr\|.
\]
The upper bounds for the two last terms in the RHS of \eqref
{e:decFromFoMoPr12} result from Lemmas \ref{l:Regularity:theta:pi} and
\ref{l:Regularity:theta:kernel}, respectively.
\end{pf}

%


\subsection{Proof of Theorem \texorpdfstring{\protect\ref{th:thetaConvergence}}{3.2}}


We start by proving two lemmas.


\begin{lemm}
\label{l:controle:fluctuations}
Let $(\gamma_t)_{t>0}$ be a sequence such that $\sum_t \gamma_t^2 <
\infty$,
$\sum_t |\gamma_{t+1} - \gamma_t |< \infty$, and $\sum_t
\gamma_t^{2(1-\kappa)} < \infty$ for some $\kappa\in(0,1/2)$. Denote by
$\psi_t$ the value of the projection counter at the end of iteration
$t$, in
Algorithm \ref{figPseudocodeStableAMOR}. Let $(\theta_t, X_t)_{t\geq
0}$ be the
sequence generated by Algorithm \ref{figPseudocodeStableAMOR}. Under
Assumptions \ref{a:target} and \ref{a:limitSet}, for any $M>0$,
%
\begin{equation}
\label{eq:BIS} \lim_{L \to+\infty} \sup_{ \ell\geq1}
\Biggl\llVert \Biggl( \prod_{k=L}^{L+\ell}
\mathbh{1}_{\param_k \in\cW_M} \mathbh{1}_{\psi_{k+1} = \psi_k} \Biggr)\sum
_{k=L}^{L +\ell}\gamma_{k+1}
\bigl(H(X_{k+1}, \param_{k}) - h(\param_k)
\bigr) \Biggr\rrVert= 0\quad\quad \mbox{w.p.1} ,
\end{equation}
where $H$, $h$, $\tw$ and $\cW_M$ are given by \eqref{eq:def:H},
\eqref{def:h}, \eqref{e:defLyapunov} and \textup{\eqref{eq:def:levelset:lyap}},
respectively.
\end{lemm}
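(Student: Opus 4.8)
\emph{Proof sketch.} The plan is to run the Poisson‑equation argument of Lemma~\ref{l:poisson:bis} together with the Hölder‑type regularity of $\theta\mapsto P_\theta\hat H_\theta$ from Lemma~\ref{l:poissonRegularity}, and to deal with the indicators by observing that on the only event that contributes to \eqref{eq:BIS} all the relevant design points lie in a \emph{fixed} compact set. First I would note that, directly from \eqref{eq:def:H}--\eqref{def:labeltheta} and the fact that $\pi_\theta$ is a probability density (Lemma~\ref{l:lemmaFromAistats}), $h(\theta)=\int H(x,\theta)\,\pi_\theta(x)\,dx$ for every $\theta\in\TsetR$. Writing $\hat H_\theta$ for the Poisson solution of Lemma~\ref{l:poisson:bis} and $g_k\eqdef P_{\theta_k}\hat H_{\theta_k}(X_k)$, and using $\mathbb E[\hat H_{\theta_k}(X_{k+1})\mid\cF_k]=g_k$ (which holds by Proposition~\ref{prop:poisson}(3), since $X_{k+1}\sim P_{\theta_k}(X_k,\cdot)$), I would split
\begin{equation*}
  H(X_{k+1},\theta_k)-h(\theta_k)=\xi_{k+1}+(g_k-g_{k+1})+r_{k+1}\eqsp,
\end{equation*}
with $\xi_{k+1}\eqdef\hat H_{\theta_k}(X_{k+1})-g_k$ a martingale increment for $(\cF_k)$, and $r_{k+1}\eqdef P_{\theta_{k+1}}\hat H_{\theta_{k+1}}(X_{k+1})-P_{\theta_k}\hat H_{\theta_k}(X_{k+1})$.

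Next I would reduce the indicator. Denote by $\IND_\star$ the product $\prod_{k=L}^{L+\ell}\IND_{\theta_k\in\cW_M}\IND_{\psi_{k+1}=\psi_k}$ in \eqref{eq:BIS}, which is bounded by $1$. On $\{\IND_\star=1\}$ each $\theta_k$ ($L\le k\le L+\ell$) lies in $\cW_M$ and no re‑projection occurs between steps $L+1$ and $L+\ell+1$, so there $\theta_{k+1}=\theta_k+\gamma_{k+1}H(X_{k+1},\theta_k)$; since $H$ is bounded on $\Xset\times\cW_M$ (Lemma~\ref{l:lyapunov}) and $\cW_M$ is a compact subset of the interior of $\cW_{M+1}$ with a positive buffer (as in \eqref{e:segmentSecurity}), there is a \emph{deterministic} $L_0$ such that for $L\ge L_0$, on that event, $\theta_L,\dots,\theta_{L+\ell+1}$ all belong to $\cW_{M+1}$ and $\IND_{\theta_k\in\cW_{M+1}}=1$ for $L\le k\le L+\ell$. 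Using the displayed split, it is then enough to show that each of
\[
  \Big\|\textstyle\sum_{k=L}^{L+\ell}\gamma_{k+1}\xi_{k+1}\IND_{\theta_k\in\cW_{M+1}}\Big\|,\quad \IND_\star\Big\|\textstyle\sum_{k=L}^{L+\ell}\gamma_{k+1}(g_k-g_{k+1})\Big\|,\quad \IND_\star\Big\|\textstyle\sum_{k=L}^{L+\ell}\gamma_{k+1}r_{k+1}\Big\|
\]
tends to $0$, $\sup_\ell$ then $\lim_L$, almost surely. For the first, $\cN_n\eqdef\sum_{k=0}^{n-1}\gamma_{k+1}\xi_{k+1}\IND_{\theta_k\in\cW_{M+1}}$ is an $(\cF_n)$‑martingale whose increments are bounded by $2\gamma_{k+1}\sup_{\cW_{M+1}}\|\hat H\|_\infty<\infty$ (\eqref{e:poissonBounded}), so $\sum_n\mathbb E\|\cN_{n+1}-\cN_n\|^2<\infty$ because $\sum_n\gamma_n^2<\infty$, hence $\cN_n$ converges a.s.; the first quantity equals $\|\cN_{L+\ell+1}-\cN_L\|\le 2\sup_{n\ge L}\|\cN_n-\cN_\infty\|\to0$ a.s. For the second, Abel summation gives $\sum_{k=L}^{L+\ell}\gamma_{k+1}(g_k-g_{k+1})=\gamma_{L+1}g_L-\gamma_{L+\ell+1}g_{L+\ell+1}+\sum_{k=L+1}^{L+\ell}(\gamma_{k+1}-\gamma_k)g_k$, and on $\{\IND_\star=1\}$ all $g_k$ ($L\le k\le L+\ell+1$) are bounded by $\sup_{\cW_{M+1}}\|\hat H\|_\infty$, so the quantity is $O\big(\sup_{k\ge L}\gamma_k+\sum_{k>L}|\gamma_{k+1}-\gamma_k|\big)\to0$ using $\gamma_n\to0$ and $\sum_n|\gamma_{n+1}-\gamma_n|<\infty$. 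For the third, on $\{\IND_\star=1\}$ Lemma~\ref{l:poissonRegularity} applies ($\theta_k\in\cW_M$, $\theta_{k+1}\in\cW_{M+1}$) and gives $\|r_{k+1}\|\le C\|\theta_{k+1}-\theta_k\|^{1-2\kappa}\le C'\gamma_{k+1}^{1-2\kappa}$, whence the quantity is $\le C'\sum_{k\ge L}\gamma_{k+1}^{2(1-\kappa)}\to0$ by $\sum_n\gamma_n^{2(1-\kappa)}<\infty$. Summing the three bounds yields \eqref{eq:BIS}.

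The delicate point — which is where the three summability hypotheses and the enlargement from $\cW_M$ to $\cW_{M+1}$ come in — is that $\{\psi_{k+1}=\psi_k\}$ is \emph{not} $\cF_k$‑measurable (whether a re‑projection occurs at step $k+1$ depends on $X_{k+1}$), so this indicator cannot be inserted into a martingale, and an $L$‑dependent stopping time would also fail to be adapted in the naive formulation. The trick above sidesteps this: the projection indicator is forced to $1$ on the only event contributing to \eqref{eq:BIS}, which lets one replace it by the $\cF_k$‑measurable truncation $\IND_{\theta_k\in\cW_{M+1}}$ and hence work with a \emph{single}, $L$‑independent, $L^2$‑bounded martingale $\cN_n$ whose a.s.\ convergence immediately gives the uniform‑in‑$\ell$, $\lim_L=0$ control; everything else is the standard combination of the Poisson equation, an Abel summation, and the regularity of $P_\theta\hat H_\theta$ in $\theta$. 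One must only remember that all the quantitative estimates ($\|H\|_\infty$, $\|\hat H\|_\infty$, the Lipschitz constant of $P_\theta\hat H_\theta$) are available only on compact subsets of $\TsetR$, which is precisely why one must absorb the $O(\gamma_{k+1})$ one‑step drift by passing from $\cW_M$ to $\cW_{M+1}$.
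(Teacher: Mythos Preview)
Your proof is correct and follows essentially the same Poisson--equation decomposition as the paper (martingale increment $+$ telescoping term handled by Abel summation $+$ remainder controlled via Lemma~\ref{l:poissonRegularity}). Your treatment of the martingale piece is in fact slightly more careful than the paper's: the paper multiplies $M_{i+1}$ by the indicator $\I_{i,0}=\IND_{\theta_i\in\cW_M}\IND_{\psi_{i+1}=\psi_i}$ and asserts that $\{\I_{i,0}M_{i+1}\}$ is a martingale increment, but as you observe, $\IND_{\psi_{i+1}=\psi_i}$ is not $\cF_i$--measurable; your replacement by the $\cF_k$--measurable truncation $\IND_{\theta_k\in\cW_{M+1}}$ (valid on the event $\{\IND_\star=1\}$ for $L$ large, via the one--step containment \eqref{e:segmentSecurity}) is the clean way to make this step rigorous while keeping the rest of the argument unchanged.
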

\begin{pf}
The proof is adapted from Theorem~2.7 in \cite{FoMoPr12}, and it is
thus omitted. It can be found in the supplemental article \cite{6}.
\end{pf}

\begin{lemm}
\label{l:stabilityWhenRecurrence}
Let $M \in(0,M_\star)$ and set
\[
\Gamma_{M_\star}^M = \bigl\{\theta\in\TsetR\dvtx  M_\star
\leq w(\theta)\leq M\bigr\} ,\quad\quad \iota= \inf_{\param\in\Gamma_{M_\star}^M} \bigl| \bigl
\langle\nabla w(\theta),h(\theta)\bigr\rangle\bigr|.\vadjust{\goodbreak}
\]
Under Assumptions
\ref{a:target} and \ref{a:limitSet}, there exist $\delta\in( 0, \iota)$ and
$\lambda,\beta>0 $ such that
\begin{enumerate}[(B)]
\item[(A)]   $ u\in\cW_{M_\star
}, 0\leq\gamma\leq\lambda,\|\xi\|\leq\beta
\Rightarrow w(u+\gamma\th(u)+\gamma\xi) \leq M$, and
\item[(B)]   $ u\in\Gamma
_{M_\star}^M,
0\leq\gamma\leq\lambda,\|\xi\|\leq\beta\Rightarrow w(u+\gamma
\th(u)+\gamma\xi) < \tw(u) - \gamma\delta$.
\end{enumerate}
\end{lemm}

\begin{pf} The proof is adapted from Lemma~2.1 in \cite{AnMoPr05},
and it is thus omitted. It can be found in the supplemental article \cite{6}.
\end{pf}

{\em Proof of item (1) in Theorem~\ref
{th:thetaConvergence}.} Let
$M>M_\star$, let $q$ (depending on $M$) be such that (see Remark~\ref{r:VtoK})
%
\begin{equation}
\label{eq:p:stability:tool1} \cW_M \subset\cW_{M+2} \subseteq
\TsetC_{\delta_q} ,
\end{equation}
and let $\param_0 \in\cW_M$. Let $\lambda,\beta$ be given by
Lemma~\ref{l:stabilityWhenRecurrence}. By Lemma~\ref{l:lyapunov}, $w$
and $h$
are uniformly continuous on $\cW_{M+1}$, and there exists $\eta>0$ such that
%
\begin{equation}
\label{e:etaprime} x \in\cW_{M}, \quad\quad\| x-y\|<\eta\Rightarrow
\bigl| w(x)- w(y)\bigr|< 1 \quad\mbox{and}\quad \bigl\| h(x)-h(y)\bigr\|< \beta.
\end{equation}
By Lemma~\ref{l:controle:fluctuations}, there exists an almost surely
finite r.v. $N$ such that w.p.1.,
%
\begin{equation}\label{e:Ncond1}
  n\geq N \Rightarrow\gamma_n \Bigl(1 + \sup_{x \in\Xset, \param\in
\cW_M}
\bigl\|H(x,\param)\bigr\| \Bigr) < \lambda\wedge\eta
\end{equation}
 and
\begin{equation}\label{e:eta}
\sup_{\ell\geq1} \Biggl( \prod_{i=N}^{N+\ell}
\mathbh{1}_{\param_i\in
\cW_{M+1} } \mathbh{1}_{\psi_{i+1} = \psi_i} \Biggr) \Biggl\llVert\sum
_{i =N}^{N +\ell} \gamma_{i+1} \bigl(
H(X_{i+1},\param_i) - h(\param_i) \bigr)
\Biggr\rrVert< \eta.
\end{equation}

The proof is by contradiction. Denote by $\psi_t $ the number of
projections at
the end of iteration~$t$. We assume that $\PP(\lim_t \psi_t = + \infty)>0$.
We can assume without loss of generality that
\[
w(\theta_N)\leq M ,\quad\quad \psi_N \geq q
\]
on the set $\{\lim_t \psi_t = +\infty\}$. Define the sequence
$(\theta'_{N+k})_{k\geq0}$ as
\[
\theta'_N=\theta_N \quad\mbox{and}\quad
\theta'_{N+k+1} = \theta'_{N+k} +
\gamma_{N+k+1}\th(\theta_{N+k}) .
\]
We prove by induction on $k$ that for
any $k\geq0$, on the set $\{\lim_t \psi_t = +\infty\}$,
\[
\theta_{N+k}'\in\cW_{M},\quad\quad
\theta_{N+k}\in\cW_{M+1} ,\quad\quad \bigl\|\theta'_{N+k}
- \theta_{N+k}\bigr\|< \eta,\quad\quad \psi_{N+k+1} = \psi_{N+k}
.
\]
The case $k=0$ is trivial since
$\theta'_N=\theta_N\in\cW_M$ and by using \eqref{e:etaprime}, \eqref
{e:Ncond1} and
\eqref{eq:p:stability:tool1} on the set $\{\lim_t \psi_t = + \infty\}
$. Assume
this property holds for $k\in\{0,1,\ldots,\ell\}$. Then we have
\[
\theta'_{N+\ell+1} = \theta'_{N+\ell} +
\gamma_{N+\ell+1}\th\bigl(\theta'_{N+\ell}\bigr) +
\gamma_{N+\ell+1} \bigl(\th(\theta_{N+\ell}) - \th\bigl(
\theta'_{N+\ell}\bigr) \bigr).
\]
Since
$\|\theta'_{N+\ell} - \theta_{N+\ell}\|< \eta$ and $ \theta
'_{N+\ell}$
is in $ \cW_{M}$, we have $\|\th(\theta'_{N+\ell}) -
\th(\theta_{N+\ell})\|< \beta$. Since $ \gamma_{N+\ell+1} < \lambda
$ by
\eqref{e:Ncond1}, we can apply Lemma~\ref{l:stabilityWhenRecurrence} to obtain
$\theta'_{N+\ell+1}\in\cW_{M}$. In addition,
\begin{eqnarray*}
\theta'_{N+\ell+1} - \theta_{N+\ell+1} & =& \sum
_{i=N}^{N+\ell} \gamma_{i+1}
\bigl(H(X_{i+1}, \param_i) - h(\param_i)
\bigr) \mathbh{1}_{\psi_{i+1} = \psi_i}
\\
&&{}+ \sum_{i=N}^{N+\ell}
\bigl(\gamma_{i+1} h(\param_i) + \param_i -
\param_0 \bigr) \mathbh{1}_{\psi_{i+1} \neq\psi_i}
\\
& = &\Biggl( \prod_{i=N}^{N +
\ell}
\mathbh{1}_{\param_i \in\cW_{M+1}} \Biggr)\sum_{i=N}^{N+\ell}
\gamma_{i+1} \bigl(H(X_{i+1}, \param_i) - h(
\param_i) \bigr) \mathbh{1}_{\psi_{i+1} = \psi_i} ,
\end{eqnarray*}
where we used the induction assumption in the last equality. From
\eqref{e:etaprime} and \eqref{e:eta}, this yields $\|\theta'_{N+\ell
+1} -
\theta_{N+\ell+1}\|<\eta$ and $\tw(\theta_{N+\ell+1})\leq M+1$.
Finally by
\eqref{e:etaprime}, equations \eqref{e:Ncond1} and \eqref{eq:p:stability:tool1}
imply that on the set $\{\lim_t \psi_t = + \infty\}$
\[
\param_{N+\ell} + \gamma_{N+\ell+1} H(X_{N+\ell+1},
\param_{N+\ell}) \in\cW_{M+2} \subset\TsetC_{\psi_{N+\ell}} ,
\]
that is, $\psi_{N+\ell+1} = \psi_{N+\ell}$. This concludes the induction.

As a consequence of this induction, we have $\psi_{N+\ell} = \psi_N$
for any $\ell
\geq0$ on the set $\{\lim_t \psi_t = + \infty\}$ which is a contradiction.

{\em Proof of item} (2) {\em in Theorem}~\ref{th:thetaConvergence}.
The proof is along the same lines as the proof of Theorem~2.3 of \cite{AnMoPr05}, page
5, 
and is thus omitted.

\subsection{Proof of Theorem \texorpdfstring{\protect\ref{th:xConvergence}}{3.3}}
\label{sec:thcvg}
The proof consists in checking the conditions of \cite{FoMoPr12}, Corollary~2.8. Let $f$ be a measurable bounded function.

By Lemma~\ref{l:poisson:bis}, (i) there exists a measurable function
$\hat{f}_\param$ such that $\hat{f}_\param- P_\param\hat{f}_\param=
f -
\pi_\param f$; and (ii) for any compact set $\cW_M$, there
exists $L$
(depending upon $M$) such that
\[
\forall\param\in\cW_M, x \in\Xset,\quad\quad \bigl|\hat{f}_\param(x)\bigr|
\leq L .
\]
By Theorem~\ref{th:thetaConvergence}, $\PP(\Omega_M) \uparrow1$ when
$M$ tends
to infinity where
%
\begin{equation}
\label{eq:def:OmegaM} \Omega_M = \bigcap_{t \geq0}
\{\param_t \in\cW_M \} .
\end{equation}
Therefore, in order to apply \cite{FoMoPr12}, Corollary~2.8, we only
have to
prove that almost surely,
%
\begin{eqnarray}
 \sum_k k^{-1} \sup
_{x \in\Xset} \bigl\|P_{\param_k}(x,\cdot) -P_{\param_{k-1}}(x,\cdot)
\bigr\|_{\TV} \mathbh{1}_{\Omega_M} &<& \infty, \label{eq:th:xcvg:tool1}
\\
 \lim_t \pi_{\param_t}(f) \mathbh{1}_{\Omega_M}
&= &\pi_{\param^\star}(f) \mathbh{1}_{\Omega_M}. \label{eq:th:xcvg:tool2}
\end{eqnarray}
By Lemma~\ref{l:Regularity:theta:kernel}, there exists $C$ and $\kappa
\in
(0,1/2)$ such that
\[
\sup_{x \in\Xset} \bigl\|P_{\param_k}(x,\cdot) -P_{\param_{k-1}}(x,
\cdot) \bigr\|_{\TV} \mathbh{1}_{\Omega_M} \leq C \|
\param_k-\param_{k-1}\|^{1-2\kappa} .
\]
In addition, by Theorem~\ref{th:thetaConvergence}, there exists a random
variable $K$, almost surely finite, such that for any $k \geq K$,
\[
\|\param_k-\param_{k-1}\| \mathbh{1}_{\Omega_M} \leq
\gamma_k \sup_{\param\in\cW_M,
x \in\Xset} \bigl|H(x,\param)\bigr| .
\]
This yields
\[
\sum_{k \geq K} k^{-1} \sup
_{x \in\Xset} \bigl\|P_{\param_k}(x,\cdot) -P_{\param_{k-1}}(x,\cdot)
\bigr\|_{\TV} \mathbh{1}_{\Omega_M} \leq C \sum
_{k \geq K} k^{-1} \gamma_k^{1-2\kappa} ,
\]
for some constant $C>0$. This concludes the proof of \eqref{eq:th:xcvg:tool1}.
The limit \eqref{eq:th:xcvg:tool2} is a consequence of
Lemma~\ref{l:Regularity:theta:pi}.

\begin{remark*} Note that in the proof above we use that the number of
random truncations is finite almost surely (when claiming that $\lim_M
\PP(\Omega_M) \uparrow1$) but only use the convergence of the sequence
$(\theta_t)_{t \geq0}$ in order to establish (\ref{eq:th:xcvg:tool2}).
When $f$ is such that $\pi_\param(f) = \pi(f)$ for any $\param\in
\TsetR$ (for example when $f$ is symmetric with respect to
permutations), then (\ref{eq:th:xcvg:tool2}) holds even if $(\param
_t)_{t \geq0}$ does not converge.
\end{remark*}

\subsection{Proof of Theorem \texorpdfstring{\protect\ref{th:ergodicity}}{3.4}}
\label{sec:thergo}
Let $f$ be a measurable function such that $\| f\|_{\infty}\leq
1$ and
set
\[
I_t(f) = \bigl| \mathbb{E}\bigl[f(X_t)
\mathbh{1}_{B}\bigr]-\pi_{\theta^\star}(f)\mathbb{P}(B) \bigr| = \bigl|
\mathbb{E}\bigl[ \bigl( f(X_t) - \pi_{\theta^\star}(f) \bigr)
\mathbh{1}_{B}\bigr] \bigr|,
\]
where $ B = \{ \lim_q \param_q = \param_\star\}$. Let $\eps>0$. We prove
that there exists $T_{\eps}$ such that for all $t\geq T_{\eps}$, $\sup_{\{f:
\|f\|_\infty\leq1 \}} I_t(f)\leq4\eps$. Choose $\kappa\in(0,1/2)$ and
$\delta>0$ such that
%
\begin{equation}
C_{M_\star+1}\delta^{1-2\kappa}\leq\eps, \label{e:ergoTool1}
\end{equation}
where $M_\star$ and $C_{M_\star}$ are defined in Assumption~\ref{a:limitSet}
and in Lemma~\ref{l:Regularity:theta:pi}, respectively. Choose $r_\eps$
such that
%
\begin{equation}
2(1-\rho_{M_\star+1})^{r_\eps}\leq\eps, \label{e:ergoTool2}
\end{equation}
where $\rho_{M_\star+1}$ is defined in Lemma~\ref{l:poisson:bis}. By uniform
continuity of $w$ on $\cW_{M_\star+2}$, assume finally $\delta$ is small
enough that
%
\begin{equation}
\theta\in\cW_{M_\star+1},\theta'\in\Theta,\quad\quad \bigl\|\theta-
\theta'\bigr\|\leq\delta\Rightarrow\bigl| w(\theta)-w \bigl(
\theta' \bigr)\bigr|\leq\frac{1}{r_\eps+1}. \label{e:ergoTool3}
\end{equation}
There exists $T_\varepsilon^1$ such that for any $t \geq T_\varepsilon^1$,
\[
\PP \Bigl( \bigl\|\param_{t-r_\varepsilon} - \param^\star\bigr\| \leq\delta, \lim
_q \param_q = \param^\star \Bigr) \leq
\varepsilon/2 .
\]
Hence, for any $t \geq T_\varepsilon^1$, $I_t(f)\leq\sum_{i=1}^3
I_t^i(f) +
\varepsilon$, where
%
\begin{eqnarray}
I_t^1(f) &=& \bigl|\mathbb{E} \bigl[ \bigl(
f(X_t) - P_{\theta_{t - r_\eps}}^{r_\eps} f(X_{t-r_\eps})
\bigr)\mathbh{1}_{\|\param_{t-r_\varepsilon} - \param^\star\| \leq
\delta} \bigr] \bigr|,\label{e:defIt1}
\\
I_t^2(f) &=& \bigl|\mathbb{E} \bigl[ \bigl(
P_{\theta_{t-r_\eps}}^{r_\eps} f(X_{t-r_\eps}) - \pi_{\theta_{t-r_{\eps}}}(f)
\bigr)\mathbh{1}_{\|\param_{t-r_\varepsilon} - \param^\star\| \leq
\delta} \bigr] \bigr|,\label{e:defIt2}
\\
I_t^3(f) &=& \bigl|\mathbb{E} \bigl[ \bigl(
\pi_{\theta_{t-r_{\eps}}}(f) - \pi_{\theta^\star}(f) \bigr)\mathbh {1}_{\|\param_{t-r_\varepsilon} - \param^\star\| \leq\delta}
\bigr] \bigr|. \label{e:defIt3}
\end{eqnarray}
We first upper bound $I_t^1(f)$. For $\theta,\theta'\in\Theta$, let
\[
D\bigl(\theta,\theta'\bigr) = \sup_{x\in\Xset} \bigl\|
P_\theta(x,\cdot) - P_{\theta'}(x,\cdot) \bigr\|_{\TV}.
\]
Applying \cite{AtFoMoPr11}, Proposition~1.3.1, it comes for any $t \geq
T_\varepsilon^1$,
\begin{eqnarray*}
I_t^1 &\leq& \mathbb{E} \Biggl[ 2\wedge\sum
_{j=1}^{r_\eps-1} D(\theta_{t-r_\eps+j},
\theta_{t-r_\eps}) \mathbh{1}_{\|\param_{t-r_\varepsilon} - \param
^\star\| \leq\delta} \Biggr]
\\
&\leq& \mathbb{E} \Biggl[ 2\wedge\sum_{j=1}^{r_\eps-1}
(r_\eps-j) D(\theta_{t-r_\eps+j}, \theta_{t-r_\eps+j-1})
\mathbh{1}_{\|\param_{t-r_\varepsilon} - \param^\star\| \leq\delta} \Biggr] ,
\end{eqnarray*}
where we used that for any $q,\ell>0$ $ D(\theta_{q+\ell},\theta_q) \leq
\sum_{j=1}^\ell D(\theta_{q+j},\theta_{q+j-1})$.
By Theorem \ref{th:thetaConvergence}, the random iteration number $\tau
_{\psi}$ where the last
projection occurs in Algorithm \ref{figPseudocodeStableAMOR} is
finite with probability one. Let then $M_\eps$ be such that
$2\mathbb{P}(\tau_\psi\geq M_{\eps})\leq\eps/2$, so that
\[
I_t^1(f) \leq\mathbb{E} \Biggl[ 2\wedge\sum
_{j=1}^{r_\eps-1} (r_\eps-j) D(
\theta_{t-r_\eps+j}, \theta_{t-r_\eps+j-1})\mathbh{1}_{\|\param
_{t-r_\varepsilon} -
\param^\star\| \leq\delta}
\mathbh{1}_{\tau_\psi\leq M_\eps} \Biggr] + \frac{\eps}{2}.
\]
Let now $T_\eps^2\geq T_\eps^1 \vee(M_\eps+r_\eps)$ be such that
\[
t\geq T_\eps^2 \Rightarrow\gamma_t \sup
_{x\in\Xset,
\theta\in\cW_{M_\star+2}}\bigl\| H(x,\theta)\bigr\|\leq \delta.
\]
Then, by recurrence and using \eqref{e:ergoTool3}, we obtain that on
$\{\|\theta_{t-r_\eps}-\theta_\star\|\leq\delta\}$,
$\theta_{t-r_\eps+j}\in\cW_{M_\star+1}$ for all $0\leq j\leq r_\eps$. By
Lemma~\ref{l:Regularity:theta:kernel}, this yields for any $t \geq
T_\varepsilon^2$
\[
I_t^1(f) \leq C_{M_\star+1} \Bigl[\sup
_{x\in\Xset, \theta\in\cW_{M_\star
+2}}\bigl\| H(x,\theta)\bigr\|\Bigr]^{1-2\kappa} \sum
_{j=1}^{r_\eps-1} (r_\eps-j)
\gamma_{t-r_\eps+j}^{1-2\kappa} + \frac{\eps}{2},
\]
and there exists $T_\eps^3\geq T_\eps^2$ such that $t\geq
T_\eps^3\Rightarrow\sup_{\{f: \|f\|_\infty\leq1 \}} I_t^1(f)\leq\eps$.

We now consider $I_t^2(f)$; it holds
\[
I_t^2 \leq \mathbb{E} \bigl[ \bigl\llVert
P_{\theta_t-r_\eps}^{r_\varepsilon} (X_{t-r_\eps},\cdot) - \pi_{\theta
_{t-r_\eps}}
\bigr\rrVert_{\TV} \mathbh{1}_{\|\theta_{t-r_\eps}-\theta^\star\|\leq\delta} \bigr].
\]
By \eqref{e:ergoTool3}, $\|
\theta_{t-r_\eps}-\theta^\star\|\leq\delta\Rightarrow
\theta_{t-r_\eps}\in\cW_{M_\star+1}$ and thus, applying Lemma~\ref
{l:poisson:bis} and \eqref{e:ergoTool2}
\[
\sup_{\{f: \|f\|_\infty\leq1 \}} I_t^2(f) \leq 2(1-
\rho_{M_\star+1})^{r_\eps} \leq\varepsilon.
\]
The derivation of the upper bound of $I_t^3$ is similar to that of $I_t^2$,
with Lemma~\ref{l:poisson:bis} replaced by Lemma~\ref{l:Regularity:theta:pi}
and uses (\ref{e:ergoTool1}). Details are omitted.

\begin{remark*}\label{r2}
The proof above can be easily adapted (details are omitted) to address the
case when (i) $(\theta_t)_{t \geq0}$ is stable but does not
necessarily converges, and (ii) the function $f$ is bounded and
satisfies $\pi_\param(f) = \pi(f)$ for any $\param\in\TsetR$. The main
ingredients for this extension are to replace $\mathbh{1}_B$ with the constant
function $\mathbh{1}$, and to replace the set $\{\|
\theta_{t-r_\eps}-\theta^\star\|\leq\delta\}$ with $\{\theta_{t -
r_\eps} \in\cW_{M_\star} \}$. Since the sequence is stable, $\lim_M
\PP(\Omega_M) \uparrow1$ where $\Omega_M$ is given by (\ref{eq:def:OmegaM}).
$M_\star$ is chosen so that $\mathbb{E} [ \llvert f(X_t) - \pi
(f)\rrvert
\mathbh{1}_{\Omega_{M_\star}} ] \leq\varepsilon$. We then obtain, for
such a
function $f$,
\[
\lim_{t \to\infty} \mathbb{E} \bigl[f(X_t) \bigr] =
\pi(f) .
\]
\end{remark*}

\end{appendix}

\section*{Acknowledgements}
This work was supported by the ANR-2010-COSI-002 grant of the French National
Research Agency.
\begin{supplement}
\stitle{Long version of the paper}
\slink[doi]{10.3150/13-BEJ578SUPP} 
\sdatatype{.pdf}
\sfilename{BEJ578\_supp.pdf}
\sdescription{This long version of the paper features an additional
evaluated method for Section~\ref{s:example} (AM with posterior
reordering), examples of the behavior of AMOR on a nonlinear
symmetrized unimodal distribution and on a genuinely bimodal distribution,
and complete proofs.}
\end{supplement}

%

\printhistory

\end{document}